\newcommand{\R}{\mathbb{R}}
\newcommand{\N}{\mathbb{N}}
\newcommand{\T}{\top}
\newcommand{\I}{\mathbf{I}}
\newcommand{\0}{\mathbf{0}}
\newcommand{\E}{\mathcal{E}}
\newcommand{\diag}{\text{diag}}
\newcommand{\tsup}[1]{\textsuperscript{#1}}
\newcommand{\mb}[1]{\mathbf{#1}}
\newcommand{\bm}[1]{\begin{bmatrix}#1\end{bmatrix}}
\def\BibTeX{{\rm B\kern-.05em{\sc i\kern-.025em b}\kern-.08em
    T\kern-.1667em\lower.7ex\hbox{E}\kern-.125emX}}
\newtheorem{assumption}{\textbf{Assumption}}
\newtheorem{theorem}{\textbf{Theorem}}
\newtheorem{corollary}{\textbf{Corollary}}
\newtheorem{lemma}{\textbf{Lemma}}
\newtheorem{proposition}{\textbf{Proposition}}
\newtheorem{remark}{\textbf{Remark}}
\newtheorem{definition}{\textbf{Definition}}
\begin{document}

\title{
{\LARGE \textbf{
{Dissipativity-Based Distributed Control and Communication Topology Co-Design for DC Microgrids with ZIP Loads}
}}}

% \title{{\LARGE \textbf{Distributed Non-convex Optimization of Multi-agent Systems Using Boosting Functions to Escape Local Optima}}

% \author{\IEEEauthorblockN{Mohammad Javad Najafirad}
% \IEEEauthorblockA{\textit{Dept. of Electrical Engineering} \\
% \textit{Stevens Institute of Technology}\\
% New Jersey, USA \\
% mnajafir@stevens.edu}
% \and
% \IEEEauthorblockN{Shirantha Welikala}
% \IEEEauthorblockA{\textit{Dept. of Electrical Engineering} \\
% \textit{Stevens Institute of Technology}\\
% New Jersey, USA \\
% swelikal@stevens.edu}
% }

\author{Mohammad Javad Najafirad and Shirantha Welikala 
% \vspace{-6mm} 
% \thanks{$^{\star}$Supported in part by.... } 
% \thanks{The support of the National Science Foundation (Grant No. CNS-1830335, IIS-2007949) is gratefully acknowledged.}
\thanks{The authors are with the Department of Electrical and Computer Engineering, School of Engineering and Science, Stevens Institute of Technology, Hoboken, NJ 07030, \texttt{{\small \{mnajafir,swelikal\}@stevens.edu}}.}}

\maketitle

%\pagenumbering{arabic}
%\thispagestyle{plain}
%\pagestyle{plain}

\begin{abstract}
This paper presents a novel dissipativity-based distributed droop-free control and communication topology co-design approach for voltage regulation and current sharing in DC microgrids (DC MGs) with generic ``ZIP'' (constant impedance (Z), current (I) and power (P)) loads. While ZIP loads accurately capture the varied nature of the consumer loads, its constant power load (CPL) component is particularly challenging (and destabilizing) due to its non-linear form. Moreover, ensuring simultaneous voltage regulation and current sharing and co-designing controllers and topology are also challenging when designing control solutions for DC MGs.  
To address these three challenges, we model the DC MG as a networked system comprised of distributed generators (DGs), ZIP loads, and lines interconnected according to a static interconnection matrix. Next, we equip each DG with a local controller and a distributed global controller (over an arbitrary topology) to derive the error dynamic model of the DC MG as a networked ``error'' system, including disturbance inputs and performance outputs. Subsequently, to co-design the controllers and the topology ensuring robust (dissipative) voltage regulation and current sharing performance, we use the dissipativity and sector boundedness properties of the involved subsystems and formulate Linear Matrix Inequality (LMI) problems to be solved locally and globally. To support the feasibility of the global LMI problem, we identify and embed several crucial necessary conditions in the corresponding local LMI problems, thus providing a one-shot approach (as opposed to iterative schemes) to solve the LMI problems. Overall, the proposed approach in this paper provides a unified framework for designing DC MGs. The effectiveness of the proposed solution was verified by simulating an islanded DC MG under different scenarios, demonstrating superior performance compared to traditional control approaches.
\end{abstract}

\noindent 
\textbf{Index Terms}—\textbf{DC Microgrid, ZIP Loads, Voltage Regulation, Current Sharing, Distributed Control, Topology Design, Networked Systems, Dissipativity-Based Control.}

\section{Introduction} 
% This paragraph talks about the definition of MG and why DC MG is our title
The microgrid (MG) concept has been introduced as a comprehensive framework for the cohesive coordination of distributed generators (DGs), variable loads, and energy storage units within a controllable electrical network to facilitate the efficient integration of renewable energy resources, such as wind turbines and photovoltaic systems \cite{liu2023resilient}. In particular, DC MGs have gained more attention in recent years due to the growing demand for DC loads, such as data centers, electric vehicles, LED lighting systems, and consumer electronic devices. In addition, DC MGs offer distinct advantages over AC systems by eliminating unnecessary conversion stages and removing frequency regulation \cite{dou2022distributed}. However, the faster dynamics of DC MGs, compared to AC systems, demand the meticulous design of rapid and robust control systems. 
% These characteristics enhance reliability, improve efficiency, and facilitate power system scalability \cite{shi2019distributed}.

% Challenge 1: Simulataneous voltage regulation and current sharing
% Define the DC MG objectives, introduce the typical main approaches to aim those objectives. 
The two primary control goals in DC MGs are voltage regulation and current sharing. To achieve these goals, centralized \cite{mehdi2020robust}, decentralized \cite{peyghami2019decentralized}, and distributed \cite{xing2019distributed} control solutions have been proposed in the literature. Although the centralized control approach provides controllability and observability, it suffers from a single point of failure \cite{guerrero2010hierarchical}. In decentralized control, only a local controller is required; hence, there is no communication among DGs \cite{khorsandi2014decentralized}, which hinders coordination affecting current sharing. However, this lack of coordination can be solved by developing a distributed control solution where the DGs can share their state variables with their communicating neighbors through a communication network \cite{dehkordi2016distributed}.

% some literature review on the droop control for DC MGs and at the end, try to convince that the droop control has limitations and drawbacks. And maybe it is better to convince the reader in this paragraph that we are motivated to use the droop-free control. 
It should be noted that the traditional and widely adopted droop control approach is a conventional decentralized control solution. However, due to line impedance mismatch and droop characteristics, droop control cannot simultaneously achieve voltage regulation and current sharing. Despite various innovations in hierarchical and distributed implementation strategies \cite{zhou2020distributed,nasirian2014distributed,najafirad1}, droop control fundamentally requires careful tuning of droop coefficients to balance these conflicting objectives effectively. Consequently, recent research has focused on developing ``droop-free'' distributed control algorithms that rely on communications among DGs \cite{dissanayake2019droop,zhang2022droop}, offering more flexibility in achieving voltage regulation and current sharing simultaneously.

% Challenge 2: ZIP loads
% This part may be allocated for the importance of the CPL
A particular challenge for developing such distributed control solutions for DC MGs is the presence of generic ``ZIP'' (constant impedance (Z), constant current (I) and constant power (P)) loads. In particular, constant power loads (CPLs) exhibit negative impedance characteristics that can destabilize the system \cite{kwasinski2010dynamic}. Moreover, the nonlinear nature of CPLs introduces significant challenges to stability analysis and robust controller design, necessitating advanced control techniques to ensure accurate and robust operation of the DC MG in the presence of ZIP loads \cite{hassan}.

% Challenge 3: Co-design
Furthermore, conventional distributed controller design proceeds independently from communication topology considerations, with network structures often assumed to be fixed and predefined. Recent advancements in communication technologies have eliminated the necessity for fixed network structures, creating opportunities to implement innovative control strategies exploiting customizable/reconfigurable communication topologies \cite{jin2017toward}. Several approaches to identify a cost-effective communication network that also ensures control performance have been proposed in \cite{hu2021cost,lou2018optimal,sheng2022optimal}. Still, such approaches typically use a sequential design process rather than a co-design strategy that equally respects control and communication goals.

% Our key to address these challenges.
To address the aforementioned challenges in designing distributed controllers for DC MGs, we use the dissipativity theory, which offers a powerful framework for analyzing and designing robust control systems for large-scale networked systems. The dissipativity concepts have already been used for many power system applications like power electronic converters and MGs \cite{datadriven}. By focusing on the fundamental energy exchanges between interconnected subsystems, dissipativity-based approaches can ensure the stability and robustness of the networked system even when subsystems exhibit complex, nonlinear behaviors \cite{arcak2022}. 

The relationship between communication network density and control performance is non-monotonic. Figure \ref{fig.initialresult} shows that excessive links provide limited performance improvement while increasing system complexity. This trade-off motivates co-design approaches that simultaneously optimize controller parameters and communication topology.

\begin{figure}
    \centering
    \includegraphics[width=0.99\columnwidth]{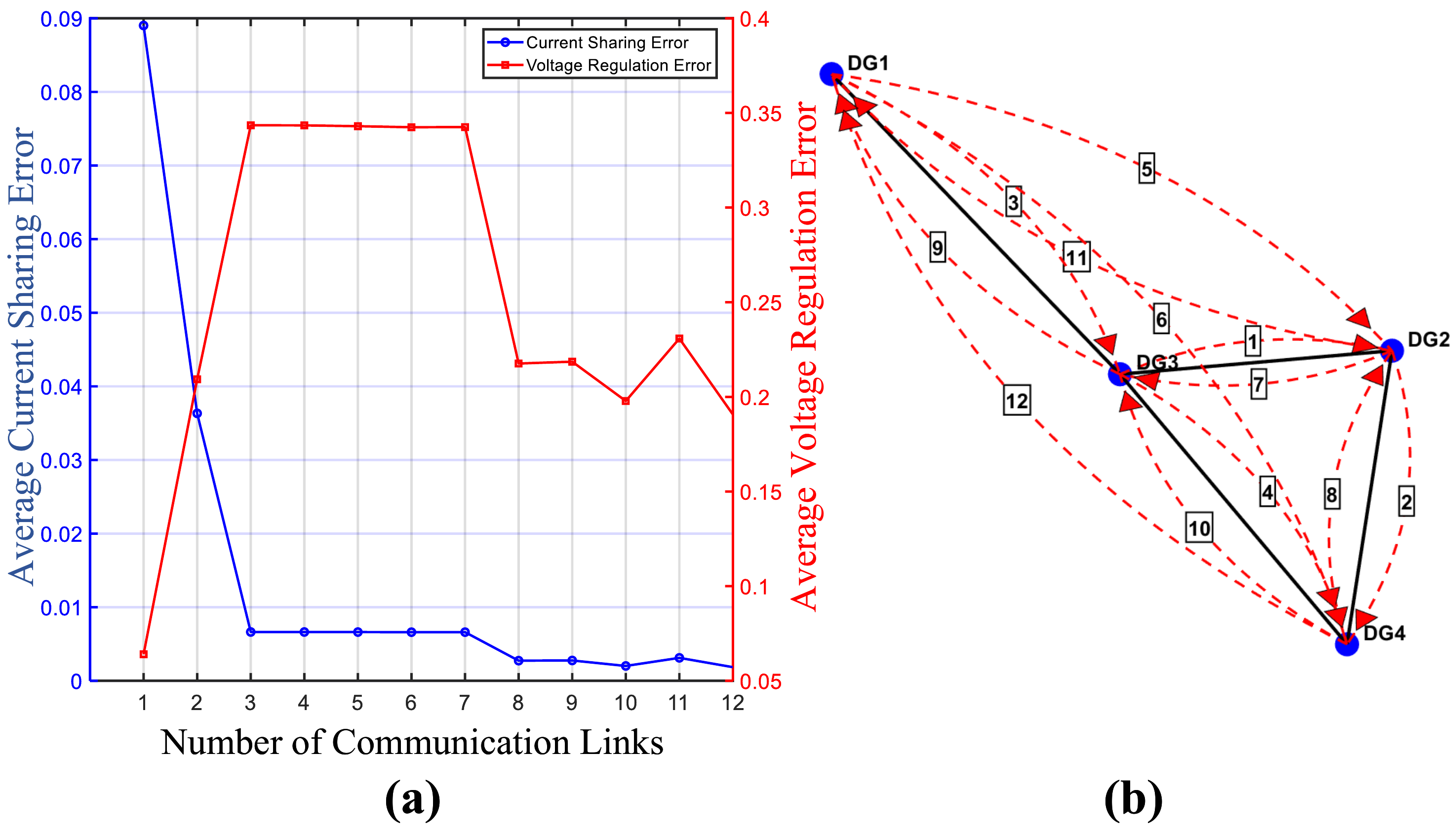}
    \vspace{-7mm}
    \caption{Communication network impact on DC MG: (a) performance metrics versus link count and (b) network structure.}
    \label{fig.initialresult}
\end{figure}

% What we did
In this paper, we propose a dissipativity-based hierarchical distributed control framework to simultaneously achieve voltage regulation and current sharing goals in DC MGs with ZIP loads. First, we model the DC MG as a networked system of subsystems interconnected according to an interconnection matrix. Then, a hierarchical controller containing steady-state, local, and distributed global components is proposed for the DGs. Through conducting a steady state analysis, we identify the equilibrium state and show that the error dynamics of the DC MG (around the equilibrium point) can be represented as a networked system, which can also include disturbance inputs and performance outputs to ensure robust achievement of the desired control objectives. To co-design the controllers and topology, we exploit the dissipative nature of the involved subsystems and the sector-bounded nature of the CPLs to formulate linear matrix inequality (LMI) problems to be solved locally and globally. To support the global LMI problem's feasibility, we systematically identify and embed several necessary conditions into the corresponding local LMI problems. This integrated approach makes the overall design process a streamlined, one-shot procedure, avoiding the need for iterative schemes that demand complex computations and prolonged convergence times \cite{almihat2023overview}.

% Connection to our prior work
It is worth noting that, in our prior work on DC MGs \cite{ACCNajafi} (available at \cite{Najafirad2024Ax1}), we have focused exclusively on achieving voltage regulation, omitting the presence of CPLs and bypassing thorough steady state analysis and comprehensive necessary conditions to be enforced at DGs. Building upon the foundation established by \cite{ACCNajafi}, this paper presents a significantly more complete framework for designing control systems for DC MGs. The main contributions of this paper can be summarized as follows:
\begin{enumerate}
\item % Voltage regulation and current sharing
We formulate the DC MG control problem as a networked system control problem and propose a novel hierarchical control framework that combines local steady-state and voltage regulation controllers with distributed global consensus-based current sharing controllers to formulate a unified control strategy. 
\item % CPLs 
We use a generic ZIP load model that includes CPLs, and provide a technique to handle the nonlinearities (and destabilizing negative impedance characteristics) introduced by CPLs, without sacrificing the LMI (convex optimization) form of the overall control design framework.
\item % Improved codesign framework
We formulate the overall control and topology co-design problem as a set of local and global LMI (convex optimization) problems to be executed in one shot, enabling efficient and scalable numerical implementations. We have also identified a stronger set of necessary conditions that can be implemented in the local LMI problems so as to support the feasibility of the global LMI problem (i.e., of the co-design).
\end{enumerate}

The remainder of this paper is structured as follows. The essential concepts of dissipativity theory and networked systems establish the theoretical foundation in Sec. \ref{Preliminaries}. Our DC MG model with detailed physical topology and component dynamics appears in Sec. \ref{problemformulation}. Within Sec. \ref{Sec:Controller}, we introduce a novel hierarchical control architecture that eliminates traditional droop mechanisms. A comprehensive error dynamics framework addressing CPL nonlinearities follows in Sec. \ref{Sec:ControlDesign}. The development of our dissipativity-based methodology for controller and communication topology synthesis occupies Sec. \ref{Passivity-based Control}. Numerical simulations demonstrating the efficacy of our approach are presented in Sec. \ref{Simulation}. Finally, Sec. \ref{Conclusion} offers concluding remarks on contributions and directions for future research.

\section{Preliminaries}\label{Preliminaries}

\subsection{Notations}
The notation $\mathbb{R}$ and $\mathbb{N}$ signify the sets of real and natural numbers, respectively. 
For any $N\in\mathbb{N}$, we define $\mathbb{N}_N\triangleq\{1,2,..,N\}$.
An $n \times m$ block matrix $A$ is denoted as $A = [A_{ij}]_{i \in \mathbb{N}_n, j \in \mathbb{N}_m}$. Either subscripts or superscripts are used for indexing purposes, e.g., $A_{ij} \equiv A^{ij}$.
$[A_{ij}]_{j\in\mathbb{N}_m}$ and $\diag([A_{ii}]_{i\in\mathbb{N}_n})$ represent a block row matrix and a block diagonal matrix, respectively.
$\0$ and $\I$, respectively, are the zero and identity matrices (dimensions will be clear from the context). A symmetric positive definite (semi-definite) matrix $A\in\mathbb{R}^{n\times n}$ is denoted by $A>0\ (A\geq0)$. The symbol $\star$ represents conjugate blocks inside block symmetric matrices. $\mathcal{H}(A)\triangleq A + A^\T$,  $\mb{1}_{\{ \cdot \}}$ is the indicator function and $\mathbf{1}_N$ is a vector in $\R^N$ containing only ones.

\subsection{Dissipativity}
Consider a nonlinear dynamic system:
\begin{equation}\label{dynamic}
\begin{aligned}
    \dot{x}(t)=f(x(t),u(t)),\\
    y(t)=h(x(t),u(t)),
    \end{aligned}
\end{equation}
where $x(t)\in\mathbb{R}^n$, $u(t)\in\mathbb{R}^q$, $y(t)\in\mathbb{R}^m$, and $f:\mathbb{R}^n\times\mathbb{R}^q\rightarrow\mathbb{R}^n$ and $h:\mathbb{R}^n\times\mathbb{R}^q\rightarrow\mathbb{R}^m$ are continuously differentiable and $f(\0,\0)=\0$ and $h(\0,\0)=\0$.

% This EID property can be specialized based on the used supply rate $s(.,.)$.
% The $X$-EID property, defined in the sequel, uses a quadratic supply rate determined by a coefficient matrix $X=X^\top\triangleq[X^{kl}]_{k,l\in\mathbb{N}_2}\in\mathbb{R}^{q+m}$ \cite{welikala2023non}.

\begin{definition}
The system (\ref{dynamic}) is $X$-dissipative if it is dissipative under the quadratic supply rate:
\begin{center}
$
s(u,y)\triangleq
\begin{bmatrix}
    u \\ y
\end{bmatrix}^\top
\begin{bmatrix}
    X^{11} & X^{12}\\ X^{21} & X^{22}
\end{bmatrix}
\begin{bmatrix}
    u \\ y
\end{bmatrix}.
$
\end{center}
\end{definition}

\begin{remark}\label{Rm:X-DissipativityVersions}
If the system (\ref{dynamic}) is $X$-dissipative with:
1)\ $X = \scriptsize\begin{bmatrix}
    \0 & \frac{1}{2}\I \\ \frac{1}{2}\I & \0
\end{bmatrix}\normalsize$, then it is passive;
2)\ $X = \scriptsize\begin{bmatrix}
    -\nu\I & \frac{1}{2}\I \\ \frac{1}{2}\I & -\rho\I
\end{bmatrix}\normalsize$, then it is strictly passive with input and output passivity indices $\nu$ and $\rho$, denoted as IF-OFP($\nu,\rho$);
3)\ $X = \scriptsize\begin{bmatrix}
    \gamma^2\I & \0 \\ \0 & -\I
\end{bmatrix}\normalsize$, then it is $L_2$-stable with gain $\gamma$, denoted as $L2G(\gamma)$; 
in an equilibrium-independent manner (see also \cite{WelikalaP42022}). 
\end{remark}

If the system (\ref{dynamic}) is linear time-invariant (LTI), a necessary and sufficient condition for $X$-dissipative is provided in the following proposition as a linear matrix inequality (LMI) problem.

\begin{proposition}\label{Prop:linear_X-EID} \cite{welikala2023platoon}
The LTI system
\begin{equation*}\label{Eq:Prop:linear_X-EID_1}
\begin{aligned}
    \dot{x}(t)=Ax(t)+Bu(t),\quad
    y(t)=Cx(t)+Du(t),
\end{aligned}
\end{equation*}
is $X$-dissipative if and only if there exists $P>0$ such that
\begin{equation*}\label{Eq:Prop:linear_X-EID_2}
\scriptsize
\begin{bmatrix}
-\mathcal{H}(PA)+C^\top X^{22}C & -PB+C^\top X^{21}+C^\top X^{22}D\\
\star & X^{11}+\mathcal{H}(X^{12}D)+D^\top X^{22}D
\end{bmatrix}
\normalsize
\geq0.
\end{equation*}
\end{proposition}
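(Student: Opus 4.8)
The plan is to follow the classical quadratic-storage-function route and reduce the dissipation inequality to a single pointwise matrix inequality. First I would posit the storage function candidate $V(x) = x^\top P x$ with $P > 0$, which is the natural equilibrium-independent choice for an LTI system paired with a quadratic supply rate (cf. Remark \ref{Rm:X-DissipativityVersions}). Differentiating $V$ along the system dynamics gives $\dot V(x) = x^\top \mathcal{H}(PA)x + 2x^\top PBu$, which is a quadratic form in the stacked vector $\zeta \triangleq [x^\top\ u^\top]^\top$ with block matrix $\bm{\mathcal{H}(PA) & PB \\ B^\top P & \0}$.

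Next, I would substitute the output relation $y = Cx + Du$ into the supply rate $s(u,y)$ and expand, taking $X^{21} = (X^{12})^\top$ without loss of generality since only the symmetric part of a quadratic form matters. This renders $s(u,y)$ as a quadratic form in $\zeta$ with block matrix $\bm{C^\top X^{22}C & C^\top X^{21} + C^\top X^{22}D \\ \star & X^{11} + \mathcal{H}(X^{12}D) + D^\top X^{22}D}$. Subtracting the $\dot V$-matrix from the $s$-matrix produces exactly the block matrix $M$ appearing in the statement, so that $s(u,y) - \dot V(x) = \zeta^\top M \zeta$; this is a bookkeeping computation rather than a genuine obstacle.

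Finally, I would invoke the definition of $X$-dissipativity: the system is $X$-dissipative (in the equilibrium-independent sense, with quadratic storage) if and only if $\dot V(x) \le s(u,y)$ holds along all trajectories, i.e.\ $\zeta^\top M \zeta \ge 0$ for all admissible $(x,u)$. Since $u$ is a free input and the initial state is arbitrary, the pair $(x(t), u(t))$ ranges over all of $\mathbb{R}^n \times \mathbb{R}^q$, so the inequality holds along all trajectories exactly when it holds for every $\zeta$, which is equivalent to $M \ge 0$; this yields both implications at once. The step needing the most care is the justification that $X$-dissipativity, as used here, is equivalent to the existence of a quadratic storage function $x^\top P x$ with $P>0$ satisfying the differential dissipation inequality — this is the standard LTI fact underpinning the proposition, for which I would cite \cite{welikala2023platoon}, noting that it is precisely the equilibrium-independent formulation that allows the storage function to be centered at the origin with no equilibrium-offset terms.
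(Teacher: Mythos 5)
The paper does not actually prove this proposition; it is imported verbatim from \cite{welikala2023platoon} with no argument given, so there is no in-paper proof to compare against. Your derivation is the standard and correct one: positing $V(x)=x^\top Px$, writing $\dot V$ and the supply rate $s(u,y)$ (after substituting $y=Cx+Du$ and using $X^{21}=(X^{12})^\top$) as quadratic forms in $\zeta=[x^\top\ u^\top]^\top$, and observing that $s-\dot V=\zeta^\top M\zeta$ with $M$ exactly the block matrix in the statement; since $(x,u)$ ranges over all of $\mathbb{R}^n\times\mathbb{R}^q$ along trajectories, the dissipation inequality is equivalent to $M\geq 0$. You correctly flag the one nontrivial step, namely that the ``only if'' direction requires the classical fact that an LTI system dissipative with respect to a quadratic supply rate admits a quadratic storage function with $P>0$; deferring that to the cited reference is reasonable here, since the paper itself treats the entire proposition as a citation.
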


% The following corollary considers a specific LTI system with a local controller (a setup that will be useful later) and formulates an LMI problem for $X$-EID enforcing local controller synthesis.

% {\color{red}
% \begin{corollary}\label{Col.LTI_LocalController_XEID}\cite{welikala2023platoon}
% The LTI system 
% \begin{equation*}
%     \dot{x}(t)=(A+BL)x(t)+\eta(t), \quad y(t) = x(t),
% \end{equation*}
% is $X$-EID with $X^{22}<0$ if and only if there exists $P>0$ and $K$ such that
% \begin{equation*}
% \scriptsize
%     \begin{bmatrix}
%         -(X^{22})^{-1} & P & 0\\
%         \star & -\mathcal{H}(AP+BK) & -\I+PX^{21}\\
%         \star & \star & X^{11}
%     \end{bmatrix}\normalsize\geq0,
% \end{equation*}
% and $L=KP^{-1}$.
% \end{corollary} 
% }

\subsection{Networked Systems} \label{SubSec:NetworkedSystemsPreliminaries}

Consider the networked system $\Sigma$ in Fig. \ref{Networked}, consisting of dynamic subsystems $\Sigma_i,i\in\mathbb{N}_N$, $\Bar{\Sigma}_i,i\in\mathbb{N}_{\Bar{N}}$ and a static interconnection matrix $M$ that characterizes interconnections among subsystems, exogenous inputs $w(t)\in\mathbb{R}^r$ (e.g. disturbances) and interested outputs $z(t)\in\mathbb{R}^l$ (e.g. performance). 

\begin{figure}
    \centering
    \includegraphics[width=0.6\columnwidth]{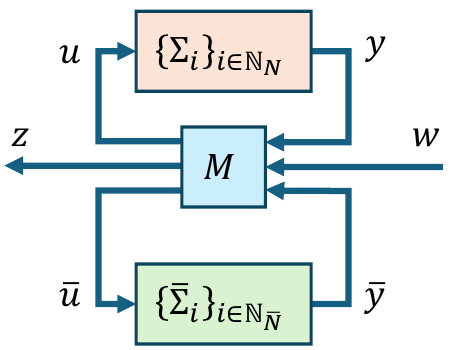}
    \caption{A generic networked system $\Sigma$.}
    \label{Networked}
\end{figure}

The dynamics of each subsystem $\Sigma_i,i\in\mathbb{N}_N$ are given by
\begin{equation}
    \begin{aligned}
        \dot{x}_i(t)=f_i(x_i(t),u_i(t)), \quad y_i(t)=h_i(x_i(t),u_i(t)),
    \end{aligned}
\end{equation}
where $x_i(t)\in\mathbb{R}^{n_i}$, $u_i(t)\in\mathbb{R}^{q_i}$, $y_i(t)\in\mathbb{R}^{m_i}$ and $f_i(\0,\0)=\0$ and $h_i(\0,\0)=\0$. In addition, each subsystem $\Sigma_i$ is assumed to be $X_i$-dissipative , where $X_i \triangleq [X_i^{kl}]_{k,l\in\N_2}$.
Regarding each subsystem $\bar{\Sigma}_i, i\in\N_{\bar{N}}$, we use similar assumptions and notations, but include a bar symbol to distinguish between the two types of subsystems, e.g.,  $\bar{\Sigma}_i$ is assumed to be $\bar{X}_i$-dissipative where $\bar{X}_i \triangleq [\bar{X}_i^{kl}]_{k,l\in\N_2}$.

Defining $u\triangleq[u_i^\top]^\top_{i\in\mathbb{N}_N}$, $y\triangleq[y_i^\top]^\top_{i\in\mathbb{N}_N}$, $\Bar{u}\triangleq[\Bar{u}_i^\top]^\top_{i\in\mathbb{N}_{\Bar{N}}}$ and $\Bar{y}\triangleq[y_i^\top]^\top_{i\in\mathbb{N}_{\Bar{N}}}$, the interconnection matrix $M$ and the corresponding interconnection relationship are given by
\begin{equation}\label{interconnectionMatrix}
\scriptsize
\begin{bmatrix}
    u \\ \bar{u} \\ z
\end{bmatrix}=M
\normalsize
\scriptsize
\begin{bmatrix}
    y \\ \bar{y} \\ w
\end{bmatrix}
\normalsize
\equiv
\scriptsize
\begin{bmatrix}
    M_{uy} & M_{u\bar{y}} & M_{uw}\\
    M_{\bar{u}y} & M_{\bar{u}\bar{y}} & M_{\bar{u}w}\\
    M_{zy} & M_{z\bar{y}} & M_{zw}
\end{bmatrix}
\begin{bmatrix}
    y \\ \bar{y} \\ w
\end{bmatrix}.
\normalsize
\end{equation}

The following proposition exploits the $X_i$-dissipative and $\bar{X}_i$-dissipative properties of the subsystems $\Sigma_i,i\in\mathbb{N}_N$ and $\Bar{\Sigma}_i,i\in\mathbb{N}_{\Bar{N}}$ to formulate an LMI problem for synthesizing the interconnection matrix $M$ (\ref{interconnectionMatrix}), ensuring the networked system $\Sigma$ is $\textbf{Y}$-dissipative for a prespecified $\textbf{Y}$ under two mild assumptions \cite{welikala2023non}.

\begin{assumption}\label{As:NegativeDissipativity}
    For the networked system $\Sigma$, the provided \textbf{Y}-dissipative specification is such that $\textbf{Y}^{22}<0$.
\end{assumption}

\begin{remark}
Based on Rm. \ref{Rm:X-DissipativityVersions}, As. \ref{As:NegativeDissipativity} holds if the networked system $\Sigma$ must be either: (i) L2G($\gamma$) or (ii) IF-OFP($\nu,\rho$) with some $\rho>0$, i.e., $L_2$-stable or passive, respectively. Therefore, As. \ref{As:NegativeDissipativity} is mild since it is usually preferable to make the networked system $\Sigma$ either $L_2$-stable or passive.
\end{remark}

\begin{assumption}\label{As:PositiveDissipativity}
    In the networked system $\Sigma$, each subsystem $\Sigma_i$ is $X_i$-dissipative with $X_i^{11}>0, \forall i\in\mathbb{N}_N$, and similarly, each subsystem $\bar{\Sigma}_i$ is $\bar{X}_i$-dissipative with $\bar{X}_i^{11}>0, \forall i\in\N_{\bar{N}}$.
\end{assumption}

\begin{remark}
    According to Rm. \ref{Rm:X-DissipativityVersions}, As. \ref{As:PositiveDissipativity} holds if a subsystem $\Sigma_i,i\in\N_N$ is either: (i) L2G($\gamma_i$) or (ii) IF-OFP($\nu_i,\rho_i$) with $\nu_i<0$ (i.e., $L_2$-stable or non-passive). Since in passivity-based control, often the involved subsystems are non-passive (or can be treated as such), As. \ref{As:PositiveDissipativity} is also mild. 
\end{remark}

% \begin{figure*}[!hb]
% \vspace{-5mm}
% \centering
% \hrulefill
% \begin{equation}\label{NSC3YEID}
% \scriptsize \bm{
% 	\textbf{X}_p^{11} & \0 & L_{uy} & L_{u\bar{y}}\\ 
% 	\0 & \bar{\textbf{X}}_{\bar{p}}^{11} & L_{\bar{u}y} & L_{\bar{u}\bar{y}}\\
% 	L_{uy}^\T & L_{\bar{u}y}^\T & -L_{uy}^\T \textbf{X}^{12} - \textbf{X}^{21} L_{uy} -\textbf{X}_p^{22} & -\textbf{X}^{21}L_{u\bar{y}}-L_{\bar{u}y}^\T \bar{\textbf{X}}^{12} \\
% 	L_{u\bar{y}}^\T & L_{\bar{u}\bar{y}}^\T & -L_{u\bar{y}}^\T\textbf{X}^{12}-\bar{\textbf{X}}^{21} L_{\bar{u}y} & -L_{\bar{u}\bar{y}}^\T \bar{\textbf{X}}^{12} - \bar{\textbf{X}}^{21} L_{\bar{u}\bar{y}} - \bar{\textbf{X}}_{\bar{p}}^{22}
% } \normalsize >0
% \end{equation}
% \end{figure*} 

\begin{proposition}\label{synthesizeM}\cite{welikala2023non}
    Under As. \ref{As:NegativeDissipativity}-\ref{As:PositiveDissipativity}, the network system $\Sigma$ can be made \textbf{Y}-dissipative (from $w(t)$ to $z(t)$) by synthesizing the interconnection matrix $M$ \eqref{interconnectionMatrix} via solving the LMI problem:
\begin{equation}
\begin{aligned}
	&\mbox{Find: } 
	L_{uy}, L_{u\bar{y}}, L_{uw}, L_{\bar{u}y}, L_{\bar{u}\bar{y}}, L_{\bar{u}w}, M_{zy}, M_{z\bar{y}}, M_{zw}, \\
	&\mbox{Sub. to: } p_i \geq 0, \forall i\in\N_N, \ \ 
	\bar{p}_l \geq 0, \forall l\in\N_{\bar{N}},\ \text{and} \  \eqref{NSC4YEID},
\end{aligned}
\end{equation}
with
\scriptsize
$\bm{M_{uy} & M_{u\bar{y}} & M_{uw} \\ M_{\bar{u}y} & M_{\bar{u}\bar{y}} & M_{\bar{u}w}} = 
\bm{\textbf{X}_p^{11} & \0 \\ \0 & \bar{\textbf{X}}_{\bar{p}}^{11}}^{-1} \hspace{-1mm} \bm{L_{uy} & L_{u\bar{y}} & L_{uw} \\ L_{\bar{u}y} & L_{\bar{u}\bar{y}} & L_{\bar{u}w}}$.
\normalsize
where $\textbf{X}_p^{kl} \triangleq \diag(\{p_iX_i^{kl}:i\in\N_N\}), \forall k,l\in\N_2$, $\textbf{X}^{12} \triangleq \diag((X_i^{11})^{-1}X_i^{12}:i\in\N_N)$, and 
$\textbf{X}^{21} \triangleq (\textbf{X}^{12})^\T$ (terms $\bar{\textbf{X}}_{\bar{p}}^{kl}$, $\bar{\textbf{X}}^{12}$ and 
$\bar{\textbf{X}}^{21}$ have analogous definitions).
\end{proposition}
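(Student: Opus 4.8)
\smallskip
\noindent\emph{Proof outline.}
The plan is to equip $\Sigma$ with a storage function built as a weighted sum of the subsystem storage functions, turn the resulting dissipation inequality into a single quadratic matrix inequality in the blocks of $M$, and then convexify it — via a Schur-complement step and a linearizing change of variables — into the LMI \eqref{NSC4YEID}. Concretely, since each $\Sigma_i$ is $X_i$-dissipative and each $\bar\Sigma_l$ is $\bar X_l$-dissipative (in the equilibrium-independent sense of Rm.~\ref{Rm:X-DissipativityVersions}), there exist storage functions $V_i\geq0$, $\bar V_l\geq0$ with $\dot V_i\leq\bm{u_i\\y_i}^\T X_i\bm{u_i\\y_i}$ and analogously for $\bar V_l$. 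For scalars $p_i>0$, $\bar p_l>0$ to be chosen, I would take $V\triangleq\sum_{i\in\N_N}p_iV_i+\sum_{l\in\N_{\bar N}}\bar p_l\bar V_l\geq0$ as a candidate storage function for $\Sigma$; then $\Sigma$ is \textbf{Y}-dissipative from $w$ to $z$ provided $\dot V\leq\bm{w\\z}^\T\textbf{Y}\bm{w\\z}$ along all trajectories, and by the subsystem inequalities it suffices that, for all $y,\bar y,w$,
\[
\bm{u\\y}^\T\textbf{X}_p\bm{u\\y}+\bm{\bar u\\\bar y}^\T\bar{\textbf{X}}_{\bar p}\bm{\bar u\\\bar y}-\bm{w\\z}^\T\textbf{Y}\bm{w\\z}\leq0,
\]
where $\textbf{X}_p^{kl}\triangleq\diag(\{p_iX_i^{kl}:i\in\N_N\})$ and $\bar{\textbf{X}}_{\bar p}^{kl}$ is defined analogously.

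Next, I would substitute the interconnection relation \eqref{interconnectionMatrix} (i.e., $u=M_{uy}y+M_{u\bar y}\bar y+M_{uw}w$, and likewise for $\bar u$ and $z$) into this inequality. Writing it as a quadratic form in $\zeta\triangleq[y^\T\ \bar y^\T\ w^\T]^\T$, the requirement collapses to a matrix inequality $\Theta\leq0$, where $\Theta$ is affine in $\{p_i\},\{\bar p_l\}$ but is quadratic in $(M_{uy},M_{u\bar y},M_{uw})$ and $(M_{\bar u y},M_{\bar u\bar y},M_{\bar u w})$ through the $\textbf{X}_p^{11}$- and $\bar{\textbf{X}}_{\bar p}^{11}$-weighted terms, quadratic in $(M_{zy},M_{z\bar y},M_{zw})$ through the $\textbf{Y}^{22}$-weighted term, and bilinear in the pairs $(p_i,M_{u\bullet})$ and $(\bar p_l,M_{\bar u\bullet})$ — so it is not yet an LMI.

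Then I would convexify. By As.~\ref{As:PositiveDissipativity}, $X_i^{11}>0$ and $\bar X_l^{11}>0$, so $\textbf{X}_p^{11}>0$ and $\bar{\textbf{X}}_{\bar p}^{11}>0$ whenever $p_i,\bar p_l>0$; by As.~\ref{As:NegativeDissipativity}, $\textbf{Y}^{22}<0$. Since the pivot blocks created by the quadratic terms are therefore strictly sign-definite, a sequence of Schur-complement equivalences rewrites $\Theta\leq0$ as an equivalent matrix inequality carrying $(\textbf{X}_p^{11})^{-1}$, $(\bar{\textbf{X}}_{\bar p}^{11})^{-1}$ and $(\textbf{Y}^{22})^{-1}$ on its diagonal and the blocks $M_{u\bullet},M_{\bar u\bullet},M_{z\bullet}$ only affinely. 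The surviving bilinearity is removed by the substitution $L_{uy}\triangleq\textbf{X}_p^{11}M_{uy},\dots,L_{\bar u w}\triangleq\bar{\textbf{X}}_{\bar p}^{11}M_{\bar u w}$ (with $M_{z\bullet}$ left as free variables): the cross-terms then feature $(\textbf{X}_p^{11})^{-1}\textbf{X}_p^{12}=\diag(\{(X_i^{11})^{-1}X_i^{12}:i\in\N_N\})=\textbf{X}^{12}$, which no longer involves the $p_i$, so the whole condition is jointly affine in $(L_\bullet,M_{z\bullet},\{p_i\},\{\bar p_l\})$ — precisely \eqref{NSC4YEID} together with $p_i\geq0$, $\bar p_l\geq0$. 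Finally, from any feasible point $M$ is recovered by $\bm{M_{uy}&M_{u\bar y}&M_{uw}\\M_{\bar u y}&M_{\bar u\bar y}&M_{\bar u w}}=\bm{\textbf{X}_p^{11}&\0\\\0&\bar{\textbf{X}}_{\bar p}^{11}}^{-1}\bm{L_{uy}&L_{u\bar y}&L_{uw}\\L_{\bar u y}&L_{\bar u\bar y}&L_{\bar u w}}$ — well defined because the pivot blocks are invertible — and this $M$ renders $\Sigma$ \textbf{Y}-dissipative.

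The step I expect to be the main obstacle is the Schur-complement reduction: one must verify it is a genuine equivalence (not merely the sufficient direction), which rests entirely on the strict sign-definiteness of \emph{all} pivot blocks — exactly what As.~\ref{As:NegativeDissipativity}--\ref{As:PositiveDissipativity} guarantee — and one must propagate the weights $p_i,\bar p_l$ carefully through the algebra so that the bilinear terms truly cancel under the substitution (the identity $(\textbf{X}_p^{11})^{-1}\textbf{X}_p^{12}=\textbf{X}^{12}$). Two minor points: the equilibrium-independent dissipativity in Rm.~\ref{Rm:X-DissipativityVersions} is what allows the aggregated dissipation inequality to be asserted globally without fixing an equilibrium; and the needed strict positivity of the $p_i,\bar p_l$ is relaxed to $\geq0$ in \eqref{NSC4YEID}, which is harmless once a (strictly feasible) solution with positive weights is taken so that the recovery of $M$ stays well posed. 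The claim is one-directional (LMI feasibility $\Rightarrow$ \textbf{Y}-dissipativity); a converse would require tightness of the subsystem dissipativity certificates, which is not assumed.
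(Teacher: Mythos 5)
Your outline is correct and follows essentially the same argument as the source: the paper itself gives no proof of Prop.~\ref{synthesizeM}, importing it verbatim from \cite{welikala2023non}, and that reference proceeds exactly as you do — weighted sum of subsystem storage functions, substitution of the interconnection relation to get a quadratic matrix inequality in the blocks of $M$, Schur complements on the strictly definite pivots $\textbf{X}_p^{11}$, $\bar{\textbf{X}}_{\bar{p}}^{11}$, $-\textbf{Y}^{22}$ guaranteed by As.~\ref{As:NegativeDissipativity}--\ref{As:PositiveDissipativity}, and the linearizing substitution $L=\textbf{X}_p^{11}M$ with the cancellation $(\textbf{X}_p^{11})^{-1}\textbf{X}_p^{12}=\textbf{X}^{12}$. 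Your one caveat about recovering $M$ when some $p_i=0$ is moot in practice, since strict feasibility of \eqref{NSC4YEID} already forces $\textbf{X}_p^{11}>0$ and hence $p_i>0$.
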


% {\color{red}
% \begin{corollary}\label{synthesize_error_M}\cite{welikala2023non}
%     Under As. \ref{As:PositiveDissipativity}, the simplified network system $\tilde{\Sigma}$ can be made stable by synthesizing the interconnection matrix $M$ via solving the LMI problem:
% \begin{equation}
% \begin{aligned}
% 	&\mbox{Find: } 
% 	L_{uy}, L_{u\bar{y}}, L_{\bar{u}y}, L_{\bar{u}\bar{y}},\\
% 	&\mbox{Sub. to: } p_i \geq 0, \forall i\in\N_N, \ \ 
% 	\bar{p}_l \geq 0, \forall l\in\N_{\bar{N}},\ \text{and} \ \eqref{NSC3YEID},
% \end{aligned}
% \end{equation}
% with
% \scriptsize
% $\scriptsize \bm{M_{uy} & M_{u\bar{y}} \\ M_{\bar{u}y} & M_{\bar{u}\bar{y}}} = \bm{\tilde{\textbf{X}}_p^{11} & \0 \\ \0 & \tilde{\bar{\textbf{X}}}_{\bar{p}}^{11}}^{-1}\bm{L_{uy} & L_{u\bar{y}} \\ L_{\bar{u}y} & L_{\bar{u}\bar{y}}}$.
% \normalsize
% \end{corollary}
% }

\begin{figure*}[!hb]
\vspace{-5mm}
\centering
\hrulefill
\begin{equation}\label{NSC4YEID}
\scriptsize
 \bm{
		\textbf{X}_p^{11} & \0 & \0 & L_{uy} & L_{u\bar{y}} & L_{uw} \\
		\0 & \bar{\textbf{X}}_{\bar{p}}^{11} & \0 & L_{\bar{u}y} & L_{\bar{u}\bar{y}} & L_{\bar{u}w}\\
		\0 & \0 & -\textbf{Y}^{22} & -\textbf{Y}^{22} M_{zy} & -\textbf{Y}^{22} M_{z\bar{y}} & \textbf{Y}^{22} M_{zw}\\
		L_{uy}^\T & L_{\bar{u}y}^\T & - M_{zy}^\T\textbf{Y}^{22} & -L_{uy}^\T\textbf{X}^{12}-\textbf{X}^{21}L_{uy}-\textbf{X}_p^{22} & -\textbf{X}^{21}L_{u\bar{y}}-L_{\bar{u}y}^\T \bar{\textbf{X}}^{12} & -\textbf{X}^{21}L_{uw} + M_{zy}^\T \textbf{Y}^{21} \\
		L_{u\bar{y}}^\T & L_{\bar{u}\bar{y}}^\T & - M_{z\bar{y}}^\T\textbf{Y}^{22} & -L_{u\bar{y}}^\T\textbf{X}^{12}-\bar{\textbf{X}}^{21}L_{\bar{u}y} & 		-(L_{\bar{u}\bar{y}}^\T \bar{\textbf{X}}^{12} + \bar{\textbf{X}}^{21}L_{\bar{u}\bar{y}}+\bar{\textbf{X}}_{\bar{p}}^{22}) & -\bar{\textbf{X}}^{21} L_{\bar{u}w} + M_{z\bar{y}}^\T \textbf{Y}^{21} \\ 
		L_{uw}^\T & L_{\bar{u}w}^\T & -M_{zw}^\T \textbf{Y}^{22}& -L_{uw}^\T\textbf{X}^{12}+\textbf{Y}^{12}M_{zy} & -L_{\bar{u}w}^\T\bar{\textbf{X}}^{12}+ \textbf{Y}^{12} M_{z\bar{y}} & M_{zw}^\T\textbf{Y}^{21} + \textbf{Y}^{12}M_{zw} + \textbf{Y}^{11}
	}\normalsize > 0
\end{equation}
\end{figure*}

Before concluding this section, we recall three linear algebraic results that will be useful in the sequel.

\begin{lemma}\label{Lm:Schur_comp}
\textbf{(Schur Complement)} For matrices $P > 0, Q$ and $R$, the following statements are equivalent:
\begin{subequations}
\begin{align}
1)\ &\begin{bmatrix} P & Q \\ Q^\T & R \end{bmatrix} \geq 0, \label{Lm:Schur1}\\
2)\ &P \geq 0, R-Q^\T P^{-1}Q \geq 0, \label{Lm:Schur2}\\
3)\ &R \geq 0, P-QR^{-1}Q^\T \geq 0 \label{Lm:Schur3}.
\end{align}
\end{subequations}
\end{lemma}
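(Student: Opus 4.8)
The plan is to establish the three-way equivalence by congruence transformations, exploiting the fact that $x \mapsto S^\T x S$ preserves positive semidefiniteness in both directions whenever $S$ is nonsingular. Since $P > 0$, the inverse $P^{-1}$ exists, so I would first introduce the (clearly invertible) block matrix $T \triangleq \begin{bmatrix} \I & -P^{-1}Q \\ \0 & \I \end{bmatrix}$ and verify by direct block multiplication the identity
\[
T^\T \begin{bmatrix} P & Q \\ Q^\T & R \end{bmatrix} T = \begin{bmatrix} P & \0 \\ \0 & R - Q^\T P^{-1} Q \end{bmatrix}.
\]
Because $T$ is nonsingular, the block matrix on the left is positive semidefinite if and only if the block-diagonal matrix on the right is, which in turn holds if and only if each diagonal block is positive semidefinite. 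This yields the equivalence of \eqref{Lm:Schur1} and \eqref{Lm:Schur2}, with the condition $P \geq 0$ being automatic under the standing hypothesis $P > 0$.

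For the equivalence with \eqref{Lm:Schur3} I would run the symmetric argument on the other corner: assuming $R$ is invertible (the case of interest in the sequel, where in fact $R > 0$), set $S \triangleq \begin{bmatrix} \I & \0 \\ -R^{-1}Q^\T & \I \end{bmatrix}$ and check, again by block multiplication, that
\[
S^\T \begin{bmatrix} P & Q \\ Q^\T & R \end{bmatrix} S = \begin{bmatrix} P - Q R^{-1} Q^\T & \0 \\ \0 & R \end{bmatrix},
\]
so that the original block matrix is positive semidefinite iff $R \geq 0$ and $P - QR^{-1}Q^\T \geq 0$. Chaining this with the first part closes the cycle \eqref{Lm:Schur1} $\Leftrightarrow$ \eqref{Lm:Schur2} $\Leftrightarrow$ \eqref{Lm:Schur3}. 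The bulk of the work is the two routine block-matrix products above; nothing beyond elementary linear algebra is required.

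The only delicate point is the bookkeeping around the non-strict inequalities and the invertibility implicitly needed to even state \eqref{Lm:Schur3}: the Schur complement $R - Q^\T P^{-1}Q$ is well defined purely from $P > 0$, whereas $P - QR^{-1}Q^\T$ presupposes $R$ nonsingular, so strictly speaking the three-way equivalence is asserted under that additional understanding (and is typically invoked in this paper with $R > 0$). Since the result is entirely classical, an acceptable alternative for a preliminaries section is simply to cite a standard reference (e.g., the appendix of Boyd et al. on LMIs) rather than reproduce the argument; I would include the short congruence computation above only for self-containedness.
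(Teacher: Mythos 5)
Your proof is correct, and it is the standard congruence-transformation argument: conjugating by $T = \left[\begin{smallmatrix} \I & -P^{-1}Q \\ \0 & \I \end{smallmatrix}\right]$ block-diagonalizes the matrix, and positive semidefiniteness is preserved under congruence by a nonsingular matrix. This is mathematically the same block-diagonalization that the paper performs, but the paper does it at the level of the quadratic form (completing the square with an auxiliary vector $y$), and in doing so it inserts an unnecessary and in fact incorrectly justified intermediate claim — namely that $z_1^\T P z_1 - z_1^\T QP^{-1}Qz_1 \geq 0$ holds unconditionally because $\I - P^{-1/2}QP^{-1}Q^\T P^{-1/2}$ has eigenvalues at most $1$ (eigenvalues bounded above by $1$ do not imply nonnegativity, and this term need not be nonnegative in general). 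Your congruence version avoids that detour entirely: the equivalence follows immediately from the fact that a block-diagonal matrix is PSD iff each block is. You also correctly flag a point the paper glosses over: statement \eqref{Lm:Schur3} presupposes $R$ invertible, so the three-way equivalence as literally stated needs that additional hypothesis (harmless here, since the lemma is invoked with $R>0$ or with the roles arranged so that the relevant corner is positive definite). In short, your write-up is both correct and tighter than the paper's; citing a standard LMI reference, as you suggest, would also be acceptable for a preliminaries lemma.
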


\begin{proof}
We first establish the equivalence of \eqref{Lm:Schur1} and \eqref{Lm:Schur2}. Let $z = \begin{bmatrix} z_1 & z_2 \end{bmatrix}^\T$ be any non-zero vector. Then:
\begin{equation}\label{Schur4}
z^\T\begin{bmatrix} P & Q \\ Q^\T & R \end{bmatrix}z = z_1^\T Pz_1 + z_1^\T Qz_2 + z_2^\T Q^\T z_1 + z_2^\T Rz_2
\end{equation}
Let $y = z_2 + P^{-1}Qz_1$. Then \eqref{Schur4} can be rewritten as:
\begin{equation}
z_1^\T Pz_1 - z_1^\T QP^{-1}Qz_1 + y^\T(R-Q^\T P^{-1}Q)y
\end{equation}
For the first part $z_1^\T P z_1 - z_1^\T QP^{-1}Qz_1$, we can observe that when $P > 0$:

\begin{equation}
\begin{aligned}\label{Schur5}
&z_1^\T P z_1 - z_1^\T QP^{-1}Qz_1 = z_1^\T(P - QP^{-1}Q)z_1 =\\
&z_1^\T P^{1/2}(\I - P^{-1/2}QP^{-1}Q^\T P^{-1/2})P^{1/2}z_1
\end{aligned}
\end{equation}

Since $P^{-1/2}QP^{-1}Q^\T P^{-1/2}$ is positive semidefinite, its eigenvalues are non-negative. Therefore, $(\I - P^{-1/2}QP^{-1}Q^\T P^{-1/2})$ has eigenvalues less than or equal to 1, making this term positive semidefinite. Consequently, $z_1^\T P z_1 - z_1^\T QP^{-1}Qz_1 \geq 0$. Thus, the entire expression \eqref{Schur5} is non-negative if and only if $R - Q^TP^{-1}Q \geq 0$. This establishes the equivalence of \eqref{Lm:Schur1} and \eqref{Lm:Schur2}.

The equivalence of \eqref{Lm:Schur1} and \eqref{Lm:Schur3} can be established by following similar steps.
\end{proof}

\begin{lemma}\label{Lm:InvSchur}
For any $P > 0$ and a square matrix $Q$:  
$$Q^\T P^{-1}Q > Q^\T + Q - P.$$ 
\end{lemma}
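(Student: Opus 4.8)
The plan is to prove the strict inequality $Q^\T P^{-1} Q > Q^\T + Q - P$ by moving everything to one side and recognizing a perfect square. First I would observe that the claimed inequality is equivalent to $Q^\T P^{-1} Q - Q^\T - Q + P > 0$. The key idea is that since $P > 0$, we have $P^{-1} > 0$, and the left-hand side should factor as a congruence-transformed copy of $P^{-1}$. Specifically, I would verify the identity
\begin{equation*}
Q^\T P^{-1} Q - Q^\T - Q + P = (Q - P)^\T P^{-1} (Q - P),
\end{equation*}
which one checks by expanding the right-hand side: $(Q-P)^\T P^{-1}(Q-P) = Q^\T P^{-1} Q - Q^\T P^{-1} P - P P^{-1} Q + P P^{-1} P = Q^\T P^{-1} Q - Q^\T - Q + P$, using symmetry of $P$ (and hence of $P^{-1}$).

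Next I would argue that $(Q-P)^\T P^{-1}(Q-P) \geq 0$ always, since $P^{-1} > 0$ makes this a positive semidefinite quadratic form. To upgrade semidefiniteness to strict definiteness, I would use that $P^{-1} > 0$ together with the fact that $Q - P$ is a square matrix of the same size as $P$: if $Q - P$ were singular, then $(Q-P)^\T P^{-1}(Q-P)$ would only be semidefinite, so strictness requires an additional observation. Here the cleanest route is to note that the statement as written is the strict version, and the intended reading (consistent with how the lemma is used later, e.g. in bounding $M$-type terms) is that $Q - P$ is nonsingular — or, alternatively, one shows $(Q-P)^\T P^{-1}(Q-P) > 0$ holds whenever $Q$ is nonsingular is \emph{not} automatic, so I would instead present the result as $\geq$ with equality iff $Q = P$, and flag that the strict inequality in the lemma statement is meant in the Loewner sense $\geq$ with the understanding that $Q \neq P$ generically.

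In the writeup I would therefore proceed in three short steps: (i) state and expand the algebraic identity $P - Q^\T - Q + Q^\T P^{-1} Q = (Q-P)^\T P^{-1}(Q-P)$; (ii) invoke $P > 0 \Rightarrow P^{-1} > 0$ to conclude the right-hand side is positive semidefinite, hence $Q^\T P^{-1} Q \geq Q^\T + Q - P$; (iii) note the inequality is strict unless $Q = P$ (equivalently $Q - P = \mathbf{0}$). I expect the only real subtlety — the "main obstacle" — to be the gap between the semidefinite conclusion the algebra naturally gives and the strict inequality symbol in the lemma statement; the resolution is simply that in the applications $Q$ is such that $Q - P$ is nonsingular (or one only needs $\geq$), so the proof should state the identity, derive $\geq$, and remark on the condition for strictness rather than claiming unconditional strict positivity. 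Everything else is a one-line matrix expansion using $P^\T = P$.
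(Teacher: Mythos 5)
Your proof is correct and takes essentially the same approach as the paper: your identity $(Q-P)^\T P^{-1}(Q-P) = Q^\T P^{-1}Q - Q^\T - Q + P$ is exactly the paper's expansion of $(S-\I)^\T P(S-\I)$ under the substitution $S = P^{-1}Q$. Your remark that strictness fails when $Q = P$ is in fact a correction to the paper, whose proof asserts $(S-\I)^\T P(S-\I) > 0$ for arbitrary $S$ (false at $S=\I$); the lemma should be stated with $\geq$, which is all that is needed where it is used (Corollary~\ref{Co:Lm:InvSchur} and Lemma~\ref{Lm:S-ProcedureModified}).
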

\begin{proof}
For any arbitrary matrix $S$, since $P>0$: we have $(S-\I)^\T P (S-I) > 0$, which can be simplifies to  
$$S^\T P S - PS - S^\T P + P > 0.$$
Now, we can obtain the required result by applying the change of variables $S = P^{-1}Q$ and re-arranging the terms.
\end{proof}

\begin{corollary}\label{Co:Lm:InvSchur}
For any $P \in \R^{n\times n}$ and $Q\in \R^{n\times m}$ such that $P>0$ and $n<m$: 
$$
Q^\T P^{-1} Q > 
\bar{Q}^\T + \bar{Q} - \bm{P & \0_{n\times (m-n)} \\ \0_{(m-n)\times n} & \I_{m-n}}.$$ 
where $\bar{Q}^\T \triangleq \bm{Q^\T & \0_{m\times(m-n)}}$.
\end{corollary}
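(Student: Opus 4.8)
The plan is to deduce the corollary from Lemma~\ref{Lm:InvSchur} by padding $P$ and $Q$ up to square $m\times m$ matrices so that the square-matrix hypothesis of that lemma is met. Concretely, I would set
$$\tilde{P} \triangleq \bm{P & \0_{n\times(m-n)} \\ \0_{(m-n)\times n} & \I_{m-n}} \in \R^{m\times m}, \qquad \bar{Q} \triangleq \bm{Q \\ \0_{(m-n)\times n}} \in \R^{m\times m},$$
so that the transpose of $\bar{Q}$ coincides with the matrix $\bar{Q}^\T$ in the statement. Since $P>0$ and $\I_{m-n}>0$, the block-diagonal matrix $\tilde{P}$ is positive definite, and it is square of size $m$; hence Lemma~\ref{Lm:InvSchur} applies verbatim to the pair $(\tilde{P},\bar{Q})$ and gives $\bar{Q}^\T \tilde{P}^{-1}\bar{Q} > \bar{Q}^\T + \bar{Q} - \tilde{P}$, which already has the right shape on the right-hand side since $\tilde{P}$ is exactly the block matrix appearing in the claim.

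It then remains only to identify the left-hand side with $Q^\T P^{-1} Q$. Because $\tilde{P}$ is block diagonal with positive definite blocks, $\tilde{P}^{-1} = \bm{P^{-1} & \0 \\ \0 & \I_{m-n}}$, and a direct block multiplication yields
$$\bar{Q}^\T \tilde{P}^{-1}\bar{Q} = \bm{Q^\T & \0_{m\times(m-n)}}\bm{P^{-1} & \0 \\ \0 & \I_{m-n}}\bm{Q \\ \0_{(m-n)\times n}} = Q^\T P^{-1} Q,$$
the zero padding annihilating precisely the extra $(m-n)$ coordinates. Substituting this identity into the inequality obtained from Lemma~\ref{Lm:InvSchur} produces exactly the asserted bound, with the strictness inherited directly from the lemma.

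There is essentially no deep step here; the whole argument is a reduction, and the only thing to be careful about is the block bookkeeping. In particular, the bottom-right block of $\tilde{P}$ must be taken as $\I_{m-n}$ rather than $\0$ precisely so that $\tilde{P}$ remains positive definite (and invertible, a hypothesis of Lemma~\ref{Lm:InvSchur}); one should also confirm that $\tilde{P}^{-1}$ has the stated block-diagonal form and that the padded zero rows and columns of $\bar{Q}$ collapse $\bar{Q}^\T \tilde{P}^{-1}\bar{Q}$ back to $Q^\T P^{-1}Q$ while leaving $\bar{Q}^\T+\bar{Q}-\tilde{P}$ in exactly the displayed form. Verifying these dimension-matching details is the entirety of the "work".
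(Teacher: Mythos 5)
Your proposal is correct and follows essentially the same route as the paper: both observe the identity $Q^\T P^{-1} Q = \bar{Q}^\T \tilde{P}^{-1} \bar{Q}$ with $\tilde{P}$ the padded block-diagonal matrix and then apply Lemma~\ref{Lm:InvSchur} to the padded pair. The only blemish is a dimension typo in your definition of $\bar{Q}$ (the zero block should be $\0_{(m-n)\times m}$, not $\0_{(m-n)\times n}$, since $Q\in\R^{n\times m}$), which does not affect the argument.
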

\begin{proof}
The proof is complete by observing
$$Q^\T P^{-1} Q = \bar{Q}^\T \bm{P^{-1} & \0_{n\times (m-n)} \\ \0_{(m-n)\times n} & \I_{m-n}} \bar{Q}$$ and applying Lm. \ref{Lm:InvSchur} for the expression in the right-hand side.  
\end{proof}

\begin{lemma}\label{Lm:Woodbury}
For an invertible $R \in \mathbb{R}^{n \times n}$ and $\rho \in \R_{>0}$:
$$
(R + \rho I)^{-1} = R^{-1} - \rho R^{-1} \left( I + \rho R^{-1} \right)^{-1} R^{-1}.
$$
\end{lemma}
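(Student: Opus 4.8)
The plan is to prove the Woodbury-type identity for $(R+\rho I)^{-1}$ by direct verification, multiplying the claimed right-hand side by $(R+\rho I)$ and checking that the product equals $I$. First I would denote the candidate inverse by $S \triangleq R^{-1} - \rho R^{-1}(I+\rho R^{-1})^{-1}R^{-1}$, noting that $S$ is well-defined since $R$ is invertible and $\rho > 0$ guarantees $I + \rho R^{-1}$ is invertible (its construction mirrors the standard Sherman--Morrison--Woodbury formula with the roles of the low-rank update filled by scalar multiples of identity). Then I would compute $(R+\rho I)S$ and expand it into four terms.

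The key algebraic step is to group the expansion as
\begin{equation*}
(R+\rho I)S = I + \rho R^{-1} - \rho(I+\rho R^{-1})^{-1}R^{-1} - \rho^2 R^{-1}(I+\rho R^{-1})^{-1}R^{-1},
\end{equation*}
and then observe that the last three terms can be combined. Specifically, I would factor $-\rho(I+\rho R^{-1})^{-1}R^{-1}$ out of the last two terms to get $-\rho\left(I + \rho R^{-1}\right)(I+\rho R^{-1})^{-1}R^{-1} = -\rho R^{-1}$, which exactly cancels the $+\rho R^{-1}$ term, leaving $(R+\rho I)S = I$. A symmetric computation (or appealing to the fact that a left inverse of a square matrix is also a right inverse) confirms $S(R+\rho I) = I$, so $S = (R+\rho I)^{-1}$.

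Alternatively, and perhaps more cleanly for the write-up, I would present this as a special case of the standard Woodbury matrix identity $(A + UCV)^{-1} = A^{-1} - A^{-1}U(C^{-1} + VA^{-1}U)^{-1}VA^{-1}$ with $A = R$, $U = V = I$, and $C = \rho I$ (so $C^{-1} = \rho^{-1} I$): this gives $(R+\rho I)^{-1} = R^{-1} - R^{-1}(\rho^{-1}I + R^{-1})^{-1}R^{-1}$, and then I would reconcile the middle factor by writing $(\rho^{-1}I + R^{-1})^{-1} = \rho(I + \rho R^{-1})^{-1}$, which recovers the stated form. I expect no real obstacle here — the only thing to be careful about is the commutation of scalar multiples of the identity with $R^{-1}$ and the bookkeeping of which inverse is applied where; since everything in sight is a polynomial or rational function of the single matrix $R$, all factors commute, so the manipulation is routine. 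The main point worth stating explicitly is that $\rho > 0$ (together with invertibility of $R$) is what ensures all the inverses appearing in the formula exist.
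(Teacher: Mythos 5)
Your proposal is correct. The paper's own proof is precisely your ``alternative'' route: it invokes the standard Woodbury identity $(R+UV^\T)^{-1}=R^{-1}-R^{-1}U(I+V^\T R^{-1}U)^{-1}V^\T R^{-1}$ with $U=V=\sqrt{\rho}\,I$, which is the same substitution as your $U=V=I$, $C=\rho I$ up to where the scalar $\rho$ is parked; the reconciliation $(\rho^{-1}I+R^{-1})^{-1}=\rho(I+\rho R^{-1})^{-1}$ you describe is exactly the bookkeeping needed. Your primary route --- direct verification that $(R+\rho I)S=I$ for $S=R^{-1}-\rho R^{-1}(I+\rho R^{-1})^{-1}R^{-1}$ --- is a genuinely more elementary and self-contained argument: the cancellation $(R+\rho I)R^{-1}(I+\rho R^{-1})^{-1}=(I+\rho R^{-1})(I+\rho R^{-1})^{-1}=I$ does the whole job in one line, and it avoids citing an external identity whose hypotheses one would otherwise have to check. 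You are also right to flag that all factors are rational functions of $R$ and hence commute, which is what licenses your regrouping. One small caveat applying equally to your write-up and the paper's: invertibility of $R$ and $\rho>0$ do \emph{not} by themselves guarantee that $I+\rho R^{-1}=R^{-1}(R+\rho I)$ is invertible (take $R=-\rho I$); the lemma implicitly assumes $-\rho$ is not an eigenvalue of $R$, which is automatic in the paper's application where $R=R_i>0$. Worth a parenthetical, but not a gap relative to the paper's own level of rigor.
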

\begin{proof}
The result follows directly from applying the well-known Woodbury Matrix Identity:
$$
(R + UV^T)^{-1} = R^{-1} - R^{-1}U \left( I + V^T R^{-1}U \right)^{-1} V^T R^{-1}
$$
with the choices $U = \sqrt{\rho} I$ and $V = \sqrt{\rho} I$.
\end{proof}

\section{Problem Formulation}\label{problemformulation}
This section presents the dynamic modeling of the DC MG, which consists of multiple DGs, loads, and transmission lines. Specifically, our modeling approach is motivated by \cite{nahata}, which highlights the role and impact of communication and physical topologies in DC MGs.

\subsection{DC MG Physical Interconnection Topology}
The physical interconnection topology of a DC MG is modeled as a directed connected graph $\mathcal{G}^p =(\mathcal{V},\mathcal{E})$ where $\mathcal{V} \triangleq \mathcal{D} \cup \mathcal{L}$ is bipartite: $\mathcal{D} \triangleq \{\Sigma_i^{DG}, i\in\N_N\}$ (DGs) and $\mathcal{L} \triangleq \{\Sigma_l^{line}, l\in\N_L\}$ (lines). The DGs are interconnected with each other through transmission lines. The interface between each DG and the DC MG is through a point of common coupling (PCC). For simplicity, the loads are assumed to be connected to the DG terminals at the respective PCCs \cite{dorfler2012kron}. Indeed loads can be moved to PCCs using Kron reduction even if they are located elsewhere \cite{dorfler2012kron}.

% A representative diagram of a DC MG is illustrated in Fig. \ref{diagram}.
% \begin{figure}
%     \centering
%     \includegraphics[width=0.7\columnwidth]{}
%     \caption{A simplified diagram of a DC MG network.}
%     \label{diagram}
% \end{figure}
% Given that every DG is only directly connected to the lines, every edge in $\mathcal{E}$ has one node in $\mathcal{D}$ and another node in $\mathcal{L}$. This results in $\mathcal{G}^p$ being a directed bipartite graph. The orientation of each edge reflects an arbitrarily chosen reference direction for positive currents. 
% As mentioned before, the line $\Sigma_l^{line}\equiv (\Sigma_j^{DG}, \Sigma_k^{DG}), l\in\N_L$ has an in-neighbor $\Sigma_j^{DG}$ and an out-neighbor $\Sigma_k^{DG}$ since the current entering a line must exit it. 
% Indeed, each node in $\mathcal{L}$ is connected via two directed edges to two distinct nodes in $\mathcal{D}$. 

To represent the DC MG's physical topology, we use its bi-adjacency matrix $\mathcal{A} \triangleq  \scriptsize 
\begin{bmatrix}
\0 & \mathcal{B} \\
\mathcal{B}^\T & \0
\end{bmatrix},
\normalsize$  
where $\mathcal{B}\in\R^{N \times L}$ is the incident matrix of the DG network (where nodes are just the DGs and edges are just the transmission lines). Note that $\mathcal{B}$ is also known as the ``bi-adjacency'' matrix of $\mathcal{G}^p$ that describes the connectivity between its two types of nodes. In particular, $\mathcal{B} \triangleq [\mathcal{B}_{il}]_{i \in \N_N, l \in \N_L}$ with
$\mathcal{B}_{il} \triangleq \mb{1}_{\{l\in\mathcal{E}_i^+\}} - \mb{1}_{\{l\in\mathcal{E}_i^-\}},$ where $\mathcal{E}_i^+$ and $\mathcal{E}_i^-$ represent the out- and in-neighbors of $\Sigma_i^{DG}$, respectively.

% \begin{equation}\label{Eq:BiAdjacencyMat}
% \mathcal{B}_{il}=
%     \begin{cases}
%       +1 & \text{if } l\in\mathcal{E}_i^+, \\
%       -1 & \text{if } l\in\mathcal{E}_i^-, \\
%       0 & \text{otherwise}. 
%     \end{cases}
% \end{equation}
% for all $i\in\N_N$.

\subsection{Dynamic Model of a Distributed Generator (DG)}
Each DG consists of a DC voltage source, a voltage source converter (VSC), and some RLC components. Each DG $\Sigma_i^{DG},i\in\N_N$ supplies power to a specific ZIP load at its PCC (denoted $\text{PCC}_i$). Additionally, it interconnects with other DG units via transmission lines $\{\Sigma_l^{line}:l \in \mathcal{E}_i\}$. Figure \ref{DCMG} illustrates the schematic diagram of $\Sigma_i^{DG}$, including the local ZIP load, a connected transmission line, and the steady state, local, and distributed global controllers.

By applying Kirchhoff's Current Law (KCL) and Kirchhoff's Voltage Law (KVL) at $\text{PCC}_i$ on the DG side, we get the following equations for $\Sigma_i^{DG},i\in\N_N$:
\begin{equation}
\begin{aligned}\label{DGEQ}
\Sigma_i^{DG}:
\begin{cases}
    C_{ti}\frac{dV_i}{dt} &= I_{ti} - I_{Li}(V_i) - I_i + w_{vi}, \\
L_{ti}\frac{dI_{ti}}{dt} &= -V_i - R_{ti}I_{ti} + V_{ti} + w_{ci},
\end{cases}
\end{aligned}
\end{equation}
where the parameters $R_{ti}$, $L_{ti}$, and $C_{ti}$ 
% {\color{blue}I think here we need to define the dimension of each parameter, e.g., $R_{ti}\in\mathbb{R}_{\geq 0}$} 
represent the internal resistance, internal inductance, and filter capacitance of $\Sigma_i^{DG}$, respectively. The state variables are selected as $V_i$ and $I_{ti}$, where $V_i$ is the $\text{PCC}_i$ voltage and $I_{ti}$ is the internal current. Moreover, $V_{ti}$ is the input command signal applied to the VSC, $I_{Li}(V_i)$ is the total current drawn by the ZIP load, and $I_i$ is the total current injected to the DC MG by $\Sigma_i^{DG}$. We have also included $w_{vi}$ and $w_{ci}$ terms in \eqref{DGEQ} to represent unknown disturbances (assumed bounded and zero mean) resulting from external effects or modeling imperfections. 
% The zero-mean property of these disturbances is important for the stability analysis and ensures that the system converges to the desired equilibrium point in expectation.

Note that $V_{ti}$, $I_{Li}(V_i)$, and $I_i$ terms in \eqref{DGEQ} are respectively determined by the controllers, ZIP loads, and transmission lines at $\Sigma_i^{DG}$. Their details will be provided in the sequel. Let us begin with the total line current $I_i$, which is given by 
\begin{equation}
\label{Eq:DGCurrentNetOut}
I_i
% =\sum_{l\in\mathcal{E}_i^+}\mathcal{B}_{il}I_l+\sum_{l\in\mathcal{E}_i^-}\mathcal{B}_{il}I_l
=\sum_{l\in\mathcal{E}_i}\mathcal{B}_{il}I_l,
\end{equation}
where $I_l, l\in\mathcal{E}_i$ are line currents.

\subsection{Dynamic Model of a Transmission Line}
As shown in Fig. \ref{DCMG}, the power line $\Sigma_l^{line}$ can be represented as an RL circuit with resistance $R_l$ and inductance $L_l$. By applying KVL to $\Sigma_l^{line}$, we obtain:
\begin{equation}\label{line}
    \Sigma_l^{line}: 
    \begin{cases}  
        L_l\frac{dI_l}{dt}=-R_lI_l+\Bar{u}_l + \bar{w}_l,
    \end{cases}
\end{equation}
where $I_l$ is the line current (i.e., the state), $\Bar{u}_l=V_i-V_j=\sum_{i\in \mathcal{E}_l}\mathcal{B}_{il}V_i$ is the voltage differential (i.e., the line input), and $\bar{w}_l(t)$ represents the unknown disturbance (assumed bounded and zero mean) that affects the line dynamics.

% resistance, defined as $\bar{w}_l(t) = -\Delta R_l(t)I_l$, where $\Delta R_l(t)$ represents a zero-mean Gaussian disturbance with variance $\sigma_l^2$. This disturbance term captures the effect of the uncertainty in line resistance. 

% Here, $V_i$ and $I_l$ denote the voltage at $\text{PCC}_i$ and the current through the power line $l$, respectively.
\begin{figure}
    \centering
    \includegraphics[width=1\columnwidth]{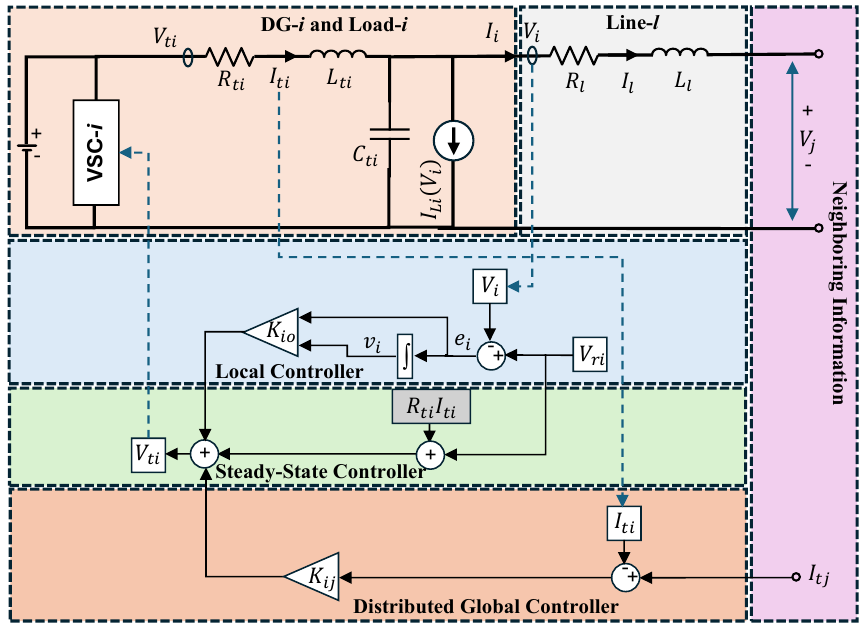}
    \caption{The electrical schematic of DG-$i$, load-$i$, $i\in\N_N$, local controller, distributed global controller, and line-$l$, $l\in\N_L$.}
    \label{DCMG}
\end{figure}

\subsection{Dynamic Model of a ZIP Load} 
Recall that $I_{Li}(V_i)$ in \eqref{DGEQ} (see also Fig. \ref{DCMG}) is the total current drawn by the load at $\Sigma_i^{DG}, i\in\N_N$. As the load is assumed to be a generic ``ZIP'' load, $I_{Li}(V_i)$ takes the form:
\begin{equation}\label{Eq:LoadModel}
I_{Li}(V_i) = I_{Li}^Z(V_i) + I_{Li}^I(V_i) + I_{Li}^P(V_i).
\end{equation}

Here, the ZIP load's components are: 
(i) a constant impedance load: $I_{Li}^{Z}(V_i)=Y_{Li}V_i$, where  $Y_{Li}=1/R_{Li}$ is the conductance of this load component;
(ii) a constant current load: $I_{Li}^{I}(V_i)=\bar{I}_{Li}$, where $\bar{I}_{Li}$ is the current demand of this load component; and
(iii) a constant power load (CPL): $I_{Li}^{P}(V_i)=V_i^{-1}P_{Li}$, where $P_{Li}$ represents the power demand of this load component. 

As opposed to $I_{Li}^{Z}(V_i)$ and $I_{Li}^{I}(V_i)$ (that take an affine linear form in the chosen state variables), the CPL $I_{Li}^{P}(V_i)$ introduces significant stability challenges due to its inherent negative impedance characteristic. This can be observed by examining the small-signal impedance of the CPL:
\begin{equation}
    Z_{CPL} = \frac{\partial V_i}{\partial I_{Li}^P} = \frac{\partial V_i}{\partial (P_{Li}/V_i)} = -\frac{V_i^2}{P_{Li}} < 0.
\end{equation}
This negative impedance characteristic creates a destabilizing effect in the DC MG, as it tends to amplify voltage perturbations rather than dampen them \cite{emadi2006constant}. When a small voltage drop occurs, the CPL draws more current to maintain constant power, further reducing the voltage and potentially leading to voltage collapse if not properly controlled.

The nonlinear nature of CPLs also introduces complexities for the control design. In particular, the nonlinear term $I_{Li}^P(V_i) = V_i^{-1}P_{Li}$ appears in the voltage dynamics (not in current dynamics) channel in \eqref{DGEQ}, and hence cannot be directly canceled using state feedback linearization techniques.  
Consequently, this nonlinearity must be carefully accounted for to ensure system stability and robustness, as often CPLs constitute a significant portion of the total ZIP load. 

As we will see in the subsequent sections, the proposed control framework exploits a key structural property of this nonlinearity,  namely, its sector boundedness, to address these stability and robustness concerns posed by CPLs.

\section{Proposed Hierarchical Control Architecture}\label{Sec:Controller}

The primary control objective of the DC MG is to ensure that the $\text{PCC}_i$ voltage $V_i$ at each $\Sigma_i^{DG}, i\in\N_N $ closely follows a specified reference voltage $V_{ri}$ while maintaining a proportional current sharing among DGs (with respect to their power ratings). In the proposed control architecture, these control objectives are achieved through the complementary action of local and distributed controllers. The local controller at each $\Sigma_i^{DG}$ is a PI controller responsible for the voltage regulation task. On the other hand, the distributed global controller at each $\Sigma_i^{DG}$ is a consensus-based controller responsible for ensuring proper current sharing among DGs.

\subsection{Local Voltage Regulating Controller}
At each $\Sigma_i^{DG}, i\in\N_N$, for its PCC$_i$ voltage $V_i(t)$ to effectively track the assigned reference voltage $V_{ri}(t)$, it is imperative to ensure that the tracking error $e_i(t)\triangleq V_i(t)-V_{ri}(t)$ converges to zero, i.e. $\lim_{t \to \infty} (V_i(t) - V_{ri}) = 0$. To this end, motivated by \cite{tucci2017}, we first include each $\Sigma_i^{DG}, i\in\N_N$ with an integrator state $v_i$ defined as $v_i(t) \triangleq \int_0^t (V_i(\tau) - V_{ri})d\tau$) (see also Fig. \ref{DCMG}) that follows the dynamics 
\begin{equation}\label{error}
    \frac{dv_i(t)}{dt}=e_i(t)=V_i(t)-V_{ri}.
\end{equation}
 Then, $\Sigma_i^{DG}$ is equipped with a local state feedback controller
\begin{equation}\label{Controller}
  u_{iL}(t)\triangleq  k_{i0}^P (V_i-V_{ri}) + k_{i0}^I v_i(t) = K_{i0}x_i(t) - k_{i0}^P V_{ri},
\end{equation}
where 
\begin{equation}\label{Eq:DGstate}
x_i \triangleq \begin{bmatrix}
    V_i &  I_{ti} & v_i
\end{bmatrix}^\top,
\end{equation}
denotes the augmented state (henceforth referred to as the state) of $\Sigma_i^{DG}$ and $K_{i0}\triangleq \begin{bmatrix}
    k_{i0}^P & 0 & k_{i0}^I
\end{bmatrix}\in\mathbb{R}^{1\times3}$ where $K_{i0}$ is the local controller gain.

\subsection{Distributed Global Controller}
We implement distributed global controllers at each DG, and task them with maintaining a proportional current sharing among the DGs. In particular, their objective is to ensure:
\begin{equation}\label{Eq:PropCurrSharing}
\frac{I_{ti}(t)}{P_{ni}} = \frac{I_{tj}(t)}{P_{nj}} = I_s, \quad \forall i,j\in\mathbb{N}_N,
\end{equation}
where $P_{ni}$ and $P_{nj}$ represent the power ratings of DGs  $\Sigma_i^{DG}$ and $\Sigma_j^{DG}$ respectively, and $I_s$ represents the common current sharing ratio that emerges from balancing the total load demand among DGs according to their power ratings.

To address the current sharing, as shown in Fig. \ref{DCMG}, we employ a consensus-based distributed controller
\begin{equation}\label{ControllerG}
    u_{iG}(t)\triangleq  \sum_{j\in\bar{\mathcal{F}}_i^-} k_{ij}^I\left(\frac{I_{ti}(t)}{P_{ni}} - \frac{I_{tj}(t)}{P_{nj}}\right),
\end{equation} 
where each $k_{ij}^I\in\mathbb{R}$ is a consensus controller gain.

% {\color{red} In our ACC paper journal version, lets include voltage regulation component here as well..}

Note that we denote the communication topology as a directed graph $\mathcal{G}^c =(\mathcal{D},\mathcal{F})$ where $\mathcal{D}\triangleq\{\Sigma_i^{DG}, i\in\N_N\}$ and $\mathcal{F}$ represents the set of communication links among DGs. The notations $\mathcal{F}_i^+$ and $\mathcal{F}_i^-$ (see \eqref{ControllerG}) are defined as the communication-wise out- and in-neighbors, respectively.

Finally, the overall control input $u_i(t)$ applied to the VSC of $\Sigma_i^{DG}$ (i.e., as $V_{ti}(t)$ in \eqref{DGEQ}) can be expressed as
\begin{equation}\label{controlinput}
    u_i(t) \triangleq V_{ti}(t) =  u_{iS} + u_{iL}(t) + u_{iG}(t),
\end{equation}
where $u_{iL}$ is given by \eqref{Controller}, $u_{iG}$ is given by \eqref{ControllerG} and $u_{iS}$ represents the steady-state control input.

As we will see in the sequel, steady-state control input $u_{iS}$ in \eqref{controlinput} also plays a crucial role in achieving the desired equilibrium point of the DC MG. In particular, this steady-state component ensures that the system maintains its operating point that satisfies both voltage regulation and current sharing objectives. The specific structure and properties of $u_{iS}$ will be characterized through our stability analysis presented in Sec. \ref{Sec:Equ_Analysis}. 

% The other two terms are $u_{iL}(t)$ (local voltage controller) and $u_{iG}(t)$ (distributed global controller) that are defined in Eqs. \eqref{Controller} and \eqref{ControllerG} respectively.

% With the proper selection of local and distributed global controllers, the voltage of each $\Sigma_i^{DG}$ can robustly track the corresponding reference voltage $V_{ri}(t)$.
% To guarantee voltage restoration, the distributed controller at $\Sigma_i^{DG}$ uses a sparse communication network so as to receive from and send to its communication-wise in- and out-neighbors, respectively.

\subsection{Closed-Loop Dynamics of the DC MG}

By combining \eqref{DGEQ} and \eqref{error}, the overall dynamics of $\Sigma_i^{DG}, i\in\N_N$ can be written as
\begin{subequations}\label{statespacemodel}
\begin{align}
       \frac{dV_i}{dt}&=\frac{1}{C_{ti}}I_{ti}-\frac{1}{C_{ti}}I_{Li}(V_i)-\frac{1}{C_{ti}}I_i+\frac{1}{C_{ti}}w_{vi}(t), \label{Eq:ss:voltages}\\
        \frac{dI_{ti}}{dt}&=-\frac{1}{L_{ti}}V_i-\frac{R_{ti}}{L_{ti}}I_{ti}+\frac{1}{L_{ti}}u_i + \frac{1}{L_{ti}}w_{ci}(t), \label{Eq:ss:currents}\\
        \frac{dv_i}{dt}&=V_i-V_{ri} \label{Eq:ss:ints},
\end{align}
\end{subequations}
where the terms  $I_i$, $I_{Li}(V_i)$, and $u_i$ can be substituted from \eqref{Eq:DGCurrentNetOut}, \eqref{Eq:LoadModel}, and \eqref{controlinput}, respectively. We can restate \eqref{statespacemodel} as
\begin{equation}
\label{Eq:DGCompact}
\dot{x}_i(t)= A_ix_i(t)+B_iu_i(t)+E_id_i(t)+\xi_i(t) + g_i(x_i(t)),
\end{equation}
where $x_i(t)$ is the DG state as defined in \eqref{Eq:DGstate}, $d_i(t)$ is the exogenous input (disturbance) defined as 
\begin{equation}
d_i(t) \triangleq 
\bar{w}_i + w_i(t),
\end{equation}
with 
$\bar{w}_i \triangleq \bm{
    -\Bar{I}_{Li}  & 0 & -V_{ri}
}^\T$ representing the fixed (mean) known disturbance and
$w_i(t) \triangleq \bm{w_{vi}(t) & w_{ci}(t) & 0}^\T$ representing the zero-mean unknown disturbance, $E_i \triangleq \diag(\bm{C_{ti}^{-1} & L_{ti}^{-1} & 1})$ is the disturbance input matrix, $\xi_i(t)$ is the transmission line coupling input defined as 
$\xi_i(t) \triangleq \begin{bmatrix}
    -C_{ti}^{-1}\sum_{l\in \mathcal{E}_i} \mathcal{B}_{il}I_l(t) & 0 & 0
\end{bmatrix}^\top$, $g_i(x_i(t))$ represents the nonlinear vector field due to the CPL defined as 
$$
g_i(x_i(t)) \triangleq C_{ti}^{-1}\begin{bmatrix} -\frac{P_{Li}}{V_i} & 0 & 0 \end{bmatrix}^\T,
$$
and $A_i$ and $B_i$ are system matrices respectively defined as 
\begin{equation}\label{Eq:DG_Matrix_definition}
A_i \triangleq 
\begin{bmatrix}
  -\frac{Y_{Li}}{C_{ti}} & \frac{1}{C_{ti}} & 0\\
-\frac{1}{L_{ti}} & -\frac{R_{ti}}{L_{ti}} & 0 \\
1 & 0 & 0
\end{bmatrix}\ \mbox{ and }\  
B_i \triangleq
\begin{bmatrix}
 0 \\ \frac{1}{L_{ti}} \\ 0
\end{bmatrix}.
% H_i = \begin{bmatrix} 0 \\ 1 \\ 1\end{bmatrix}^\T.
% \normalsize
% H_i \triangleq 
% \scriptsize
% \begin{bmatrix}
% 0 & 0 & 1
% \end{bmatrix}
% \normalsize
% H_i \triangleq 
% \scriptsize
% \bm{0 \\ 0 \\ 1}^\T.
% \normalsize
\end{equation}

Similarly, using \eqref{line}, the state space representation of the transmission line $\Sigma_l^{Line}$ can be written in a compact form:
\begin{equation}\label{Eq:LineCompact}
    \dot{\bar{x}}_l(t) = \bar{A}_l\bar{x}_l(t) + \bar{B}_l\bar{u}_l(t) + \bar{E}_l\bar{w}_l(t),
 \end{equation}
where $\bar{x}_l \triangleq I_l$ is the transmission line state,  $\bar{E}_l \triangleq \bm{\frac{1}{L_{l}}}$ is the disturbance matrix, and $\bar{A}_l$ and $\Bar{B}_l$ are the system matrices respectively defined as
\begin{equation}\label{Eq:LineMatrices}
\bar{A}_l \triangleq 
\begin{bmatrix}
-\frac{R_l}{L_l}
\end{bmatrix}\ 
\mbox{ and }\ 
\Bar{B}_l \triangleq  \begin{bmatrix}
\frac{1}{L_l}
\end{bmatrix}.  
\end{equation}

\subsection{Networked System Model}\label{Networked System Model}

Let us define $u\triangleq[u_i]_{i\in\N_N}^\T$ and $\Bar{u}\triangleq[\Bar{u}_l]_{l\in\N_L}^\T$ respectively as vectorized control inputs of DGs and lines, 
$x\triangleq[x_i^\T]_{i\in\N_N}^\T$ and $\Bar{x}\triangleq[\Bar{x}_l]_{l\in\N_L}^\T$ respectively as the full states of DGs and lines, $w\triangleq[w_i^\T]_{i\in\N_N}^\T$ and $\bar{w}\triangleq[\bar{w}_l]_{l\in\N_L}^\T$ respectively as disturbance inputs of DGs and lines.

Using these notations, we can now represent the closed-loop DC MG as two sets of subsystems (i.e., DGs and lines) interconnected with disturbance inputs through a static interconnection matrix $M$ as shown in Fig. \ref{netwoked}. From comparing Fig. \ref{netwoked} with Fig. \ref{Networked}, it is clear that the DC MG takes a similar form to a standard networked system discussed in Sec. \ref{SubSec:NetworkedSystemsPreliminaries}.

\begin{figure}
    \centering
    \includegraphics[width=0.9\columnwidth]{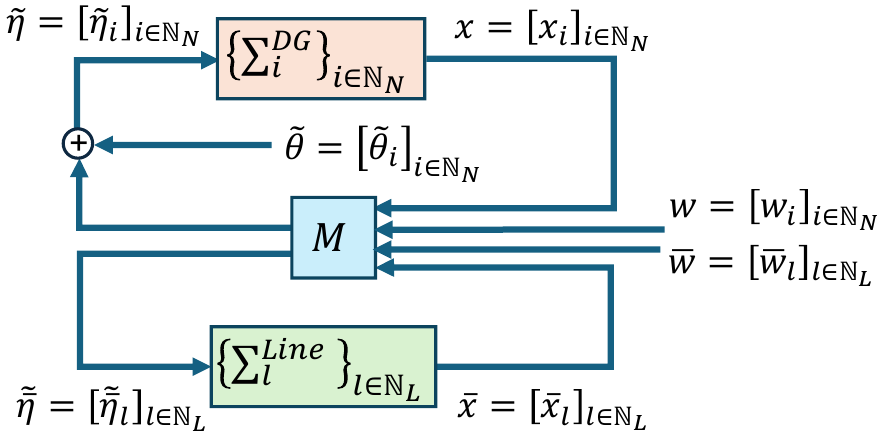}
    \caption{DC MG dynamics as a networked system configuration.}
    \label{netwoked}
\end{figure}

To identify the specific structure of the interconnection matrix $M$ in Fig. \ref{netwoked} (i.e., for DC MG), we need to closely observe how the dynamics of DGs and lines are interconnected and how they are coupled with disturbance inputs.

To this end, we first use \eqref{Eq:DGCompact} and \eqref{controlinput} to state the closed-loop dynamics of $\Sigma_i^{DG}$ as:
\begin{equation}\label{closedloopdynamic}
    \dot{x}_i(t) = (A_i+B_iK_{i0})x_i(t)+\tilde{\eta}_i(t),
\end{equation}
where $\tilde{\eta}_i$ is defined as 
\begin{equation}
\label{Eq:DGClosedLoopDynamics_varphi}
\tilde{\eta}_i(t) \triangleq E_iw_i(t) +\sum_{l\in\mathcal{E}_i}\Bar{C}_{il}\Bar{x}_l(t)+\sum_{j\in\bar{\mathcal{F}}_i^-}K_{ij}x_j(t) + \theta_i,
\end{equation}
% {
% \color{red} Missing in $\tilde{u}_i: B_iu_{iS}$ and $-B_iK_{i0}^P V_{ri}$ (include these two with $E_i w_i$ and define a new term $D_i \theta_i$ )
% }
with
$\Bar{C}_{il} \triangleq -C_{ti}^{-1}\begin{bmatrix}
\mathcal{B}_{il} & 0 & 0
\end{bmatrix}^\top$, 
$$\theta_i \triangleq  E_i\bar{w}_i + B_iu_{iS} - B_ik_{i0}^P V_{ri},$$ and $K_{ij}$ represents the distributed consensus controller gain matrix (for current sharing objective), that takes the form:
\begin{equation}\label{k_ij}
    K_{ii} \triangleq
    \frac{1}{L_{ti}}
    \bm{
        0 & 0 & 0\\
        0 & \frac{\sum_{j\in\mathcal{F}_i^-} k_{ij}^I}{P_{ni}} & 0 \\
        0 & 0 & 0
    },\
        K_{ij} \triangleq
    \frac{1}{L_{ti}}
    \bm{
        0 & 0 & 0 \\
        0 & \frac{-k_{ij}^I}{P_{nj}} & 0 \\
        0 & 0 & 0    
    }.
\end{equation}

From \eqref{k_ij}, observe that only the (2,2)-th element in each block $K_{ij}$ is non-zero. Let $K_I \in \R^{N \times N}$ denote the matrix that contains only these (2,2) block entries, i.e., $K_I = [K_{ij}^{2,2}]_{i,j\in\N_N}$. The controller gain matrix satisfies the weighted Laplacian property:
\begin{equation}
    K_I P_n \textbf{1}_N = 0,
\end{equation}
where $P_n = \diag(\bm{P_{ni}}_{i\in\N_N})$ and $\textbf{1}_N \in \R^N$ is the vector of ones. This ensures that the distributed control vanishes when proportional current sharing is achieved among all DGs.

By vectorizing \eqref{Eq:DGClosedLoopDynamics_varphi} over all $i\in\N_N$, we get 
\begin{equation}\label{Eq:DGClosedLoopInputVector}
    \tilde{\eta} \triangleq  Ew+\Bar{C}\Bar{x}+Kx+\theta,
\end{equation}
where $\tilde{\eta} \triangleq [\tilde{\eta}_i^\T]_{i\in \N_N}^\T$ represents the effective input vector to the DGs (see Fig. \ref{netwoked}), $E \triangleq \diag([E_i]_{i\in\N_N})$ represents the disturbance matrix of DGs, $\Bar{C}\triangleq[\Bar{C}_{il}]_{i\in\mathbb{N}_N,l\in\mathbb{N}_L}$, $K \triangleq [K_{ij}]_{i,j\in\mathbb{N}_N}$, and $\theta\triangleq [\theta_i]_{i\in\N_N}^\T$ represents a constant (time-invariant) input vector applied to the DGs.

\begin{remark}
The block matrices $K$ and $\Bar{C}$ in \eqref{Eq:DGClosedLoopInputVector} are indicative of the communication and physical topologies of the DC MG, respectively. In particular, the $(i,j)$\tsup{th} block in $K$, i.e., $K_{ij}$, indicates a communication link from $\Sigma_j^{DG}$ to $\Sigma_i^{DG}$. Similarly, $(i,l)$\tsup{th} block in $\bar{C}$, i.e., $\bar{C}_{i,l}$ indicates a physical link between $\Sigma_i^{DG}$ and $\Sigma_l^{Line}$.
% Clearly, if we can design $K$ and $\Bar{C}$ in \eqref{Eq:DGClosedLoopInputVector}, we can deduce the communication and physical topologies of the DC MG. This implies that we can optimize the communication and physical topologies of the DC MG by carefully designing these matrices. However, it should also be noted that the structure, as well as some element values of these block matrices, may be predefined. For example, when the physical topology is predefined and fixed, so is the block matrix $\bar{C}$.    
\end{remark}

% {
% \color{red}
% ***Define $\tilde{\bar{u}}_l$ to include both $\bar{u}_l$ and $w_i$ in (20). Write parallel equations to (22) and (23) and (25), but for lines.}

Similarly to DGs, using \eqref{Eq:LineCompact}, we state the closed-loop dynamics of $\Sigma_l^{Line}$ as  
\begin{equation}
    \dot{\bar{x}}_l(t) = \bar{A}_l\bar{x}_l(t) + \tilde{\bar{\eta}}_l(t),
\end{equation}
where $\tilde{\bar{\eta}}_l(t)$ is defined as 
\begin{equation}\label{Eq:linecontroller}
    \tilde{\Bar{\eta}}_l(t) \triangleq \sum_{i\in\mathcal{E}_l}C_{il}x_i(t) + \bar{E}_l\bar{w}_l(t),
\end{equation}
with $C_{il}\triangleq \begin{bmatrix}
    \mathcal{B}_{il} & 0 & 0
\end{bmatrix}$ (note also that $C_{il} = -C_{ti}\bar{C}_{il}^\T$). By vectorizing \eqref{Eq:linecontroller} over all $l\in\N_L$, we get
\begin{equation}\label{ubar}
    \tilde{\Bar{\eta}} \triangleq Cx + \bar{E}\bar{w},
\end{equation}
where $\tilde{\bar{\eta}} \triangleq [\tilde{\bar{\eta}}_l]_{l\in\N_L}^\T$ represents the effective input vector to the lines, $C\triangleq[C_{il}]_{l\in\mathbb{N}_L,i\in\mathbb{N}_N}$ (note also that $C = - \bar{C}^\T C_t$ where $C_t \triangleq \diag([C_{ti}\I_{3}]_{i\in\N_N})$), $\bar{E} \triangleq \diag([\bar{E}_l]_{l\in\N_L})$, and $\bar{w} \triangleq [\bar{w}_l]_{l\in\N_L}$.

% {\color{red}
% This is because
% \begin{align}
% C 
% = [C_{il}]_{l\in\mathbb{N}_L,i\in\mathbb{N}_N} 
% = [(-C_{ti}\bar{C}_{il}^\T)]_{l\in\mathbb{N}_L,i\in\mathbb{N}_N}\\
% = [-C_{ti}\bar{C}_{il}]_{l\in\mathbb{N}_L,i\in\mathbb{N}_N}
% = -[\bar{C}_{il}]_{i\in\mathbb{N}_N,l\in\mathbb{N}_L}C_i = - \bar{C}^\T C_t.
% \end{align}
% }

% {\color{green}
% *** Note that $H$ may be $(N + (0.5*N*(N-1)) \times (3N)$ 

% *** Note that $H$ may also be 
% $(N + N) \times (3N)$
% }

% {\color{red}
% * Lacks a column - 4 inputs ($x,\bar{x},w,\bar{w}$) and 2 outputs ($\tilde{u}$ and $\tilde{\bar{u}}$)}

Finally, using \eqref{Eq:DGClosedLoopInputVector} and \eqref{ubar}, we can identify the interconnection relationship:  
\begin{equation}\nonumber
\bm{\tilde{\eta}^\T & \tilde{\bar{\eta}}^\T
}^\T
= M
\bm{x^\T & \bar{x}^\T & w^\T & \bar{w}^\T}^\T,
\end{equation}
where the interconnection matrix $M$ takes the form:
\begin{equation}\label{Eq:MMatrix}
M \triangleq 
% \begin{bmatrix}
%     M_{\tilde{\eta} x} & M_{\tilde{\eta}\Bar{x}} & M_{\tilde{\eta} w} & M_{\tilde{\eta}\bar{w}} \\
%     M_{\tilde{\Bar{\eta}}x} & M_{\tilde{\Bar{\eta}}\Bar{x}} & M_{\tilde{\Bar{\eta}}w}  & M_{\tilde{\bar{\eta}}\bar{w}}\\
% \end{bmatrix} 
% \equiv
% \scriptsize
\begin{bmatrix}
    K & \Bar{C} & E & \0 \\
    C & \0 & \0 & \bar{E}\\
\end{bmatrix}.
\end{equation}
When the physical topology $\mathcal{G}^p$ is predefined, so are the block matrices $\Bar{C}$ and $C$ (recall $C = -\bar{C}^\T C_t$). This leaves only the block matrix $K$ inside the block matrix $M$ as a tunable quantity to optimize the desired properties of the closed-loop DC MG system. Note that synthesizing $K$ simultaneously determines the distributed global controllers \eqref{ControllerG} and the communication topology $\mathcal{G}^c$. In the following two sections, we formulate this networked system's error dynamics (around a desired operating point)  and provide a systematic dissipativity-based approach to synthesize this block matrix $K$ to enforce dissipativity of the closed-loop error dynamics (from disturbance inputs to a given performance output).

\section{Nonlinear Networked Error Dynamics}\label{Sec:ControlDesign}

This section establishes the mathematical foundation for analyzing the stability and robustness of the DC MG with ZIP loads. We begin with rigorous equilibrium point analysis to characterize steady-state behavior, deriving necessary conditions for simultaneous voltage regulation and current sharing. We then develop nonlinear error dynamics that explicitly account for CPL characteristics, defining error variables relative to the desired operating/equilibrium point. Next, we transform the complex DC MG error dynamics into a standard networked system structure with clearly defined performance outputs and disturbance inputs by constructing a complete state-space representation capturing DG and transmission line error dynamics. This formulation creates the foundation for our subsequent dissipativity-based control design technique, which will be discussed in the next section.

\subsection{Equilibrium Point Analysis of the DC MG}\label{Sec:Equ_Analysis}

In this section, we analyze the equilibrium conditions of the DC MG to establish mathematical relationships between system parameters and steady-state behavior. This analysis is crucial for identifying the necessary conditions for simultaneously achieving voltage regulation and proportional current sharing. We pay particular attention to the impact of CPL components, which introduce nonlinear dynamics and potentially lead to instability in the DC MG system.

\begin{lemma}\label{Lm:equilibrium}
Assuming all zero mean unknown disturbance components to be zero, i.e., $w_i(t)=\0, \forall i\in\N_N$ and $\bar{w}_l(t)=0,\forall l\in\N_L$, for a given reference voltage vector $V_r$, under a fixed control input $u(t) = u_E \triangleq[u_{iE}]_{i\in\N_N}$ defined as
%({\color{red}not necessarily equal to $u_S = [u_{iS}]_{i\in\N_N}$)}  
\begin{equation}
    u_E \triangleq [\I + R_t(\mathcal{B}R^{-1}\mathcal{B}^\T + Y_L)]V_r + R_t(\bar{I}_L + \diag(V_r)^{-1}P_L), 
\end{equation}
there exists an equilibrium point for the DC MG characterized by reference voltage vector $V_r \triangleq [V_{ri}]_{i\in\N_N}^\T$, constant current load vector $\bar{I}_L \triangleq [\bar{I}_{Li}]_{i\in\N_N}$, and CPL vector $P_L \triangleq [P_{Li}]_{i\in\N_N}^\T$, given by:
\begin{equation}\label{Eq:Equilibrium}
\begin{aligned}
V_E &= V_{r},\\
I_{tE} &= (\mathcal{B}R^{-1}\mathcal{B}^\T + Y_L)V_r + \bar{I}_L + \diag(V_r)^{-1}P_L,\\
\bar{I}_E &= R^{-1}\mathcal{B}^\T V_r.
\end{aligned}
\end{equation}
where we define the state equilibrium vectors $V_E\triangleq[V_{iE}]_{i\in\N_N}$,
$I_{tE}\triangleq[I_{tiE}]_{i\in\N_N}$,
$\bar{I}_E\triangleq[\bar{I}_{lE}]_{l\in\N_L}$, 
% $v_E\triangleq[v_{iE}]_{i\in\N_N}$ 
and the system parameters 
% $C_t\triangleq\diag([C_{ti}]_{i\in\N_N})$,
$Y_L\triangleq\diag([Y_{Li}]_{i\in\N_N})$,
% $L_t\triangleq\diag([L_{ti}]_{i\in\N_N})$,
$R_t\triangleq\diag([R_{ti}]_{i\in\N_N})$,
% $\bar{I}_L\triangleq[\bar{I}_{Li}]_{i\in\N_N}$,
% $V_r\triangleq[V_{ri}]_{i\in\N_N}$,
and $R\triangleq\diag([R_l]_{l\in\mathbb{N}_L})$.
\end{lemma}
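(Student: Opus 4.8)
The plan is to verify directly that the claimed triple $(V_E, I_{tE}, \bar I_E)$ in \eqref{Eq:Equilibrium} annihilates the right-hand sides of all three families of dynamic equations --- the DG voltage dynamics \eqref{Eq:ss:voltages}, the DG current dynamics \eqref{Eq:ss:currents}, and the line dynamics \eqref{line} --- when the control input is frozen at $u = u_E$ and all zero-mean disturbances vanish. I would start from the line equations, since they are the simplest: setting $\dot{\bar x}_l = 0$ in \eqref{Eq:LineCompact} gives $0 = -R_l \bar I_{lE} + \bar u_l$, and with $\bar u_l = \sum_{i\in\mathcal E_l}\mathcal B_{il}V_{iE} = \sum_i \mathcal B_{il} V_{ri}$ (the voltage differential across the line, using $V_E = V_r$), this yields $\bar I_{lE} = R_l^{-1}\sum_i \mathcal B_{il}V_{ri}$, which vectorizes precisely to $\bar I_E = R^{-1}\mathcal B^\top V_r$. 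So the third line of \eqref{Eq:Equilibrium} is forced.

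Next I would handle the DG voltage dynamics. Setting $\dot V_i = 0$ in \eqref{Eq:ss:voltages} and multiplying by $C_{ti}$ gives $0 = I_{tiE} - I_{Li}(V_{iE}) - I_{iE}$, where $I_{Li}(V_{iE}) = Y_{Li}V_{ri} + \bar I_{Li} + P_{Li}/V_{ri}$ from \eqref{Eq:LoadModel} and $I_{iE} = \sum_{l\in\mathcal E_i}\mathcal B_{il}\bar I_{lE}$ from \eqref{Eq:DGCurrentNetOut}. Vectorizing, $I_{iE}$ collects to $\mathcal B \bar I_E = \mathcal B R^{-1}\mathcal B^\top V_r$, and the load terms collect to $Y_L V_r + \bar I_L + \diag(V_r)^{-1}P_L$, so $I_{tE} = (\mathcal B R^{-1}\mathcal B^\top + Y_L)V_r + \bar I_L + \diag(V_r)^{-1}P_L$, matching the second line of \eqref{Eq:Equilibrium}. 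The integrator dynamics \eqref{Eq:ss:ints} are handled trivially: $\dot v_i = V_{iE} - V_{ri} = V_{ri} - V_{ri} = 0$ by the first line $V_E = V_r$, so this is consistent (and $v_i$ itself is free at equilibrium, which is why it does not appear in \eqref{Eq:Equilibrium}).

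Finally I would verify the DG current dynamics, which is where the expression for $u_E$ is actually pinned down. Setting $\dot I_{ti} = 0$ in \eqref{Eq:ss:currents} and multiplying by $L_{ti}$ gives $0 = -V_{iE} - R_{ti}I_{tiE} + u_{iE}$, i.e. $u_{iE} = V_{iE} + R_{ti}I_{tiE} = V_{ri} + R_{ti}I_{tiE}$. Substituting the already-derived expression for $I_{tiE}$ and vectorizing with $R_t = \diag([R_{ti}])$ gives exactly $u_E = [\I + R_t(\mathcal B R^{-1}\mathcal B^\top + Y_L)]V_r + R_t(\bar I_L + \diag(V_r)^{-1}P_L)$, which is the stated $u_E$. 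This closes the verification. The only mild subtlety --- and the closest thing to an obstacle --- is the bookkeeping that turns the per-DG and per-line scalar relations into the stated matrix identities: one must be careful that $\sum_{l\in\mathcal E_i}\mathcal B_{il}\bar I_l$ assembles into $\mathcal B\bar I$ and that $\sum_{i\in\mathcal E_l}\mathcal B_{il}V_i$ assembles into $\mathcal B^\top V$ with the correct (transposed) incidence matrix, and that $\diag(V_r)^{-1}P_L$ is the correct vectorization of the CPL term $P_{Li}/V_{ri}$; with the indexing conventions fixed in Sec.~\ref{problemformulation} this is routine. I would also remark that the existence claim is really a consistency claim: we exhibit the point and check it solves the algebraic equations $f_i(x_{iE},u_{iE}) = 0$, so no fixed-point argument is needed, though one should note $V_{ri} > 0$ is implicitly required for $\diag(V_r)^{-1}$ and the CPL term to be well defined.
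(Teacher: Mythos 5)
Your proposal is correct and follows essentially the same route as the paper's own proof: both set the DG voltage, DG current, integrator, and line dynamics to zero, obtain $V_{iE}=V_{ri}$, $\bar I_{lE}=R_l^{-1}\sum_i\mathcal{B}_{il}V_{ri}$, $u_{iE}=V_{ri}+R_{ti}I_{tiE}$, and the expression for $I_{tiE}$, and then vectorize. The only differences are cosmetic --- you process the line equations first and frame the argument as verification of the claimed point rather than derivation from the equilibrium conditions --- and your added remarks (free integrator state $v_{iE}$, the implicit requirement $V_{ri}>0$) are consistent with the paper.
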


\begin{proof}
The equilibrium state of the closed-loop dynamic $\Sigma_i^{DG}$ \eqref{Eq:DGCompact} satisfies:
\begin{equation}
\label{Eq:EqDGCompact}
A_ix_{iE}(t)+B_iu_{iE}(t)+E_id_{iE}(t)+\xi_{iE}(t) = 0,
\end{equation}
where $x_{iE} \triangleq \bm{V_{iE} & I_{tiE} & v_{iE}}^\T$ represents the equilibrium state components of DG, and $w_{iE}\triangleq \bar{w}_i$ and $\xi_{iE}$ represent the equilibrium values of disturbance and interconnection terms, respectively. Thus, we get
\begin{equation}
\begin{aligned}
\begin{bmatrix} -\frac{Y_{Li}}{C_{ti}} & -\frac{1}{C_{ti}} & 0\\ -\frac{1}{L_{ti}} & -\frac{R_{ti}}{L_{ti}} & 0 \\ 1 & 0 & 0\end{bmatrix} \begin{bmatrix} V_{iE} \\ I_{tiE} \\ v_{iE} \end{bmatrix} + \begin{bmatrix} 0 \\ \frac{1}{L_{ti}} \\ 0 \end{bmatrix}u_{iE} + E_i\bar{w}_i + \xi_{iE} = 0.
\end{aligned}
\end{equation}
From respective rows of this matrix equation, we get:
\begin{align}
\label{Eq:steadystatevoltage}
&-\frac{Y_{Li}}{C_{ti}}V_{iE} + \frac{1}{C_{ti}}I_{tiE} - \frac{1}{C_{ti}}\bar{I}_{Li} - \frac{P_{Li}}{C_{ti}V_{iE}} - \frac{1}{C_{ti}}\sum_{l\in \mathcal{E}i} \mathcal{B}_{il}\bar{I}_{lE} = 0,\\
\label{Eq:steadystatecurrent}
&-\frac{1}{L_{ti}}V_{iE} - \frac{R_{ti}}{L_{ti}}I_{tiE} + \frac{1}{L_{ti}}u_{iE} = 0, \\
\label{Eq:steadystateintegrator}
&V_{iE} - V_{ri} = 0.
\end{align}
From the last two equations above, we can obtain
\begin{align}
 \label{Eq:voltageEquil}
    &V_{iE} = V_{ri},\\  
    \label{Eq:controlEquil}
    &u_{iE} = V_{iE} + R_{ti}I_{tiE}.
\end{align}
To simplify the first equation further, we require to know an expression for $\bar{I}_{lE}$. 

Note that the equilibrium state of the $\Sigma_l^{Line}$ \eqref{Eq:LineCompact} satisfies:
\begin{equation}\label{Eq:EqLineCompact}
    \bar{A}_l\bar{x}_{lE}(t) + \bar{B}_l\bar{u}_{lE} + \bar{E}_l\bar{w}_{lE}(t) = 0,
 \end{equation}
where $\bar{x}_{lE} \triangleq \bar{I}_{lE}$ represents the equilibrium state of line. The $\bar{u}_{lE}\triangleq \sum_{i\in \mathcal{E}_l}\mathcal{B}_{il}V_{iE}$ and $\bar{w}_{lE} \triangleq 0$, respectively, represent the equilibrium values of control input and disturbance of lines. Therefore, we get:
\begin{equation}\label{Eq:steadystateLine}
    -\frac{R_l}{L_l}\bar{I}_{lE} + \frac{1}{L_l}\sum_{i\in \mathcal{E}_l}\mathcal{B}_{il}V_{iE} = 0,
\end{equation}
leading to
\begin{equation}
\label{Eq:LineEquil}
    \bar{I}_{lE} = \frac{1}{R_l}\sum_{i\in \mathcal{E}_l}\mathcal{B}_{il}V_{iE} = \frac{1}{R_l}\sum_{j\in \mathcal{E}_l}\mathcal{B}_{jl}V_{jE},
\end{equation}
which can be applied in \eqref{Eq:steadystatevoltage} (together with \eqref{Eq:voltageEquil}) to obtain
\begin{equation}\label{Eq:currentEquil}
    -Y_{Li}V_{ri} + I_{tiE} - \sum_{l\in\mathcal{E}_i}\mathcal{B}_{il}\bigg(\frac{1}{R_l}\sum_{j\in \mathcal{E}_l}\mathcal{B}_{jl}V_{rj}\bigg)- \bar{I}_{Li} - \frac{P_{Li}}{V_{ri}} = 0.
\end{equation}

We next vectorize these equilibrium conditions. Note that, then the control equilibrium equation \eqref{Eq:controlEquil} becomes:
\begin{equation}\label{Eq:controlEquilVec}
    u_E = V_r + R_tI_{tE}.
\end{equation}
Vectorizing the voltage dynamics equation \eqref{Eq:currentEquil}, we get:
\begin{equation}\nonumber
    -Y_L V_r + I_{tE} - \mathcal{B}R^{-1}\mathcal{B}^\T V_r - \bar{I}_L - \diag(V_r)^{-1}P_L = 0,
\end{equation}
leading to
\begin{equation}\nonumber
    I_{tE} = (\mathcal{B}R^{-1}\mathcal{B}^\T + Y_L)V_r + \bar{I}_L + \diag(V_r)^{-1}P_L.
\end{equation}
Therefore, the vectorized control equilibrium equation can be expressed as:
\begin{equation}
\begin{aligned}\nonumber
    u_E &= V_r + R_t I_{tE} \\
    &= V_r + R_t((\mathcal{B}R^{-1}\mathcal{B}^\T + Y_L)V_r + \bar{I}_L + \diag(V_r)^{-1}P_L)\\
    &= [\I + R_t(\mathcal{B}R^{-1}\mathcal{B}^\T + Y_L)]V_r + R_t(\bar{I}_L + \diag(V_r)^{-1}P_L).
    \end{aligned}
\end{equation}
For the equilibrium line currents, by vectorizing \eqref{Eq:LineEquil}, we get 
\begin{equation}\nonumber
    \bar{I}_E = R^{-1}\mathcal{B}^\T V_r.
\end{equation}
This completes the proof, as we have derived all the required equilibrium conditions. 

% To express these equilibrium conditions in vector form, we define 
% $V_r\triangleq[V_{ri}]_{i\in\mathbb{N}_N}$, $\bar{I}_t\triangleq[I_{ti}]_{i\in\mathbb{N}_N}$, $\bar{u}\triangleq[\bar{u}_i]_{i\in\mathbb{N}_N}$,  $\bar{I}\triangleq[\bar{I}_l]_{l\in\mathbb{N}_L}$, $R_t=\diag([R_{ti}])_{N \times N}$, $Y_L\triangleq\diag([Y_{Li}])_{N\times N}$, $\bar{I}_L\triangleq\diag([\bar{I}_{Li}])_{N\times N}$, $B\triangleq\diag[B_{il}]_{N\times L}$, ?????$R\triangleq\diag([R_l])_{L\times L}$.

%% The following is not required in this proof:
% To achieve proportional power sharing among DGs, we impose the following current sharing condition at steady state:
% \begin{equation}
%     \frac{\bar{I}_{ti}}{P_{ni}} = I_s \Longleftrightarrow \bar{I}_{ti} = P_{ni}I_s, \ \forall i\in\mathbb{N}_N
% \end{equation}
% The vectorized current sharing condition becomes:
% \begin{equation}
%     \bar{I}_t = P_n \I_N I_s
% \end{equation}
% where $\mathbf{1}_N=\begin{bmatrix}1 & 1 & ...& 1\end{bmatrix}^\T\in\mathbb{R}^{N}$ and $P_n\triangleq\diag([P_{ni}]_{i\in\N_N})$.

\end{proof}

% \begin{remark}(Uniqueness of the DC MG equilibrium points w.r.t. $V_r$)
% \color{red} Find whether the equilibrium point is unique to the used $V_r$. That means, given $V_r$, all $\bar{u},\bar{V}_t,\bar{I}_t$ and $\bar{I}$ are unique. Discuss the findings with explanations.
\begin{remark}
The uniqueness is mathematically guaranteed because the diagonal matrices $R$, $R_t$, and $Y_L$ have strictly positive elements, making them positive definite, while the incidence matrix $\mathcal{B}$ maintains full rank by virtue of the connected network topology. Consequently, the coefficient matrix $(\mathcal{B}R^{-1}\mathcal{B}^\T + Y_L)$ in \eqref{Eq:Equilibrium} is invertible, which ensures a unique one-to-one mapping from any given reference voltage vector $V_r$ to all equilibrium variables under specified loading conditions.
\end{remark}

% \begin{remark}\label{Lm:currentsharing}(Equilibrium conditions for current sharing)
% \color{red}
% Mention the additional condition required of the equilibrium DG currents to achieve the current sharing goal as:
% \begin{equation}
% \bar{I}_{t} = P_{n} \mb{1}_N I_s
% \end{equation}
% Remind that $I_s$ is a scalar design variable common to the DC MG. 
\begin{remark}\label{Lm:currentsharing}
At the equilibrium, we require the condition for proportional current sharing among DGs to meet (i.e., \eqref{Eq:PropCurrSharing}), and thus, we require
\begin{equation}\label{Eq:currentsharing}
\frac{I_{tiE}}{P_{ni}} = I_s \Longleftrightarrow I_{tiE} = P_{ni}I_s, \ \forall i\in\mathbb{N}_N, 
\end{equation}
which can be expressed in vectorized form as $I_{tE} = P_{n} \mathbf{1}_N I_s$, where $P_n\triangleq\diag([P_{ni}]_{i\in\mathbb{N}_N})$. Using this requirement in \eqref{Eq:controlEquilVec}, we get $u_E = V_r +  R_t P_{n} \mathbf{1}_N I_s,$ i.e., 
$$
u_{iE} = V_{ri} +  R_{ti} P_{ni} I_s, \quad \forall i\in\N_N. 
$$
Therefore, to achieve this particular control equilibrium (which satisfies both voltage regulation and current sharing objectives), we need to select our steady-state control input in \eqref{controlinput} as:
\begin{equation}\label{Eq:Rm:SteadyStateControl}
    u_{iS} = V_{ri} + R_{ti}P_{ni}I_s, \quad \forall i\in\N_N. 
\end{equation}

This is because at the equilibrium point, local control $u_{iL}$ \eqref{Controller} and distributed global control $u_{iG}$ \eqref{ControllerG} components are, by definition, zero for any $i\in\N_N$.
\end{remark}

In conclusion, using Lm. \ref{Lm:equilibrium} and Rm. \ref{Lm:currentsharing}, for the equilibrium of DC MG to satisfy the voltage regulation and current sharing conditions, we require:
\begin{equation}
    \begin{aligned}
        u_E &= [\I + R_t(\mathcal{B}R^{-1}\mathcal{B}^\T + Y_L)]V_r + R_t(\bar{I}_L + \diag(V_r)^{-1}P_L)  \\
        &= V_r +  R_t P_{n} \mathbf{1}_N I_s = u_S\\
        V_E &= V_r, \\
        I_{tE} &= (\mathcal{B}R^{-1}\mathcal{B}^\T + Y_L)V_r + \bar{I}_L + \diag(V_r)^{-1}P_L\\
        &= P_n \textbf{1}_N I_s,\\
        \bar{I}_E &= R^{-1} \mathcal{B}^\T V_r.
    \end{aligned}
\end{equation}
The following theorem formalizes the optimization problem derived from Lm. \ref{Lm:equilibrium} and Rm. \ref{Lm:currentsharing}, for the selection of $V_r$ and $I_s$, that ensures the existence of an equilibrium state that satisfies voltage regulation and current sharing conditions while also respecting reference voltage limits $V_{\min}$ and $V_{\max}$ and the current sharing coefficient $I_s \in [0,1]$. A formal proof is omitted as the result follows directly from the equilibrium relationships and the remarks above.

% These equations establish the existence of an equilibrium point.

% {
% \color{red}
% The above is what we require as $u_{iS}$.
% }

% \begin{remark} (selection of $V_r, I_s$)
%     \color{red}
%     Discuss:
% 1) How this additional condition limits applicable $V_r$ vectors. Basically, we need our $V_r$ and $I_s \in [0,1]$ to satisfy
% $$
% P_{n} \mb{1}_N I_s = (\mathcal{B}R^{-1}\mathcal{B}^\T + Y_L)V_r + \bar{I}_L
% $$
% 2) Provide an LMI formulation to determine these design variables. 

% {
% \color{red}
% The following can be made a theorem:
% }

% Specifically, these design variables must satisfy:
% \begin{equation}
% P_n \mathbf{1}_N I_s = (\mathcal{B}R^{-1}\mathcal{B}^\T + Y_L)V_r + \bar{I}_L
% \end{equation}

% This equality constraint can be reformulated as an LMI problem to determine feasible values of $V_r$ and $I_s$. Consider the following optimization problem:

\begin{theorem}
To ensure the existence of an equilibrium point that satisfies the voltage regulation and current sharing objectives, the reference voltages $V_r$ and current sharing coefficient $I_s$ should be a feasible solution in the optimization problem:
\begin{equation}
\begin{aligned}
\min_{V_r,I_s}\ &\alpha_V\Vert V_r - \bar{V}_r\Vert^2 + \alpha_I I_s \\
\mbox{Sub. to:}\ & V_{\min} \leq V_r \leq V_{\max},
\quad 0 \leq I_s \leq 1,
\end{aligned}
\end{equation}
$$P_n \mathbf{1}_N I_s - (\mathcal{B}R^{-1}\mathcal{B}^\T + Y_L)V_r = \bar{I}_L + \diag(V_r)^{-1}P_L,$$
where $\bar{V}_r$ is a desired reference voltage value, and $\alpha_V > 0$ and $\alpha_I > 0$ are two normalizing cost coefficients.
\end{theorem}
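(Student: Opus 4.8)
The plan is to show that the feasible set of the stated optimization problem is exactly the set of pairs $(V_r, I_s)$ for which the closed-loop DC MG admits an equilibrium satisfying voltage regulation, proportional current sharing, and the physical operating limits; the cost function then merely selects one such pair and plays no role in the equivalence. So the argument is essentially a restatement of the equilibrium relations already derived, organized as a ``necessary and sufficient'' characterization.

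First I would invoke Lm. \ref{Lm:equilibrium}: for any prescribed reference vector $V_r$, under the steady-state input $u_E$ specified there, the DC MG possesses a (unique, by the remark following that lemma) equilibrium with $V_E = V_r$, $I_{tE} = (\mathcal{B}R^{-1}\mathcal{B}^\T + Y_L)V_r + \bar{I}_L + \diag(V_r)^{-1}P_L$, and $\bar{I}_E = R^{-1}\mathcal{B}^\T V_r$. Since $V_E = V_r$, the voltage-regulation requirement $\lim_{t\to\infty}(V_i(t) - V_{ri}) = 0$ holds automatically at this point; the only restriction this imposes is that a physically meaningful reference lie in the admissible band, giving $V_{\min}\le V_r\le V_{\max}$.

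Next I would impose the proportional current-sharing objective \eqref{Eq:PropCurrSharing}. By Rm. \ref{Lm:currentsharing}, this is equivalent to $I_{tE} = P_n\mathbf{1}_N I_s$ for a common ratio $I_s$. Equating this with the expression for $I_{tE}$ from Lm. \ref{Lm:equilibrium} produces exactly the equality constraint $P_n\mathbf{1}_N I_s - (\mathcal{B}R^{-1}\mathcal{B}^\T + Y_L)V_r = \bar{I}_L + \diag(V_r)^{-1}P_L$, while treating $I_s$ as a normalized sharing ratio yields $0\le I_s\le 1$. Conversely, any $(V_r, I_s)$ satisfying these constraints defines, through Lm. \ref{Lm:equilibrium}, an admissible equilibrium at which both control objectives are met, with the corresponding steady-state input fixed by \eqref{Eq:Rm:SteadyStateControl}. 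Hence feasibility of the constraint set is necessary and sufficient for the existence of the desired equilibrium; since $\alpha_V>0$ and $\alpha_I>0$, the objective is a well-posed selection criterion (pulling $V_r$ toward its nominal value $\bar{V}_r$ and keeping the sharing ratio small) that does not alter this equivalence. This establishes the statement.

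The one point I would flag — and essentially the only nontrivial aspect — is that the equality constraint is nonlinear in $V_r$ through the CPL term $\diag(V_r)^{-1}P_L$, so the feasible set is in general nonconvex and cannot be guaranteed nonempty a priori; the theorem is therefore conditional (``provided a feasible solution exists''), and in a given scenario nonemptiness would have to be argued from the loading data, e.g. sufficiently moderate CPL power $P_L$ relative to the admissible voltage window $[V_{\min}, V_{\max}]$ and the current-sharing budget $P_n\mathbf{1}_N$. Everything else is a direct restatement of the equilibrium relations already derived in Lm. \ref{Lm:equilibrium} and Rm. \ref{Lm:currentsharing}, so no additional computation is needed — which is presumably why the authors omit the formal proof.
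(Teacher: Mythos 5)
Your argument is correct and is essentially the proof the paper intends but omits: you combine the equilibrium characterization of Lm.~\ref{Lm:equilibrium} with the proportional current-sharing requirement of Rm.~\ref{Lm:currentsharing} to obtain the equality constraint, and treat the bounds and the objective as selection criteria, exactly as the surrounding text indicates. Your added caveat about the nonconvexity introduced by the CPL term $\diag(V_r)^{-1}P_L$ is also consistent with the paper's own remark that the problem becomes convex only when $P_L=\0$.
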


It is worth noting that the above optimization problem becomes an LMI problem (convex) when the CPL is omitted (i.e., when $P_L=\0$). 
Overall, this formulation ensures proper system operation through multiple aspects. The equality constraint guarantees that the current sharing objective is achieved across all DG units. The reference voltage bounds maintain system operation within safe and efficient limits through the inequality constraints on $V_r$. Furthermore, the constraint on $I_s$ ensures that the current sharing coefficient remains properly normalized for practical implementations. As state earlier, the nonlinear term $\diag(V_r)^{-1}P_L$ introduces additional complexity in determining a feasible set of reference voltages and a current sharing coefficient that simultaneously satisfy voltage regulation and current sharing objectives.

% {\color{red} Try to summarize Lemma 1 and the remarks that followed it.}

% {\color{red}

% Therefore, the equilibrium states are characterized as:
% \begin{equation}
% \bar{x}_i = \begin{bmatrix} \bar{V}_i \\ \bar{I}_{ti} \end{bmatrix} = \begin{bmatrix} V_{ri} \\ P_{ni}I_s \end{bmatrix},
% \end{equation}
% where the shared current ratio $I_s$ satisfies $\sum_{i=1}^N P_{ni}I_s = I_{\text{total}}$, with $I_{\text{total}}$ being the total DC MG load demand.
% }

\subsection{Nonlinear Error Dynamics with CPL}
The network system representation described in Sec. \ref{Networked System Model} can be simplified by considering the error dynamics around the identified equilibrium point in Lm. \ref{Lm:equilibrium}. As we will see in the sequel, the resulting error dynamics can be seen as a networked system (called the networked error system) comprised of DG error subsystems, line error subsystems, external disturbance inputs, and performance outputs.

% Consider the dynamical system described by equations \ref{Eq:steadystatevoltage}-\ref{Eq:steadystatecurrent}, where each DG is governed by:
% \begin{subequations}
% \begin{align}
% \dot{V}_i &= -\frac{Y_{Li}}{C_{ti}}V_i + \frac{1}{C_{ti}}I_{ti} - \frac{1}{C_{ti}}\bar{I}_{Li} - \frac{1}{C_{ti}}\sum_{l\in \mathcal{E}i} \mathcal{B}_{il}I_l \label{Eq:dynamicsystems1} \\
% \dot{I}_{ti} &=  -\frac{1}{L_{ti}}V_i - \frac{R_{ti}}{L_{ti}}I_{ti} + \frac{1}{L_{ti}}u_i \label{Eq:dynamicsystems2}\\
% \dot{v}_i &=V_i - V_{ri} \label{Eq:dynamicsystems3}
% \end{align}
% \end{subequations}
% For each $\Sigma_i^{DG}$, we propose a hierarchical control strategy of the form:
% \begin{equation}
% u_i(t) = \bar{u}_i+ u_{iL}(t) + u_{iG}(t)
% \end{equation}
% where 
% $\bar{u}_i$ represents the fixed control input (defined uniquely by $V_r$ as seen before),
% $u_{iL}(t)$ represents a local voltage regulating controller and $
% u_{iG}(t)$ represents a distributed global current sharing. In particular, 
% \begin{align}
% \bar{u}_i &= V_{ri} + R_{ti}P_{ni}I_s \label{Eq:Fixed_controller}\\
% u_{iL}(t) &= k_{i0}^PV_i(t)  + k_{i0}^Iv_i \label{Eq:Local_Controller}\\
% u_{iG}(t) &= \sum_{j\in\bar{\mathcal{F}}_i^-} k_{ij} (\frac{I_{ti}(t)}{P_{ni}} - \frac{I_{tj}(t)}{P_{nj}}) \label{Eq:Global_Controller}
% \end{align}

We first define error variables that capture deviations from the identified equilibrium:
\begin{subequations}
\begin{align}
    \tilde{V}_i &= V_i - V_{iE} = V_i - V_{ri}, \label{Eq:errorvariable1}\\
    \tilde{I}_{ti} &= I_{ti} - I_{tiE} = {I_{ti} - P_{ni}I_s}, \label{Eq:errorvariable2}\\
     \tilde{v}_i &= v_i - v_{iE}, \label{Eq:errorvariable3}\\
    \tilde{I}_l &= I_{l} - \bar{I}_{lE} = I_l - \frac{1}{R_l}\sum_{i\in\E_l}\mathcal{B}_{il}V_{ri}.
\end{align}
\end{subequations}
% These error variables allow us to express the actual current in terms of its error and steady-state components:
% \begin{equation}
%     I_{ti} = \tilde{I}_{ti} + P_{ni}I_s
% \end{equation}
Now, considering the dynamics \eqref{Eq:ss:voltages}-\eqref{Eq:ss:ints}, equilibrium point established in Lm. \ref{Lm:equilibrium}, and the proposed a hierarchical control strategy $u_i(t)$ \eqref{controlinput}, the error dynamics can then be derived as follows.

The voltage error dynamics can be derived using \eqref{Eq:ss:voltages} and \eqref{Eq:errorvariable1} as:
\begin{equation}\label{Eq:currenterrordynamic}
\begin{aligned}
    \dot{\tilde{V}}_i = &-\frac{Y_{Li}}{C_{ti}}(\tilde{V}_i+V_{ri}) + \frac{1}{C_{ti}}(\tilde{I}_{ti} +P_{ni}I_s) - \frac{1}{C_{ti}}\bar{I}_{Li}\\
    &- \frac{1}{C_{ti}}\sum_{l\in \mathcal{E}i} \mathcal{B}_{il}(\tilde{I}_l + \frac{1}{R_l}\sum_{j\in\E_l}\mathcal{B}_{jl}V_{rj})\\
    &- \frac{1}{C_{ti}}(\tilde{V}_i + V_{ri})^{-1}P_{Li}+ \frac{1}{C_{ti}}w_{vi}\\
    \equiv& \frac{1}{C_{ti}}\Big(\phi_V + \psi_V +g_i(\tilde{V}_i)\Big) + \frac{1}{C_{ti}}w_{vi},
\end{aligned}
\end{equation}
where 
\begin{subequations}
\begin{align}
   \phi_V &\triangleq -Y_{Li}\tilde{V}_i + \tilde{I}_{ti} - \sum_{l\in \mathcal{E}i} \mathcal{B}_{il}\tilde{I}_l,\\
    \psi_V &\triangleq -Y_{Li}V_{ri}+P_{ni}I_s - \bar{I}_{Li} - \sum_{l\in \mathcal{E}i} \frac{\mathcal{B}_{il}}{R_l}\sum_{j\in\E_l}\mathcal{B}_{jl}V_{rj} - \frac{V_{ri}}{P_{Li}}, \label{Eqvoltageerror}\\
    g_i(\tilde{V}_i) &\triangleq V_{ri}^{-1}P_{Li} - (\tilde{V}_i + V_{ri})^{-1}P_{Li}.
\end{align}
\end{subequations}
The current error dynamics can be obtained using \eqref{Eq:ss:currents} and \eqref{Eq:errorvariable2} as:
\begin{equation}\label{Eq:voltageerrordynamic}
\begin{aligned}
    \dot{\tilde{I}}_{ti} &= -\frac{1}{L_{ti}}(\tilde{V}_i+V_{ri}) - \frac{R_{ti}}{L_{ti}}(\tilde{I}_{ti} +P_{ni}I_s) \\
    &+ \frac{1}{L_{ti}}(u_{iS}+k_{i0}^P\tilde{V}_i+k_{io}^I\tilde{v}_i+\sum_{j\in\bar{\mathcal{F}}_i^-} k_{ij}(\frac{\tilde{I}_{ti}}{P_{ni}} - \frac{\tilde{I}_{tj}}{P_{nj}}))\\
    &+ \frac{1}{L_{ti}}w_{ci},\\
    &\equiv \frac{1}{L_{ti}}\Big(\phi_I + \psi_I\Big) + \frac{1}{L_{ti}}w_{ci},
\end{aligned}
\end{equation}
where
\begin{subequations}
\begin{align}
    \phi_I \triangleq&-\tilde{V}_i - R_{ti}\tilde{I}_{ti} + k_{io}^p\tilde{V}_i + k_{io}^I\tilde{v}_i
    +\sum_{j\in\bar{\mathcal{F}}_i^-} k_{ij}(\frac{\tilde{I}_{ti}}{P_{ni}} - \frac{\tilde{I}_{tj}}{P_{nj}})),\\
     \psi_I \triangleq& -V_{ri}-R_{ti}P_{ni}I_s + u_{iS}. \label{Eqcurrenteerror}
\end{align}
\end{subequations}
The integral error dynamics can be achieved by using \eqref{Eq:ss:ints} and \eqref{Eq:errorvariable3} as:
\begin{equation}\label{Eq:integralerrordynamic}
    \dot{\tilde{v}}_i = \tilde{V}_i.
\end{equation}

It is worth noting that, as a consequence of the equilibrium analysis and the steady state control input selection (see \eqref{Eq:currentsharing} and \eqref{Eq:Rm:SteadyStateControl}), the terms $\Psi_V$ \eqref{Eqvoltageerror} and $\Psi_I$ \eqref{Eqcurrenteerror} are canceled. Therefore, for each DG error subsystem $\tilde{\Sigma}_i^{DG}, i\in\N_N$, we have an error state vector $\tilde{x}_i = \begin{bmatrix} \tilde{V}_i, \tilde{I}_{ti}, \tilde{v}_i \end{bmatrix}^\T$ with the dynamics: 
\begin{equation}\label{Eq:DG_error_dynamic}
    \dot{\tilde{x}}_i = (A_i + B_i K_{i0})\tilde{x}_i  + g_i(\tilde{x}_i) + \tilde{u}_i, %\quad y_i = \tilde{x}_i
\end{equation}
where $\tilde{u}_i$ represents the interconnection input combining the effects of both line currents and other DG states, defined as
\begin{equation}
\tilde{u}_i \triangleq u_i + E_iw_i = \sum_{l\in\mathcal{E}_i}\Bar{C}_{il}\tilde{\Bar{x}}_l +  \sum_{j\in\bar{\mathcal{F}}_i^-}K_{ij}\tilde{x}_j + E_iw_i,
\end{equation}
$g_i(\tilde{x}_i)$ is the nonlinear vector due to the CPL, defined as
\begin{equation}\label{nonlinear_vector}
    g_i(\tilde{x}_i) = \bm{\frac{1}{C_{ti}}\big(V_{ri}^{-1}P_{Li} - (\tilde{V}_i + V_{ri})^{-1}P_{Li}\big) \\ 0 \\0
    },
\end{equation}
and the system and disturbance matrices, $A_i$ and $E_i$, respectively, are same as before.

% {
% \color{red} Add a step similar to the error dynamic derivation for DGs for the following error dynamic derivation for the lines (I like to see what is the term canceled out due to equilibrium point analysis). Also, make sure I have integrated the disturbances correctly. Also, I added the local controller $K_{i0}$ inspired by \eqref{closedloopdynamic}, see if it is accurate. 
% }

Following similar steps, we can obtain the dynamics of the transmission line error subsystem $\tilde{\Sigma}_l^{Line}, l\in\N_L$ as:
\begin{equation}\label{Eq:Line_error_dynamic}
    \dot{\tilde{\bar{x}}}_l = \bar{A}_l\tilde{\bar{x}}_l + \tilde{\bar{u}}_l,  %\quad \bar{y}_l  = \tilde{\bar{x}}_l
\end{equation}
where $\tilde{\bar{u}}_l$ represents the line interconnection input influenced by DG voltages and disturbances:
\begin{equation}
    \tilde{\bar{u}}_l = \bar{u}_l + \bar{E}_l \bar{w}_l = \sum_{i\in\mathcal{E}_l}B_{il}\tilde{V}_i  + \bar{E}_l \bar{w}_l.
\end{equation}

To ensure robust stability (dissipativity) of this networked error system, we define local performance outputs as follows. For each DG error subsystem $\tilde{\Sigma}_i^{DG}, i\in\N_N$, we define the performance output as:
\begin{equation}
z_i(t) \triangleq H_i\tilde{x}_i(t),
\end{equation}
where $H_i$ can be selected as $H_i \triangleq \I, \forall i\in\N_N$ (not necessarily). 
Similarly, for each line error subsystem $\tilde{\Sigma}_l^{Line}, l\in\N_L$, we define the performance output as:
\begin{equation}
\bar{z}_l(t) \triangleq \bar{H}_l\tilde{\bar{x}}_l(t),
\end{equation}
where $\bar{H}_l$ can be selected as $\bar{H}_l \triangleq \I, \forall l\in\N_L$ (not necessarily). 

Upon vectorizing these performance outputs over all $i\in\N_N$ and $l\in\N_L$ (respectively), we obtain: 
\begin{equation}\label{z}
\begin{aligned}
     z \triangleq H\tilde{x}\quad \mbox{ and } \quad 
     \bar{z} \triangleq \bar{H}\tilde{\bar{x}}
\end{aligned}
\end{equation} 
where $H \triangleq \diag([H_i]_{i\in\N_N})$ and $\bar{H} \triangleq \diag([\bar{H}_l]_{l\in\N_L})$. This choice of performance output mapping provides a direct correspondence between error subsystem states and the performance outputs.

Defining $z \triangleq \bm{z_i^\T}^\T_{i\in\N_N}$ and $\bar{z} \triangleq \bm{\bar{z}_l^\T}^\T_{l\in\N_L}$, we consolidate the performance outputs and disturbance inputs as
\begin{equation}
    \begin{aligned}
        z_c \triangleq \bm{z^\T & \bar{z}^\T}^\T\quad \mbox{ and } \quad 
        w_c \triangleq \bm{w^\T & \bar{w}^\T}^\T.
    \end{aligned}
\end{equation}
The consolidated disturbance vector $w_c$ affects the networked error dynamics, particularly the DG error subsystems and the line error subsystems, respectively, through the consolidated disturbance matrices $E_c$ and $\bar{E}_c$, defined as:
\begin{equation}
    \begin{aligned}
        E_c \triangleq \bm{E & \0}\quad \mbox{ and }\quad 
        \bar{E}_c \triangleq \bm{\0 & \bar{E}}.
    \end{aligned}
\end{equation}

The zero blocks in the $E_c$ and $\bar{E}_c$ indicate that line disturbances do not directly affect DG error subsystem inputs and vice versa. Analogously, the dependence of consolidated performance outputs on the networked error system states can be described using consolidated performance matrices 
\begin{equation}
    \begin{aligned}
        H_c \triangleq \bm{H \\ \0}\quad \mbox{ and }\quad 
        \bar{H}_c \triangleq \bm{\0 \\ \bar{H}}.
    \end{aligned}
\end{equation}

With these definitions and the derived error subsystem dynamics \eqref{Eq:DG_error_dynamic} and \eqref{Eq:Line_error_dynamic}, it is easy to see that the closed-loop error dynamics of the DC MG can be modeled as a networked error system as shown in Fig: \ref{Fig.DissNetError}. In there, the interconnection relationship between the error subsystems, disturbance inputs and performance outputs is described by:
\begin{equation}
    \begin{bmatrix}
        \tilde{u} & \tilde{\bar{u}} & z_c
    \end{bmatrix}^\T = M \begin{bmatrix}
        \tilde{x} & \tilde{\bar{x}} & w_c
    \end{bmatrix}^\T
\end{equation}
where the interconnection matrix $M$ takes the form:
\begin{equation}\label{Eq:NetErrSysMMat}
    M \triangleq \bm{M_{\tilde{u}x} & M_{\tilde{u}\bar{x}} & M_{\tilde{u}w_c}\\
    M_{\tilde{\bar{u}}x} &  M_{\tilde{\bar{u}}\bar{x}} &  M_{\tilde{\bar{u}}w_c}\\
    M_{z_cx} &  M_{z_c\bar{x}} &  M_{z_cw_c}
        } \equiv 
        \bm{K & \bar{C} & E_c\\
        C & \0 & \bar{E}_c\\
        H_{c} & \bar{H}_{c} & \0
        }.
\end{equation}

\begin{figure}
    \centering
    \includegraphics[width=0.99\columnwidth]{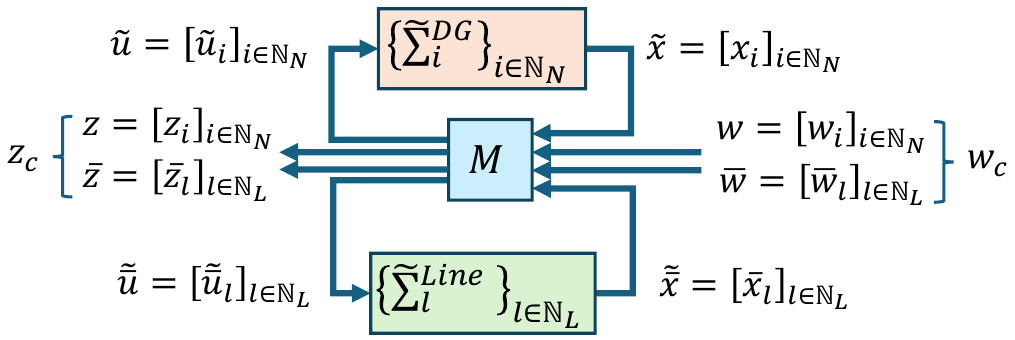}
    \caption{DC MG  error dynamics as a networked system with disturbance inputs and performance outputs.}
    \label{Fig.DissNetError}
\end{figure}

\section{Dissipativity-Based Co-Design Framework for Controllers and Communication Topology}\label{Passivity-based Control}

In this section, we first introduce the global control and topology co-design problem for the DC MG using the derived networked error dynamics representation in the previous section (see Fig. \ref{Fig.DissNetError}). As this co-design problem exploits dissipativity properties of the involved subsystem error dynamics, we next identify necessary conditions for subsystem dissipativity properties. Subsequently, we embed these necessary conditions in the local dissipativating controller design problems implemented at the subsystems. Finally, the overall control design process is summarized.

% The DC MG under consideration can be represented as a networked system comprised of interconnected DG and line subsystems, with both subject to external disturbances and contributing to overall system performance. As shown in Fig. \ref{Fig.DissNetError}, this system includes DG states $x$, line states $\bar{x}$, input to DGs $\tilde{u}$ and lines $\tilde{\bar{u}}$, consolidated disturbances $w_c = [w^\T , \bar{w}^\T]^\T$, and consolidated performance outputs $z_c = [z^\T , \bar{z}^\T]^\T$.

\subsection{Error Subsystem Dissipativity Properties}
Consider the DG error subsystem $\tilde{\Sigma}_i^{DG},i\in\mathbb{N}_N$ \eqref{Eq:DG_error_dynamic} to be $X_i$-dissipative with
\begin{equation}\label{Eq:XEID_DG}
    X_i=\begin{bmatrix}
        X_i^{11} & X_i^{12} \\ X_i^{21} & X_i^{22}
    \end{bmatrix}\triangleq
    \begin{bmatrix}
        -\nu_i\mathbf{I} & \frac{1}{2}\mathbf{I} \\ \frac{1}{2}\mathbf{I} & -\rho_i\mathbf{I}
    \end{bmatrix},
\end{equation}
where $\rho_i$ and $\nu_i$ are the passivity indices of $\tilde{\Sigma}_i^{DG}$. In other words, consider $\tilde{\Sigma}_i^{DG}, i\in\mathbb{N}_N$ to be IF-OFP($\nu_i,\rho_i$). It is worth noting that the IF-OFP($\nu_i,\rho_i$) property assumed here will be enforced through local controller design in Sec. \ref{Sec:Local_Synth} (Th. \ref{Th:LocalControllerDesign}), where the passivity indices $\nu_i$ and $\rho_i$ are determined alongside the controller gains $K_{i0}$.

Similarly, consider the line error subsystem $\tilde{\Sigma}_l^{Line},l\in\mathbb{N}_L$ \eqref{Eq:Line_error_dynamic} to be $\bar{X}_l$-dissipative with
\begin{equation}\label{Eq:XEID_Line}
    \bar{X}_l=\begin{bmatrix}
        \bar{X}_l^{11} & \bar{X}_l^{12} \\ \bar{X}_l^{21} & \bar{X}_l^{22}
    \end{bmatrix}\triangleq
    \begin{bmatrix}
        -\bar{\nu}_l\mathbf{I} & \frac{1}{2}\mathbf{I} \\ \frac{1}{2}\mathbf{I} & -\bar{\rho}_l\mathbf{I}
    \end{bmatrix},
\end{equation}
where $\bar{\rho}_l$ and $\bar{\nu}_l$ are the passivity indices of $\tilde{\Sigma}_l^{Line}$. Regarding these passivity indices, we can provide the following lemma.

\begin{lemma}\label{Lm:LineDissipativityStep}
For each line $\tilde{\Sigma}_l^{Line}, l\in\N_L$ \eqref{Eq:LineCompact}, its passivity indices $\bar{\nu}_l$,\,$\bar{\rho}_l$ assumed in (\ref{Eq:XEID_Line}) are such that the LMI problem: 
\begin{equation}\label{Eq:Lm:LineDissipativityStep1}
\begin{aligned}
\mbox{Find: }\ &\bar{P}_l, \bar{\nu}_l, \bar{\rho}_l\\
\mbox{Sub. to:}\ &\bar{P}_l > 0, \ 
%\scriptsize
\begin{bmatrix}
    \frac{2\bar{P}_lR_l}{L_l}-\bar{\rho}_l & -\frac{\bar{P}_l}{L_l}+\frac{1}{2}\\
    \star & -\bar{\nu}_l
\end{bmatrix}
\normalsize
\geq0, 
\end{aligned}
\end{equation}
is feasible. The maximum feasible values for $\bar{\nu}_l$ and $\bar{\rho}_l$ respectively are $\bar{\nu}_l^{\max}=0$ and $\bar{\rho}_l^{\max}=R_l$, when $\bar{P}_l =  \frac{L_l}{2}$. 
\end{lemma}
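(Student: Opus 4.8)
The plan is to directly invoke Proposition \ref{Prop:linear_X-EID} applied to the scalar LTI line subsystem $\tilde{\Sigma}_l^{Line}$ in \eqref{Eq:LineCompact}, and then reduce the resulting matrix inequality to the $2\times 2$ LMI claimed in \eqref{Eq:Lm:LineDissipativityStep1}. First I would identify the system data: from \eqref{Eq:LineCompact}--\eqref{Eq:LineMatrices}, the line subsystem has $\bar{A}_l = -R_l/L_l$, $\bar{B}_l = 1/L_l$, and the performance output map used in the dissipativity characterization is $\bar{C}_l = 1$, $\bar{D}_l = 0$ (since $\bar{z}_l = \bar{H}_l\tilde{\bar{x}}_l$ with $\bar{H}_l = \I$, i.e.\ unity here). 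With $\bar{X}_l$ as in \eqref{Eq:XEID_Line}, so $\bar{X}_l^{11} = -\bar{\nu}_l$, $\bar{X}_l^{12} = \bar{X}_l^{21} = \tfrac{1}{2}$, $\bar{X}_l^{22} = -\bar{\rho}_l$, I would substitute into the Proposition's LMI. The $(1,1)$ block becomes $-\mathcal{H}(\bar{P}_l\bar{A}_l) + \bar{C}_l^\T \bar{X}_l^{22}\bar{C}_l = 2\bar{P}_l R_l/L_l - \bar{\rho}_l$; the $(1,2)$ block becomes $-\bar{P}_l\bar{B}_l + \bar{C}_l^\T\bar{X}_l^{21} + \bar{C}_l^\T\bar{X}_l^{22}\bar{D}_l = -\bar{P}_l/L_l + \tfrac12$; and the $(2,2)$ block becomes $\bar{X}_l^{11} + \mathcal{H}(\bar{X}_l^{12}\bar{D}_l) + \bar{D}_l^\T\bar{X}_l^{22}\bar{D}_l = -\bar{\nu}_l$. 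This is exactly the claimed LMI, establishing that feasibility of \eqref{Eq:Lm:LineDissipativityStep1} is equivalent to $\tilde{\Sigma}_l^{Line}$ being IF-OFP($\bar{\nu}_l,\bar{\rho}_l$).

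Next I would extract the maximum feasible index values. Applying the Schur complement (Lm.\ \ref{Lm:Schur_comp}) to \eqref{Eq:Lm:LineDissipativityStep1} with respect to the $(1,1)$ entry (assuming it is positive) gives the scalar conditions $\bar{P}_l > 0$, $2\bar{P}_l R_l/L_l - \bar{\rho}_l > 0$, and $-\bar{\nu}_l - (-\bar{P}_l/L_l + \tfrac12)^2/(2\bar{P}_l R_l/L_l - \bar{\rho}_l) \geq 0$. The last inequality forces $\bar{\nu}_l \leq 0$, with equality $\bar{\nu}_l = 0$ attainable only when the off-diagonal term vanishes, i.e.\ $-\bar{P}_l/L_l + \tfrac12 = 0$, giving $\bar{P}_l = L_l/2$. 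Substituting this choice back, the $(1,1)$ entry becomes $2\cdot(L_l/2)\cdot R_l/L_l - \bar{\rho}_l = R_l - \bar{\rho}_l \geq 0$, so the largest feasible $\bar{\rho}_l$ is $\bar{\rho}_l^{\max} = R_l$, achieved simultaneously with $\bar{\nu}_l^{\max} = 0$ at $\bar{P}_l = L_l/2$. I would also note that this simultaneous maximization is consistent: at $\bar{P}_l = L_l/2$, $\bar{\nu}_l = 0$, $\bar{\rho}_l = R_l$, the matrix in \eqref{Eq:Lm:LineDissipativityStep1} is $\diag(0,0) \geq 0$, confirming feasibility at the boundary.

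I do not anticipate a serious obstacle here — the calculation is routine scalar algebra once the Proposition is invoked with the correct system data. The one point requiring mild care is the argument that the two maxima $\bar{\nu}_l^{\max} = 0$ and $\bar{\rho}_l^{\max} = R_l$ are jointly attainable rather than merely individually: one must verify that the single choice $\bar{P}_l = L_l/2$ simultaneously makes the off-diagonal entry zero (decoupling the two diagonal constraints) and admits $\bar{\rho}_l$ up to $R_l$, which the substitution above confirms. A secondary subtlety is ensuring the Schur-complement manipulation is valid, i.e.\ that we may assume the $(1,1)$ block strictly positive; this holds on the relevant feasible region since $\bar{\rho}_l < 2\bar{P}_l R_l/L_l$ whenever $\bar{\rho}_l$ is not at its extreme, and the boundary case is handled directly by inspection of the PSD matrix. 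With these points addressed, the proof is complete.
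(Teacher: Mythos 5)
Your proposal is correct and follows essentially the same route as the paper: invoke Proposition~\ref{Prop:linear_X-EID} with the line data $\bar{A}_l=-R_l/L_l$, $\bar{B}_l=1/L_l$, $C=1$, $D=0$ and the supply rate \eqref{Eq:XEID_Line} to obtain exactly the $2\times 2$ LMI, then use the Schur complement to force $\bar{\nu}_l\leq 0$, set the off-diagonal entry to zero to get $\bar{P}_l=L_l/2$, and read off $\bar{\rho}_l\leq R_l$. The only cosmetic difference is that you take the Schur complement with respect to the $(1,1)$ entry while the paper uses the $(2,2)$ entry, and your explicit check that the two maxima are jointly attained at $\bar{P}_l=L_l/2$ is a welcome (if minor) addition.
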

Note that the feasibility of problem \eqref{Eq:Lm:LineDissipativityStep1} is guaranteed when the line resistance $R_l>0$ and inductance $L_l>0$, which hold for real-world lines.

\begin{proof}
For each $\tilde{\Sigma}^{Line}_l$, $l \in \mathcal{N}_L$ described by \eqref{Eq:Line_error_dynamic}, we need to ensure it is $\bar{X}_l$-dissipative with the passivity indices defined in \eqref{Eq:XEID_Line}. For this, we can apply Prop. \ref{Prop:linear_X-EID} with the given system matrices from \eqref{Eq:LineMatrices} and the specified dissipativity supply rate form in \eqref{Eq:XEID_Line}, leading to the LMI condition:
\begin{equation}
\begin{bmatrix} 2\bar{P}_l\frac{R_l}{L_l} - \bar{\rho}_l & -\bar{P}_l\frac{1}{L_l} + \frac{1}{2} \\ * & -\bar{\nu}_l \end{bmatrix} \geq 0
\end{equation}
Using Lm. \ref{Lm:Schur_comp}, this is positive semidefinite if and only if:
\begin{equation} 
\begin{aligned} 
&1)\ \bar{\nu}_l \leq 0,\\
&2)\ 2\bar{P}_l\frac{R_l}{L_l} - \bar{\rho}_l - \frac{(-\bar{P}_l\frac{1}{L_l} + \frac{1}{2})^2}{-\bar{\nu}_l} \geq 0.
\end{aligned}
\end{equation}

To maximize the passivity indices, we set $\bar{\nu}_l = 0$. With this choice, we need:
\begin{equation}
   \begin{aligned}
&1)\ 2\bar{P}_l\frac{R_l}{L_l} - \bar{\rho}_l \geq 0,\\
&2)\ -\bar{P}_l\frac{1}{L_l} + \frac{1}{2} = 0, 
   \end{aligned} 
\end{equation}
which gives $\bar{P}_l = \frac{L_l}{2}$.

Substituting $\bar{P}_l = \frac{L_l}{2}$ into condition 1:
$R_l - \bar{\rho}_l \geq 0$, implying $\bar{\rho}_l \leq R_l$. Therefore, the maximum feasible values are $\bar{\nu}_l^{\max} = 0$ and $\bar{\rho}_l^{\max} = R_l$, when $\bar{P}_l = \frac{L_l}{2}$.
\end{proof}

While we could identify the conditions required of the passivity indices of the line error dynamics \eqref{Eq:Line_error_dynamic}, achieving a similar feat for DG error dynamics is not straightforward due to the involved CPL nonlinearities (see \eqref{Eq:DG_error_dynamic}). This challenge is addressed in the following subsection.

\subsection{Sector-Bounded Characterization of CPL Nonlinearities}

The destabilizing negative impedance characteristics of CPLs pose significant challenges to controller design, requiring a specialized mathematical treatment for CPL nonlinearities. In the following discussion, we present a systematic approach to incorporate CPL nonlinearities into our dissipativity-based control framework using sector-boundedness concepts. 

First, for notational convenience (although it is a slight abuse of notation), we denote the first (and only non-zero) component of the CPL nonlinearity $g_i(\tilde{x}_i) \in \R^3$ \eqref{nonlinear_vector} as $g_i(\tilde{V}_i) \in \R$ (with a little abuse of notation), where 
\begin{equation} \label{nonlinear_scalar}
g_i(\tilde{V}_i) \triangleq \frac{P_{Li}}{C_{ti}}\left(\frac{1}{V_{ri}} - \frac{1}{\tilde{V}_i + V_{ri}}\right).
\end{equation}
The following lemma establishes the sector boundedness of this $g_i(\tilde{V}_i)$.

\begin{lemma}\label{Lm:sector_bounded}
For the CPL nonlinearity $g_i(\tilde{V}_i)$ \eqref{nonlinear_scalar}, there exist constants $\alpha_i, \beta_i \in \mathbb{R}_{\geq 0}$ such that:
\begin{equation} \label{Eq:Lm:sector_bounded1}
\alpha_i \leq \frac{g_i(\tilde{V}_i)}{\tilde{V}_i} \leq \beta_i, \quad \forall \tilde{V}_i \in [\tilde{V}_i^{\min}, \tilde{V}_i^{\max}] \backslash \{0\},
\end{equation}
where $\alpha_i \triangleq \frac{P_{Li}}{V_{\max}^2}$ and $\beta_i \triangleq \frac{P_{Li}}{V_{\min}^2}$, $V_{\min}$ and $V_{\max}$ represents the assumed operating voltage range, and $\tilde{V}_i^{\min} \triangleq  V_{\min} - V_{ri}$ and $\tilde{V}_i^{\max} \triangleq  V_{\max} - V_{ri}$ denotes the operating voltage error range. Moreover, \eqref{Eq:Lm:sector_bounded1} implies that the CPL nonlinearity $g_i(\tilde{V}_i)$  \eqref{nonlinear_scalar} satisfies the quadratic constraint:
\begin{equation} \label{Eq:Lm:sector_bounded2}  
\begin{bmatrix} \tilde{V}_i \\ g_i(\tilde{V}_i) \end{bmatrix}^\T \begin{bmatrix} -\alpha_i\beta_i & \frac{\alpha_i+\beta_i}{2} \\ \frac{\alpha_i+\beta_i}{2} & -1 \end{bmatrix} \begin{bmatrix} \tilde{V}_i \\ g_i(\tilde{V}_i) \end{bmatrix} \geq 0,
\end{equation}
for all $\tilde{V}_i \in [\tilde{V}_i^{\min}, \tilde{V}_i^{\max}]\backslash \{0\}$.
\end{lemma}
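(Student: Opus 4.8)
The plan is to prove the lemma in two stages: first establish the pointwise sector bound \eqref{Eq:Lm:sector_bounded1}, and then derive the quadratic (matrix) inequality \eqref{Eq:Lm:sector_bounded2} from it by a routine algebraic manipulation.

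For the sector bound, the first step is to simplify the ratio $g_i(\tilde V_i)/\tilde V_i$ so that the apparent singularity at $\tilde V_i = 0$ disappears. Combining the two terms of \eqref{nonlinear_scalar} over a common denominator gives $\frac{1}{V_{ri}} - \frac{1}{\tilde V_i + V_{ri}} = \frac{\tilde V_i}{V_{ri}(\tilde V_i + V_{ri})}$, so the factor $\tilde V_i$ cancels and
$$\frac{g_i(\tilde V_i)}{\tilde V_i} = \frac{P_{Li}}{C_{ti}\,V_{ri}\,(\tilde V_i + V_{ri})} = \frac{P_{Li}}{C_{ti}\,V_{ri}\,V_i},$$
where $V_i \triangleq \tilde V_i + V_{ri}$ is the true PCC voltage; this identity also exhibits $g_i(\tilde V_i)/\tilde V_i$ as a strictly positive function that extends continuously across $\tilde V_i = 0$. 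Next I would use that $\tilde V_i \in [\tilde V_i^{\min},\tilde V_i^{\max}]$ translates to $V_i \in [V_{\min},V_{\max}]$, and likewise $V_{ri}\in[V_{\min},V_{\max}]$, both strictly positive, so the denominator product satisfies $V_{ri}V_i \in [V_{\min}^2,V_{\max}^2]$. Since $C_{ti}>0$ is fixed, monotonicity of $x\mapsto 1/x$ on $(0,\infty)$ then yields $\frac{P_{Li}}{C_{ti}V_{\max}^2} \leq \frac{g_i(\tilde V_i)}{\tilde V_i} \leq \frac{P_{Li}}{C_{ti}V_{\min}^2}$, which are exactly the bounds $\alpha_i \leq g_i(\tilde V_i)/\tilde V_i \leq \beta_i$ claimed (with the $C_{ti}^{-1}$ scaling absorbed into the model-dependent constants); note also $0\leq\alpha_i\leq\beta_i$ since $P_{Li}\geq0$ and $V_{\min}\leq V_{\max}$.

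To pass to \eqref{Eq:Lm:sector_bounded2}, I would observe that "$g_i(\tilde V_i)/\tilde V_i \in [\alpha_i,\beta_i]$" is equivalent (because $\alpha_i\leq\beta_i$) to $\big(\tfrac{g_i(\tilde V_i)}{\tilde V_i}-\alpha_i\big)\big(\beta_i-\tfrac{g_i(\tilde V_i)}{\tilde V_i}\big)\geq 0$, and then multiply this through by $\tilde V_i^2\geq 0$, which preserves the inequality for any sign of $\tilde V_i$, obtaining $\big(g_i(\tilde V_i)-\alpha_i\tilde V_i\big)\big(\beta_i\tilde V_i-g_i(\tilde V_i)\big)\geq 0$. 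Expanding this product yields $-g_i(\tilde V_i)^2 + (\alpha_i+\beta_i)\,\tilde V_i\,g_i(\tilde V_i) - \alpha_i\beta_i\,\tilde V_i^2 \geq 0$, which is precisely the quadratic form written in \eqref{Eq:Lm:sector_bounded2}. For the boundary case $\tilde V_i=0$ one simply notes $g_i(0)=0$, so the form vanishes and the inequality holds trivially there.

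Both steps are elementary; the only places deserving care are (i) detecting and cancelling the removable singularity at $\tilde V_i=0$ before forming the ratio, and (ii) multiplying the sector inequality by $\tilde V_i^2$ rather than $\tilde V_i$, so that the sign is preserved when $\tilde V_i<0$ — i.e., treating the sector condition as a sign condition on a product instead of a conjunction of linear inequalities. I do not anticipate any deeper difficulty.
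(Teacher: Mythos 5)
Your proposal is correct, and both halves reach the paper's conclusions; the second half (multiplying the two sign conditions $\tilde V_i(g_i-\alpha_i\tilde V_i)\geq 0$ and $\tilde V_i(g_i-\beta_i\tilde V_i)\leq 0$, equivalently scaling the sector condition by $\tilde V_i^2$, then expanding into the quadratic form) is essentially identical to the paper's derivation of \eqref{Eq:Lm:sector_bounded2}. Where you genuinely diverge is the first half: the paper bounds the derivative $\partial g_i/\partial\tilde V_i = P_{Li}/(C_{ti}(\tilde V_i+V_{ri})^2)$ over the operating range and then invokes the mean value theorem (together with $g_i(0)=0$) to transfer those bounds to the difference quotient $g_i(\tilde V_i)/\tilde V_i$, whereas you compute that quotient in closed form as $P_{Li}/(C_{ti}V_{ri}(\tilde V_i+V_{ri}))$ by cancelling the removable singularity, and then bound the product $V_{ri}V_i$ directly. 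Your route is more elementary (no calculus) and sidesteps a small infelicity in the paper's phrasing ("for some $0<\xi<\tilde V_i$", which needs restating when $\tilde V_i<0$); both routes rely on the same implicit assumption that $V_{ri}\in[V_{\min},V_{\max}]$, since the paper's intermediate point $\xi+V_{ri}$ ranges between $V_{ri}$ and $V_i$ just as your denominator involves $V_{ri}$ explicitly. You also correctly flag that the constants derived in the proof carry a $C_{ti}^{-1}$ factor absent from the lemma statement's definitions of $\alpha_i,\beta_i$ — an inconsistency present in the paper itself, which your "absorbed into the constants" reading resolves in the same way the paper's proof implicitly does.
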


\begin{proof}
The derivative of the CPL nonlinearity $g_i(\tilde{V}_i)$  \eqref{nonlinear_scalar} with respect to $\tilde{V}_i$ is:
\begin{equation}
\frac{\partial g_i(\tilde{V}_i)}{\partial \tilde{V}_i} = \frac{P_{Li}}{C_{ti}(\tilde{V}_i + V_{ri})^2}
\end{equation}
Since $\tilde{V}_i + V_{ri} \in [V_{\min}, V_{\max}]$ (or, equivalently, $\tilde{V}_i\in [\tilde{V}_i^{\min}, \tilde{V}_i^{\max}]$) within the operating range, we have:
\begin{equation}  
\frac{P_{Li}}{C_{ti}V_{\max}^2} \leq \frac{\partial g_i(\tilde{V}_i)}{\partial \tilde{V}_i} \leq \frac{P_{Li}}{C_{ti}V_{\min}^2}
\end{equation}
By the mean value theorem, for any $\tilde{V}_i \in  [\tilde{V}_i^{\min}, \tilde{V}_i^{\max}] \backslash \{0\}$:
\begin{equation}
\frac{g_i(\tilde{V}_i)}{\tilde{V}_i} = \frac{\partial g_i(\xi)}{\partial \tilde{V}_i}
\end{equation}
for some $0 < \xi < \tilde{V}_i$. Therefore:
\begin{equation} 
\frac{P_{Li}}{C_{ti}V_{\max}^2} \leq \frac{g_i(\tilde{V}_i)}{\tilde{V}_i} \leq \frac{P_{Li}}{C_{ti}V_{\min}^2}
\end{equation}
Setting $\alpha_i = \frac{P_{Li}}{C_{ti}V_{\max}^2}$ and $\beta_i = \frac{P_{Li}}{C_{ti}V_{\min}^2}$, we get \eqref{Eq:Lm:sector_bounded1}. 

To derive the quadratic constraint \eqref{Eq:Lm:sector_bounded2}, we rewrite \eqref{Eq:Lm:sector_bounded1} as:
\begin{equation}  
\alpha_i\tilde{V}_i^2 \leq \tilde{V}_i g_i(\tilde{V}_i) \leq \beta_i\tilde{V}_i^2
\end{equation}
This gives the conditions:
\begin{equation} 
\tilde{V}_i(g_i(\tilde{V}_i) - \alpha_i\tilde{V}_i) \geq 0, \quad
\tilde{V}_i(g_i(\tilde{V}_i) - \beta_i\tilde{V}_i) \leq 0
\end{equation}
By multiplying these conditions, we can (uniquely) obtain  
\begin{equation}
    (g_i(\tilde{V}_i) - \alpha_i\tilde{V}_i)(g_i(\tilde{V}_i) - \beta_i\tilde{V}_i) \leq 0.
\end{equation}
Expanding this expression and rearranging it into a quadratic form in $\bm{\tilde{V}_i & g_i(\tilde{V}_i)}^\T$ we can obtain the condition \eqref{Eq:Lm:sector_bounded2}. 
\end{proof}

Next, we exploit Lm. \ref{Lm:sector_bounded} to analyze the dissipativity properties of the DG error dynamics  $\tilde{\Sigma}_i^{DG}, i\in\N_N$ \eqref{Eq:DG_error_dynamic} (from its input $\tilde{u}_i$ to output $\tilde{x}_i$). For this purpose, let us consider a quadratic storage function $\mathrm{V}_i(\tilde{x}_i) \triangleq \tilde{x}_i^\T P_i \tilde{x}_i$, where $P_i > 0$. Consequently:
\begin{equation}\label{Eq:RateOfStorage1}
\begin{aligned}
\dot{\mathrm{V}}_i(\tilde{x}_i) &= 2\tilde{x}_i^\T P_i(\hat{A}_i\tilde{x}_i + g_i(\tilde{x}_i) + u_i)\\
&= 2\tilde{x}_i^\T P_i\hat{A}_i\tilde{x}_i + 2\tilde{x}_i^\T P_ig_i(\tilde{x}_i) + 2\tilde{x}_i^\T P_iu_i
\end{aligned}
\end{equation}
where $\hat{A}_i \triangleq A_i + B_i K_{i0}$. Now the key challenge is to properly bound the term $2\tilde{x}_i^\T P_ig_i(\tilde{x}_i)$ using the structural properties of $g_i(\tilde{x}_i)$. This challenge is addressed in the following lemma using Lm. \ref{Lm:sector_bounded} and the well-known S-procedure technique \cite{petersen2000s}.

\begin{lemma}\label{Lm:S-Procedure}
Let $T_i \triangleq \begin{bmatrix} 1 & 0 & 0 \end{bmatrix}^\T$ and $\textbf{T}_i \triangleq \scriptsize \bm{T_i & \0 \\ \0 & T_i} \normalsize$. If (and only if) there exists $\lambda_i \in \R_{>0}$ and $R_i=R_i^\T \in \mathbb{R}^{3\times3}$ such that:
\begin{equation}\label{Eq:Lm:S-Procedure1}
\begin{bmatrix} R_i & -P_i \\ -P_i & 0 \end{bmatrix} - \lambda_i \textbf{T}_i \begin{bmatrix} -\alpha_i\beta_i & \frac{\alpha_i+\beta_i}{2} \\ \frac{\alpha_i+\beta_i}{2} & -1 \end{bmatrix} \textbf{T}_i^\T 
 \geq 0,
\end{equation}
then:
\begin{equation}\label{Eq:Lm:S-Procedure2}
2\tilde{x}_i^\T P_i g_i(\tilde{x}_i) \leq \tilde{x}_i^\T R_i \tilde{x}_i.
\end{equation}
\end{lemma}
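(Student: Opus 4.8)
\textbf{Proof proposal for Lemma \ref{Lm:S-Procedure}.}

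The plan is to invoke the S-procedure to convert the single quadratic constraint from Lemma \ref{Lm:sector_bounded} into a bound on the cross term $2\tilde{x}_i^\T P_i g_i(\tilde{x}_i)$. First I would observe that, by Lemma \ref{Lm:sector_bounded}, the vector $\bm{\tilde{V}_i & g_i(\tilde{V}_i)}^\T$ satisfies the quadratic inequality \eqref{Eq:Lm:sector_bounded2}. Since $g_i(\tilde{x}_i)$ has only its first component nonzero (equal to the scalar $g_i(\tilde{V}_i)$) and $T_i = \bm{1 & 0 & 0}^\T$ is the selector for that component, we have $\tilde{V}_i = T_i^\T \tilde{x}_i$ and $g_i(\tilde{x}_i) = T_i\, g_i(\tilde{V}_i)$. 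Hence, stacking $\zeta_i \triangleq \bm{\tilde{x}_i^\T & g_i(\tilde{x}_i)^\T}^\T \in \R^6$, the quadratic constraint \eqref{Eq:Lm:sector_bounded2} can be rewritten as
\begin{equation}\nonumber
\zeta_i^\T \textbf{T}_i \bm{-\alpha_i\beta_i & \frac{\alpha_i+\beta_i}{2} \\ \frac{\alpha_i+\beta_i}{2} & -1} \textbf{T}_i^\T \zeta_i \geq 0,
\end{equation}
because $\textbf{T}_i^\T \zeta_i = \bm{T_i^\T \tilde{x}_i & T_i^\T g_i(\tilde{x}_i)}^\T = \bm{\tilde{V}_i & g_i(\tilde{V}_i)}^\T$ (using that $T_i^\T T_i = 1$).

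Next I would express the desired bound \eqref{Eq:Lm:S-Procedure2} as a quadratic form in the same stacked vector $\zeta_i$: namely $\tilde{x}_i^\T R_i \tilde{x}_i - 2\tilde{x}_i^\T P_i g_i(\tilde{x}_i) \geq 0$ is precisely
\begin{equation}\nonumber
\zeta_i^\T \bm{R_i & -P_i \\ -P_i & \0} \zeta_i \geq 0.
\end{equation}
Thus the claim is: the quadratic form with matrix $\bm{R_i & -P_i \\ -P_i & \0}$ is nonnegative on the set of all $\zeta_i$ satisfying the (single) quadratic inequality above. By the S-procedure (this is where \cite{petersen2000s} enters), with a single quadratic constraint the implication holds if and only if there exists a multiplier $\lambda_i \geq 0$ such that
\begin{equation}\nonumber
\bm{R_i & -P_i \\ -P_i & \0} - \lambda_i \textbf{T}_i \bm{-\alpha_i\beta_i & \frac{\alpha_i+\beta_i}{2} \\ \frac{\alpha_i+\beta_i}{2} & -1} \textbf{T}_i^\T \geq 0,
\end{equation}
which is exactly \eqref{Eq:Lm:S-Procedure1}. (The strictness $\lambda_i > 0$ rather than merely $\lambda_i \geq 0$ is natural here since the constraint is active and $R_i$ is a free decision variable; I would note that $\lambda_i=0$ would force $\bm{R_i & -P_i\\ -P_i & \0}\geq 0$, which is impossible unless $P_i=\0$, contradicting $P_i>0$.) Pre- and post-multiplying this matrix inequality by $\zeta_i^\T$ and $\zeta_i$ and using the constraint gives $\zeta_i^\T \bm{R_i & -P_i\\ -P_i & \0}\zeta_i \geq \lambda_i \cdot (\text{nonnegative}) \geq 0$, which yields \eqref{Eq:Lm:S-Procedure2}; this establishes the "if" direction, and the S-procedure losslessness (single constraint) gives "only if."

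The main obstacle — or rather the point needing the most care — is the lossless (necessary and sufficient) direction of the S-procedure. The standard S-lemma guarantees equivalence only for a single quadratic constraint and typically requires a Slater-type strict feasibility condition (the existence of some $\zeta_i$ strictly satisfying the constraint). I would verify that such a point exists: since $\alpha_i, \beta_i \geq 0$ with $\alpha_i \le \beta_i$, the quadratic form $-\alpha_i\beta_i \tilde{V}_i^2 + (\alpha_i+\beta_i)\tilde{V}_i g_i - g_i^2 = -(g_i - \alpha_i\tilde{V}_i)(g_i-\beta_i\tilde{V}_i)$ is strictly positive for, e.g., $g_i = \frac{\alpha_i+\beta_i}{2}\tilde{V}_i$ with $\tilde{V}_i \neq 0$ whenever $\alpha_i < \beta_i$, so Slater holds generically. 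The only genuinely delicate subtlety is the restriction $\tilde{V}_i \in [\tilde{V}_i^{\min},\tilde{V}_i^{\max}]\backslash\{0\}$ in Lemma \ref{Lm:sector_bounded}: strictly speaking the sector bound is a local (compact-domain) statement, so the matrix inequality \eqref{Eq:Lm:S-Procedure1} guarantees \eqref{Eq:Lm:S-Procedure2} only for $\tilde{x}_i$ whose first component lies in the operating range. I would either state \eqref{Eq:Lm:S-Procedure2} with that same domain qualification or remark that, since $g_i$ is globally sector-bounded by $[0,\beta_i]$ on its entire domain of definition $\tilde{V}_i > -V_{ri}$, the bound in fact extends appropriately; aligning the domain caveats across Lemmas \ref{Lm:sector_bounded} and \ref{Lm:S-Procedure} is the detail I would be most careful to get consistent.
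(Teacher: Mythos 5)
Your proposal is correct and follows essentially the same route as the paper's proof: rewrite the sector-bound inequality from Lemma \ref{Lm:sector_bounded} as a quadratic constraint on the stacked vector $\bm{\tilde{x}_i^\T & g_i(\tilde{x}_i)^\T}^\T$ via the selector $\textbf{T}_i$, express the desired bound \eqref{Eq:Lm:S-Procedure2} as a quadratic form in the same vector, and invoke the S-procedure to obtain \eqref{Eq:Lm:S-Procedure1}. Your additional remarks on the Slater condition for losslessness and on the domain restriction $\tilde{V}_i \in [\tilde{V}_i^{\min},\tilde{V}_i^{\max}]\backslash\{0\}$ are careful points the paper leaves implicit, but they do not change the argument.
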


\begin{proof}
From Lm. \ref{Lm:sector_bounded}, we have:
\begin{equation}
\begin{bmatrix} \tilde{V}_i \\ g_i(\tilde{V}_i) \end{bmatrix}^\T \begin{bmatrix} -\alpha_i\beta_i & \frac{\alpha_i+\beta_i}{2} \\ \frac{\alpha_i+\beta_i}{2} & -1 \end{bmatrix} \begin{bmatrix} \tilde{V}_i \\ g_i(\tilde{V}_i) \end{bmatrix} \geq 0.
\end{equation}
Since $\tilde{V}_i = T_i^\T \tilde{x}_i$ and $g_i(\tilde{V}_i) = T_i^\T g_i(\tilde{x}_i)$, we get:
\begin{equation}
\begin{bmatrix} T_i^\T \tilde{x}_i  \\ T_i^\T g_i(\tilde{x}_i) \end{bmatrix}^\T \begin{bmatrix} -\alpha_i\beta_i & \frac{\alpha_i+\beta_i}{2} \\ \frac{\alpha_i+\beta_i}{2} & -1 \end{bmatrix} \begin{bmatrix} T_i^\T\tilde{x}_i \\ T_i^\T g_i(\tilde{x}_i) \end{bmatrix} \geq 0,
\end{equation}
which is equivalent to 
\begin{equation}\label{Eq:Lm:S-ProcedureStep1}
\begin{bmatrix} \tilde{x}_i  \\ g_i(\tilde{x}_i) \end{bmatrix}^\T \textbf{T}_i \begin{bmatrix} -\alpha_i\beta_i & \frac{\alpha_i+\beta_i}{2} \\ \frac{\alpha_i+\beta_i}{2} & -1 \end{bmatrix} \textbf{T}_i^\T \begin{bmatrix} \tilde{x}_i \\  g_i(\tilde{x}_i) \end{bmatrix} \geq 0.
\end{equation}

On the other hand, the desired bound \eqref{Eq:Lm:S-Procedure2} $2\tilde{x}_i^\T P_i g_i(\tilde{x}_i) \leq \tilde{x}_i^\T R_i \tilde{x}_i$ can be restated as:
\begin{equation} \label{Eq:Lm:S-ProcedureStep2}
\begin{bmatrix} \tilde{x}_i \\ g_i(\tilde{x}_i) \end{bmatrix}^\T \begin{bmatrix} R_i & -P_i \\ -P_i^\T & 0 \end{bmatrix} \begin{bmatrix} \tilde{x}_i \\ g_i(\tilde{x}_i) \end{bmatrix}\geq 0.
\end{equation}

By the matrix S-procedure, \eqref{Eq:Lm:S-ProcedureStep1} implies \eqref{Eq:Lm:S-ProcedureStep2} if and only if there exists $\lambda_i > 0$ and $R_i$ such that \eqref{Eq:Lm:S-Procedure1} holds. 
\end{proof}

As we will see in the sequel, the above result, while it can be used for dissipativity analysis, is not directly applicable for dissipativating control design. This is due to the involved change of variables like $Q_i \triangleq P_i^{-1}$ that are used in formulating LMI problems for such control design taks. To address this preemptively, we provide the following lemma, which identifies a sufficient condition (i.e., slightly weaker than \eqref{Eq:Lm:S-Procedure1}) for the desired bound \eqref{Eq:Lm:S-Procedure2}. It is worth noting that this sufficient condition is only an LMI in design variables $P_i^{-1}, R_i^{-1}$ and $\lambda^{-1}$ (not in $P_i, R_i$  and $\lambda_i$ as in \eqref{Eq:Lm:S-Procedure1}). 

Before proceeding, for notational convenience (in manipulating the condition \eqref{Eq:Lm:S-Procedure1}), let us define 
\begin{equation}\label{Eq:SectorBoundMatrixTransformed}
\Theta \triangleq [\Theta_{kl}]_{k,l\in\N_2} \equiv 
\textbf{T}_i \begin{bmatrix} -\alpha_i\beta_i & \frac{\alpha_i+\beta_i}{2} \\ \frac{\alpha_i+\beta_i}{2} & -1 \end{bmatrix} \textbf{T}_i^\T.
\end{equation}
Consequently, we get 
$\Theta_{11} = -\alpha_i\beta_i T_i T_i^\T$, 
$\Theta_{12} = \frac{\alpha_i + \beta_i}{2} T_i T_i^\T$,
$\Theta_{21} = \Theta_{12}^\T$, and $\Theta_{22} = -T_i T_i^\T$.

\begin{lemma}\label{Lm:S-ProcedureModified}
If there exists $\lambda_i \in \R$ and $R_i=R_i^\T \in \mathbb{R}^{3\times3}$ such that $0 < \lambda_i \leq 1$, $R_i > 0$ and 
\begin{align} \nonumber
\Psi_i \triangleq 
\bm{\I & \0 & \0 \\ \0 & P_i^{-1}T_i & \0} 
+ \bm{\I &\ \0\\ \0 & T_i^\T P_i^{-1} \\ \0 & \0} 
-\bm{R_i^{-1} & \0 \\ \0 & \I } 
\\ \label{Eq:Co:S-Procedure1}
- \bm{\Theta_{11} & \Theta_{12}P_i^{-1}+ \frac{1}{\lambda_i}\I \\
P_i^{-1}\Theta_{21} + \frac{1}{\lambda_i}\I & \0}
 \geq 0,
\end{align}
then the desired bound \eqref{Eq:Lm:S-Procedure2} holds (the column/row of zeros included at the end of the first/second term of $\Psi_i$ serves as a padding to remedy the dimension mismatch caused by $T_i$).
\end{lemma}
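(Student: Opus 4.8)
The plan is to obtain the desired bound \eqref{Eq:Lm:S-Procedure2} by showing that $\Psi_i\geq0$ (together with $0<\lambda_i\leq1$ and $R_i>0$) implies the S-procedure inequality \eqref{Eq:Lm:S-Procedure1} of Lm. \ref{Lm:S-Procedure} for the same $R_i$; since Lm. \ref{Lm:S-Procedure} already gives \eqref{Eq:Lm:S-Procedure1}$\,\Rightarrow\,$\eqref{Eq:Lm:S-Procedure2}, this closes the argument. Equivalently, and this is how I would actually run the computation, I would lower-bound the quantity $\tilde{x}_i^\T R_i \tilde{x}_i - 2\tilde{x}_i^\T P_i g_i(\tilde{x}_i)$ by a sum of blocks each of which is non-negative under $\Psi_i\geq0$ and the sector constraint \eqref{Eq:Lm:sector_bounded2} (equivalently \eqref{Eq:Lm:S-ProcedureStep1}).

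First I would rewrite \eqref{Eq:Lm:S-Procedure1} in the ``inverse'' coordinates that the downstream dissipativating control-design LMIs actually use. Applying a congruence transformation by an invertible matrix built from $\diag(\I, P_i^{-1})$ (congruence preserves positive semi-definiteness in both directions) turns \eqref{Eq:Lm:S-Procedure1} into an equivalent inequality in which $P_i$ is replaced by $Q_i\triangleq P_i^{-1}$; the remaining obstruction is that the resulting diagonal block involving $R_i$ and the $\lambda_i$-weighted $\Theta$-blocks (with $\Theta$ as in \eqref{Eq:SectorBoundMatrixTransformed}) are still nonlinear in the intended design variables $Q_i$, $R_i^{-1}$ and $\lambda_i^{-1}$. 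The central step is to dominate these nonlinear blocks by affine ones in the conservative (sufficient) direction: Lm. \ref{Lm:InvSchur} applied with $P=R_i^{-1}>0$ and $Q=Q_i$ yields a linear lower bound of the form $2Q_i-R_i^{-1}$, which is precisely why $R_i>0$ must be assumed. Because $g_i(\tilde{x}_i)=T_i g_i(\tilde{V}_i)$ embeds the scalar nonlinearity $g_i(\tilde{V}_i)$ into $\R^3$ through $T_i$, this bound has to be invoked in the padded ($n<m$) form of Co. \ref{Co:Lm:InvSchur}; this is exactly what produces the $P_i^{-1}T_i$, $T_i^\T P_i^{-1}$ blocks and the trailing zero column/row flagged in the statement of $\Psi_i$.

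Next I would handle the $\lambda_i$-weighted cross terms. After the congruence, the $\Theta$-contribution contains products of $Q_i$, $T_iT_i^\T$ and $\lambda_i$; I would split the off-diagonal cross terms with a Young-type inequality such as $a^\T b + b^\T a \leq \lambda_i a^\T a + \lambda_i^{-1} b^\T b$, which is where the $\lambda_i^{-1}\I$ blocks in $\Psi_i$ originate, and then use $0<\lambda_i\leq1$ to replace the residual $\lambda_i$-weighted positive semi-definite blocks by their $\lambda_i=1$ counterparts (again in the direction that weakens \eqref{Eq:Lm:S-Procedure1}, since those blocks enter with the favourable sign). Collecting the $Q_iR_iQ_i$ lower bound, the Young split, and the $\lambda_i\leq1$ replacement into a single matrix yields exactly the expression $\Psi_i$ displayed in \eqref{Eq:Co:S-Procedure1}; hence $\Psi_i\geq0$ forces the congruence-transformed \eqref{Eq:Lm:S-Procedure1}, and therefore \eqref{Eq:Lm:S-Procedure1} itself, and Lm. \ref{Lm:S-Procedure} then delivers \eqref{Eq:Lm:S-Procedure2}.

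The hard part is bookkeeping rather than any conceptual hurdle: I must keep the scalar-versus-$\R^3$ embedding through $T_i$ consistent while pushing it through the congruence and the two Schur-type lemmas (this is the role of the explicit padding remark in the statement), and I must order the three replacements — the lower bound for $Q_iR_iQ_i$, the Young split of the $\lambda_i$ cross terms, and the $\lambda_i\leq1$ step — so that every one of them makes \eqref{Eq:Lm:S-Procedure1} \emph{harder}, i.e. so that $\Psi_i\geq0$ comes out as a genuinely \emph{sufficient} condition. Checking that the aggregated bound is \emph{exactly} $\Psi_i$, with no stray terms left over, is the only genuinely tedious portion.
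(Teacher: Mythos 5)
Your overall route coincides with the paper's: show that $\Psi_i \geq 0$ forces \eqref{Eq:Lm:S-Procedure1} and then invoke Lm.~\ref{Lm:S-Procedure}, with the congruence by $\diag(\I,P_i^{-1})$ supplying the change to inverse coordinates, Co.~\ref{Co:Lm:InvSchur} (hence the need for $R_i>0$ and the zero padding forced by the rank-one embedding through $T_i$) supplying the affine bound for the quadratic block, and $0<\lambda_i\leq 1$ absorbing the residual scaling. All of those ingredients are correctly identified and correctly oriented (each relaxation must make \eqref{Eq:Lm:S-Procedure1} harder, so that $\Psi_i\geq 0$ is sufficient).

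The one step that would fail as written is your account of the off-diagonal $\tfrac{1}{\lambda_i}\I$ blocks. They do \emph{not} arise from a Young-type split $a^\T b + b^\T a \leq \lambda_i a^\T a + \lambda_i^{-1} b^\T b$. In the actual argument one forms $\bar{\Psi}_i \triangleq \frac{1}{\lambda_i}\diag(\I,P_i^{-1})\bigl(\bm{R_i & -P_i \\ -P_i & \0} - \lambda_i \Theta\bigr)\diag(\I,P_i^{-1})$, whose off-diagonal blocks are \emph{exactly} $-\tfrac{1}{\lambda_i}\I - \Theta_{12}P_i^{-1}$; these are carried into $\Psi_i$ with equality, purely from the $\tfrac{1}{\lambda_i}$-scaled congruence of the $-P_i$ blocks. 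The only two genuine inequalities in the chain are (i) Co.~\ref{Co:Lm:InvSchur} applied to the padded block matrix with $Q^\T = \bm{\I & \0 \\ \0 & P_i^{-1}T_i}$ and $P = \bm{R_i^{-1} & \0 \\ \0 & \I}$, which upper-bounds the first three terms of $\Psi_i$ by $\bm{R_i & \0 \\ \0 & -P_i^{-1}\Theta_{22}P_i^{-1}}$ (using $\Theta_{22}=-T_iT_i^\T$), and (ii) $R_i \leq \tfrac{1}{\lambda_i}R_i$ on the $(1,1)$ block, which is where $R_i>0$ and $\lambda_i\leq 1$ enter a second time. If you literally performed a Young split on the cross terms you would introduce additional $\lambda_i$-weighted diagonal contributions that are not present in $\Psi_i$, and the final ``check that the aggregate is exactly $\Psi_i$'' that you flag as the tedious part would not close. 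Replace the Young step with the exact congruence identity and your plan goes through unchanged.
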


\begin{proof}
Using Co. \ref{Co:Lm:InvSchur}, the first three terms in $\Psi_i$ can be upper bounded by the product 
\begin{align}\nonumber
\bm{\I & \0 & \0 \\ \0 & P_i^{-1}T_i & \0}
\bm{R_i & \0 \\ \0 & \I } 
\bm{\I &\ \0\\ \0 & T_i^\T P_i^{-1} \\ \0 & \0} \\
= \bm{R_i & \0 \\ \0 & P_i^{-1}T_i T_i^\T P_i^{-1}} = \bm{R_i & \0 \\ \0 & -P_i^{-1} \Theta_{22} P_i^{-1}},
\end{align}
where the last step is from the definition \eqref{Eq:SectorBoundMatrixTransformed} $\Theta_{22} = - T_i T_i^\T$, i.e., a unique consequence of sector bounded nature of the CPLs. Consequently, an upper bound for $\Psi_i$ can be written as 
\begin{align*}
\bm{R_i & \0 \\ \0 & -P_i^{-1} \Theta_{22} P_i^{-1}}
- \bm{\Theta_{11} & \Theta_{12}P_i^{-1}+ \frac{1}{\lambda_i}\I \\
P_i^{-1}\Theta_{21} + \frac{1}{\lambda_i}\I & \0}\\
= 
\bm{R_i & - \frac{1}{\lambda_i}\I \\ - \frac{1}{\lambda_i}\I & \0 }
- \bm{\Theta_{11} & \Theta_{12}P_i^{-1} \\
P_i^{-1}\Theta_{21} & P_i^{-1} \Theta_{22} P_i^{-1}}\\
\leq \bm{\frac{1}{\lambda_i} R_i & - \frac{1}{\lambda_i}\I \\ - \frac{1}{\lambda_i}\I & \0 }
- \bm{\Theta_{11} & \Theta_{12}P_i^{-1} \\
P_i^{-1}\Theta_{21} & P_i^{-1} \Theta_{22} P_i^{-1}}\\
= \frac{1}{\lambda_i}\bm{\I & \0 \\ \0 & P_i^{-1}}
\left( 
\bm{R_i & - P_i \\ - P_i & \0 }
- \lambda_i \Theta \right)\bm{\I & \0 \\ \0 & P_i^{-1}} 
\triangleq \bar{\Psi}_i
\end{align*}

Finally, combining the above result together with the facts that $\text{\eqref{Eq:Co:S-Procedure1}}$, $\bar{\Psi}_i \geq \Psi_i$, $P_i > 0$, $\lambda_i > 0$ and  \eqref{Eq:Lm:S-Procedure1} $\implies$ \eqref{Eq:Lm:S-Procedure2}, we can conclude that 
\begin{align*}
\text{\eqref{Eq:Co:S-Procedure1}} \iff \Psi_i > 0 \implies \bar{\Psi}_i \geq  0& \\
\iff \bm{R_i & - P_i \\ - P_i & \0 } - \lambda_i \Theta > 0& \iff \text{\eqref{Eq:Lm:S-Procedure1}} \implies \text{\eqref{Eq:Lm:S-Procedure2}}.    
\end{align*}
This completes the proof.
\end{proof}

\begin{theorem}\label{Th:Local_CPL}
Under the sector-bounded property \eqref{Lm:sector_bounded} established in Lm. \ref{Lm:sector_bounded} and the quadratic bound  \eqref{Eq:Lm:S-Procedure2} established in Lm. \ref{Lm:S-ProcedureModified}, the DG error dynamics $\tilde{\Sigma}_i: \tilde{u}_i \rightarrow \tilde{x}_i, i\in\N_N$ \eqref{Eq:DG_error_dynamic} can be made IF-OFP($\nu_i$, $\rho_i$) (as assumed in \eqref{Eq:XEID_DG}) via designing the local controller gains matrix $K_{i0}$ \eqref{Controller} using the LMI problem:
\begin{equation}\label{Eq:Th:Local_CPL} 
\begin{aligned}
&\mbox{Find: }\ \tilde{K}_{i0},\ \tilde{P}_i,\ \tilde{R}_i,\ \tilde{\lambda}_i,\ \nu_i,\ \tilde{\rho}_i,\\
&\mbox{Sub. to: }\ \tilde{P}_i > 0,\ \tilde{R}_i > 0,\ \tilde{\lambda}_i > 1\\
&\begin{bmatrix}
\tilde{\rho}_i\I + \tilde{R}_i & \tilde{R}_i & \0 & \0 \\
\tilde{R}_i & \tilde{R}_i & \tilde{P}_i & \0 \\
\0 & \tilde{P}_i & -\mathcal{H}(A_i\tilde{P}_i + B_i\tilde{K}_{i0}) & -\I + \frac{1}{2}\tilde{P}_i \\ 
\0 & \0 & -\I + \frac{1}{2}\tilde{P}_i & -\nu_i\I    
\end{bmatrix} > 0,\\
& \bm{\I & \0 & \0 \\ \0 & \tilde{P}_i T_i & \0} 
+ \bm{\I &\ \0\\ \0 & T_i^\T \tilde{P}_i \\ \0 & \0} 
-\bm{\tilde{R}_i & \0 \\ \0 & \I } \\
&- \bm{\Theta_{11} & \Theta_{12}\tilde{P}_i + \tilde{\lambda}_i\I \\
\tilde{P}_i\Theta_{21} + \tilde{\lambda}_i\I & \0} 
\geq 0,    
\end{aligned}
\end{equation}
where $K_{i0} \triangleq \tilde{K}_{i0}\tilde{P}_i^{-1}$. 
\end{theorem}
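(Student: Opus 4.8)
The plan is to follow the standard dissipativity-via-storage-function route, couple it with the CPL bound from Lm.~\ref{Lm:S-ProcedureModified}, and then convert the resulting bilinear matrix inequality into the two stated LMIs through a congruence transformation together with a block Schur complement. Concretely, I would take the quadratic storage function $\mathrm{V}_i(\tilde{x}_i) = \tilde{x}_i^\T P_i \tilde{x}_i$ with $P_i>0$ and, along \eqref{Eq:DG_error_dynamic}, compute $\dot{\mathrm{V}}_i$ as in \eqref{Eq:RateOfStorage1}, where $\hat{A}_i \triangleq A_i+B_iK_{i0}$. The IF-OFP($\nu_i,\rho_i$) requirement (with $X_i$ as in \eqref{Eq:XEID_DG} and output $\tilde{x}_i$) reads $\dot{\mathrm{V}}_i \leq -\nu_i\Vert\tilde{u}_i\Vert^2 + \tilde{u}_i^\T\tilde{x}_i - \rho_i\Vert\tilde{x}_i\Vert^2$; substituting $\dot{\mathrm{V}}_i$ and using the bound $2\tilde{x}_i^\T P_i g_i(\tilde{x}_i) \leq \tilde{x}_i^\T R_i \tilde{x}_i$ from Lm.~\ref{Lm:S-ProcedureModified} (valid on the operating voltage range via Lm.~\ref{Lm:sector_bounded}), a sufficient condition is the matrix inequality
\begin{equation*}
\bm{-\mathcal{H}(P_i\hat{A}_i) - R_i - \rho_i\I & -P_i + \tfrac{1}{2}\I \\ -P_i + \tfrac{1}{2}\I & -\nu_i\I} \geq 0,
\end{equation*}
which is bilinear in $(P_i,K_{i0})$ through the term $P_iB_iK_{i0}$.

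Next I would apply the congruence transformation by $\diag(P_i^{-1},\I)$ with the changes of variables $\tilde{P}_i \triangleq P_i^{-1}$ and $\tilde{K}_{i0} \triangleq K_{i0}\tilde{P}_i$. This removes the $P_iB_iK_{i0}$ bilinearity, since $\tilde{P}_i\mathcal{H}(P_i\hat{A}_i)\tilde{P}_i = \mathcal{H}(A_i\tilde{P}_i + B_i\tilde{K}_{i0})$, but it leaves the two genuinely nonlinear terms $\tilde{P}_iR_i\tilde{P}_i$ and $\rho_i\tilde{P}_i^2$ in the $(1,1)$ block. To linearize these, I introduce $\tilde{R}_i \triangleq R_i^{-1}$ and $\tilde{\rho}_i \triangleq \rho_i^{-1}$ and recognize that the transformed inequality is exactly the Schur complement of the first $4\times4$ LMI of \eqref{Eq:Th:Local_CPL} with respect to its leading $2\times2$ block $\bm{\tilde{\rho}_i\I + \tilde{R}_i & \tilde{R}_i \\ \tilde{R}_i & \tilde{R}_i}$: inverting that block yields $\bm{\tfrac{1}{\tilde{\rho}_i}\I & -\tfrac{1}{\tilde{\rho}_i}\I \\ -\tfrac{1}{\tilde{\rho}_i}\I & \tilde{R}_i^{-1}+\tfrac{1}{\tilde{\rho}_i}\I}$, and multiplying on both sides by the off-diagonal coupling block $\bm{\0 & \0 \\ \tilde{P}_i & \0}$ produces precisely $\diag\!\big(\tilde{P}_i\tilde{R}_i^{-1}\tilde{P}_i + \tfrac{1}{\tilde{\rho}_i}\tilde{P}_i^2,\ \0\big) = \diag(\tilde{P}_iR_i\tilde{P}_i + \rho_i\tilde{P}_i^2,\ \0)$, which are exactly the two nonlinear terms to be moved out. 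Hence, by Lm.~\ref{Lm:Schur_comp}, the first LMI of \eqref{Eq:Th:Local_CPL} is equivalent to $\tilde{R}_i>0$, $\tilde{\rho}_i>0$, together with the (congruence-undone) dissipativity inequality above. Separately, applying the same substitutions $\tilde{P}_i=P_i^{-1}$, $\tilde{R}_i=R_i^{-1}$, $\tilde{\lambda}_i=\lambda_i^{-1}$ to condition \eqref{Eq:Co:S-Procedure1} of Lm.~\ref{Lm:S-ProcedureModified} reproduces verbatim the second LMI of \eqref{Eq:Th:Local_CPL}, with $\tilde{\lambda}_i>1 \Leftrightarrow 0<\lambda_i<1$ and $\tilde{R}_i>0\Leftrightarrow R_i>0$ matching the hypotheses of that lemma.

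Putting it together: feasibility of the two LMIs (with $\tilde{P}_i>0$, $\tilde{R}_i>0$, $\tilde{\lambda}_i>1$) gives, via Lm.~\ref{Lm:S-ProcedureModified}, the CPL bound $2\tilde{x}_i^\T P_i g_i(\tilde{x}_i)\leq \tilde{x}_i^\T R_i\tilde{x}_i$, and, via the Schur complement plus the inverse congruence transform, the dissipativity matrix inequality displayed above; combining the two yields $\dot{\mathrm{V}}_i(\tilde{x}_i) \leq s(\tilde{u}_i,\tilde{x}_i)$ along \eqref{Eq:DG_error_dynamic}, i.e., $\tilde{\Sigma}_i^{DG}$ is IF-OFP($\nu_i,\rho_i$) with $\rho_i=\tilde{\rho}_i^{-1}$, storage function $\mathrm{V}_i=\tilde{x}_i^\T\tilde{P}_i^{-1}\tilde{x}_i$, and recovered controller gain $K_{i0}=\tilde{K}_{i0}\tilde{P}_i^{-1}$, as claimed.

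I expect the main obstacle to be the middle step: after the congruence transformation the $(1,1)$ block still carries the two nonlinear terms $\tilde{P}_iR_i\tilde{P}_i$ and $\rho_i\tilde{P}_i^2$, and the non-obvious part is seeing that \emph{both} can be absorbed simultaneously by one $2\times2$ Schur-complement block of the specific form $\bm{\tilde{\rho}_i\I+\tilde{R}_i & \tilde{R}_i \\ \tilde{R}_i & \tilde{R}_i}$; verifying this requires the $2\times2$ block inverse and the triple product $U^\T W^{-1}U$ and checking that the cross terms cancel. A secondary bookkeeping hazard is keeping the four inverse changes of variables ($\tilde{P}_i,\tilde{R}_i,\tilde{\rho}_i,\tilde{\lambda}_i$) consistent across both LMIs and confirming compatibility with the sign/size restrictions required by Lm.~\ref{Lm:S-ProcedureModified} (in particular $0<\lambda_i\leq1$) and with $P_i>0$.
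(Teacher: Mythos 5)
Your proposal is correct and follows essentially the same route as the paper: the quadratic storage function, the CPL bound $2\tilde{x}_i^\T P_i g_i(\tilde{x}_i)\leq \tilde{x}_i^\T R_i\tilde{x}_i$ secured through the condition of Lm.~\ref{Lm:S-ProcedureModified}, the congruence by $\diag(P_i^{-1},\I)$ with the changes of variables $\tilde{P}_i=P_i^{-1}$, $\tilde{K}_{i0}=K_{i0}\tilde{P}_i$, $\tilde{R}_i=R_i^{-1}$, $\tilde{\rho}_i=\rho_i^{-1}$, $\tilde{\lambda}_i=\lambda_i^{-1}$, and Schur complementation to linearize the $(1,1)$ block. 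The only divergence is mechanical: the paper reaches the $4\times4$ LMI by first Schur-complementing to expose $(R_i+\rho_i\I)^{-1}$ and then splitting that inverse via the Woodbury identity (Lm.~\ref{Lm:Woodbury}) before a second Schur step, whereas you take a single block Schur complement of the $4\times4$ LMI with respect to its leading $2\times2$ block and verify the block inverse directly --- both computations establish the same equivalence, and your verification that the triple product recovers exactly $\tilde{P}_iR_i\tilde{P}_i+\rho_i\tilde{P}_i^2$ is accurate.
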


% \begin{bmatrix} \rho_i\I & P_i & \0 \\ P_i & -\mathcal{H}(A_iP_i + B_i\tilde{K}_{i0}) + R_i & -\I + \frac{1}{2}P_i \\ \0 & -\I + \frac{1}{2}P_i & -\nu_i\I \end{bmatrix} > 0,
% and $\tilde{R}_i \triangleq R_i^{-1}$, $\tilde{P}_i \triangleq P_i^{-1}$ and $\tilde{\lambda}_i \triangleq \frac{1}{\lambda_i}$

\begin{proof}
Consider a quadratic storage function $\mathrm{V}_i(\tilde{x}_i) = \tilde{x}_i^\T P_i \tilde{x}_i$, where $P_i > 0$. The derivative of the storage function along the system trajectory can be evaluated as:
\begin{equation}\label{Eq:Th:Local_CPLStep1}
\begin{aligned}
\dot{\mathrm{V}}_i(\tilde{x}_i) &= 2\tilde{x}_i^\T P_i(\hat{A}_i\tilde{x}_i + g_i(\tilde{x}_i) + \tilde{u}_i)\\
&= 2\tilde{x}_i^\T P_i\hat{A}_i\tilde{x}_i + 2\tilde{x}_i^\T P_ig_i(\tilde{x}_i) + 2\tilde{x}_i^\T P_i\tilde{u}_i\\
&\leq \tilde{x}_i^\T \left(\mathcal{H}(P_iA_i + P_i B_i K_{i0}) + R_i\right)\tilde{x}_i + 2\tilde{x}_i^\T P_i\tilde{u}_i,
\end{aligned}
\end{equation}
where $\hat{A}_i \triangleq A_i + B_i K_{i0}$ and \eqref{Eq:Lm:S-Procedure2} have been used. Note that, to use \eqref{Eq:Lm:S-Procedure2}, based on Lm. \ref{Lm:S-ProcedureModified}, we require the inclusion of the constraint \eqref{Eq:Co:S-Procedure1}. This constraint translates to the last LMI constraint in \eqref{Eq:Th:Local_CPLStep1} under the change of variables $\tilde{R}_i \triangleq R_i^{-1}$, $\tilde{P}_i \triangleq P_i^{-1}$ and $\tilde{\lambda}_i \triangleq \frac{1}{\lambda_i}$. 

For the dissipativity property IF-OFP($\nu_i$, $\rho_i$) (from input $\tilde{u}_i$ to output $\tilde{x}_i$), we require:
\begin{equation}
\dot{\mathrm{V}}_i(\tilde{x}_i) \leq \begin{bmatrix} \tilde{u}_i \\ \tilde{x}_i \end{bmatrix}^\T \begin{bmatrix} -\nu_i\I & \frac{1}{2}\I \\ \frac{1}{2}\I & -\rho_i\I \end{bmatrix} \begin{bmatrix} \tilde{u}_i \\ \tilde{x}_i \end{bmatrix}.
\end{equation}
Applying \eqref{Eq:Th:Local_CPLStep1} and rearranging leads to the condition:
$$
\bm{\tilde{x}_i \\ \tilde{u}_i}^\T \Phi_i \bm{\tilde{x}_i \\  \tilde{u}_i} \geq 0 \iff  \Phi_ \geq 0,
$$
where $\Phi_i$ takes the form 
\begin{align*}
\Phi_i \triangleq \bm{
-\mathcal{H}(P_iA_i + P_i B_i K_{i0}) - (R_i + \rho_i\I)  & \frac{1}{2}\I - P_i \\ 
\frac{1}{2}\I - P_i & -\nu_i \I} \geq 0.
\end{align*}

Since $P_i$ and $K_{i0}$ appear in a bilinear manner in the above obtained $\Phi_i$ expression, we pre-and post multiply $\Phi_i$ with $\diag([P_i^{-1}, \I])$ and apply a change of variables $\tilde{P}_i \triangleq P_i^{-1}$ and $\tilde{K}_{i0} \triangleq K_{i0}\tilde{P}_i$, to obtain an equivalent condition for $\Phi_i \geq 0$ as 
\begin{equation*}
\bm{
-\mathcal{H}(A_i\tilde{P}_i + B_i \tilde{K}_{i0}) - \tilde{P}_i(R_i + \rho_i\I)\tilde{P}_i  & \frac{1}{2}\tilde{P}_i - \I \\ 
\frac{1}{2}\tilde{P}_i - \I & -\nu_i \I
} \geq 0.   
\end{equation*}

Next, assuming $R_i + \rho_i \I > 0$ (as justified in As. \ref{As:NegativeDissipativity} and Lm. \ref{Lm:S-ProcedureModified}) and applying Lm. \ref{Lm:Schur1}, we get an equivalent condition for $\Phi_i \geq 0$ as 
\begin{equation} \label{Eq:Th:Local_CPLStep2}
\bm{
(R_i + \rho_i\I)^{-1} & \tilde{P}_i & \0 \\
\tilde{P}_i & -\mathcal{H}(A_i\tilde{P}_i + B_i \tilde{K}_{i0}) & \frac{1}{2}\tilde{P}_i - \I \\ 
\0 & \frac{1}{2}\tilde{P}_i - \I & -\nu_i \I
} \geq 0. 
\end{equation}
Using the result established in Lm. \ref{Lm:Woodbury}, we get,  
$$
(R_i + \rho_i I)^{-1} = R_i^{-1} - R_i^{-1} \left( \rho_i^{-1} \I + R_i^{-1} \right)^{-1} R_i^{-1}. 
$$
This, when applied to \eqref{Eq:Th:Local_CPLStep2}, takes a form where  Lm. \ref{Lm:Schur1} can be re-applied. Consequently, we can obtain an equivalent condition for $\Phi_i \geq 0$ as 
\begin{equation}
\bm{
\rho_i^{-1} \I + R_i^{-1} &  R_i^{-1} & \0 & \0 \\
R_i^{-1} & R_i^{-1} & \tilde{P}_i & \0 \\
\0 & \tilde{P}_i & -\mathcal{H}(A_i\tilde{P}_i + B_i \tilde{K}_{i0}) & \frac{1}{2}\tilde{P}_i - \I \\ 
\0 & \0 & \frac{1}{2}\tilde{P}_i - \I & -\nu_i \I 
} \geq 0. 
\end{equation}
Finally, using the said change of variables, the above constraint leads to the main LMI constraint in \eqref{Eq:Th:Local_CPL}, which completes the proof.
\end{proof}

\subsection{Global Control and Topology Co-Design}

The local controllers \eqref{Controller} regulate the voltage at each DG while ensuring that closed-loop DG dynamics satisfy the required dissipativity properties established in Sec. \ref{Sec:Diss_Property}. Given these subsystem properties, we now synthesize the interconnection matrix $M$ \eqref{Eq:NetErrSysMMat} (see Fig. \ref{Fig.DissNetError}), particularly its block $K$, using Prop. \ref{synthesizeM}. Note that, by synthesizing $K=[K_{ij}]_{i,j\in\N_N}$, we can uniquely determine the consensus-based distributed global controller gains $\{k_{ij}^I:i,j\in\mathbb{N}_N\}$ \eqref{k_ij} (required in \eqref{ControllerG} to ensure the current sharing goal), along with the required communication topology $\mathcal{G}^c$. Note also that, when designing $K$ via Prop. \ref{synthesizeM}, we particularly enforce the closed-loop DC MG error dynamics to be $\textbf{Y}$-dissipative with $\textbf{Y} \triangleq \scriptsize \bm{\gamma^2\I & 0 \\ 0 & -\I}$ (see Rm. \ref{Rm:X-DissipativityVersions}) to prevent/bound the amplification of disturbances affecting the performance (voltage regulation and current sharing). The following theorem formulates this distributed global controller and communication topology co-design problem. 

\begin{theorem}\label{Th:CentralizedTopologyDesign}
The closed-loop networked error dynamics of the DC MG (see in Fig. \ref{Fig.DissNetError}) can be made finite-gain $L_2$-stable with an $L_2$-gain $\gamma$ (where $\Tilde{\gamma} \triangleq \gamma^2<\bar{\gamma}$ and $\bar{\gamma}$ is prespecified) from unknown disturbances $w_c(t)$ to performance output $z_c(t)$, by synthesizing the interconnection matrix block $M_{\tilde{u}x}=K$ (\ref{Eq:MMatrix}) via solving the LMI problem:
\begin{equation}
\label{Eq:Th:CentralizedTopologyDesign0}
\begin{aligned}
\min_{\substack{Q,\{p_i: i\in\N_N\},\\
\{\bar{p}_l: l\in\N_L\}, \tilde{\gamma}, S}} &\sum_{i,j\in\N_N} c_{ij} \Vert Q_{ij} \Vert_1 +c_1 \tilde{\gamma} + \alpha\text{tr}(S), \\
\mbox{Sub. to:}\ &p_i > 0,\ \forall i\in\N_N,\ 
\bar{p}_l > 0,\ \forall l\in\N_L,\\   
\ &\mbox{\eqref{globalcontrollertheorem}: } W + S > 0, \ S \geq 0,\ \text{tr}(S) \leq \eta,\\
\ &0 < \tilde{\gamma} < \bar{\gamma},\\
& Q_I P_n \textbf{1}_N = 0,
\end{aligned}
\end{equation}
as $K = (\textbf{X}_p^{11})^{-1} Q$ and $Q_I = \bm{Q_{ij}^{2,2}}_{i,j\in\N_N}$, where 
$\textbf{X}^{12} \triangleq 
\diag([-\frac{1}{2\nu_i}\I]_{i\in\N_N})$, 
$\textbf{X}^{21} \triangleq (\textbf{X}^{12})^\T$,
$\Bar{\textbf{X}}^{12} \triangleq 
\diag([-\frac{1}{2\Bar{\nu}_l}\I]_{l\in\N_L})$,
$\Bar{\textbf{X}}^{21} \triangleq (\Bar{\textbf{X}}^{12})^\T$, 
$\textbf{X}_p^{11} \triangleq 
\diag([-p_i\nu_i\I]_{i\in\N_N})$, 
$\textbf{X}_p^{22} \triangleq 
\diag([-p_i\rho_i\I]_{i\in\N_N})$, 
$\Bar{\textbf{X}}_{\bar{p}}^{11} 
\triangleq \diag([-\bar{p}_l\bar{\nu}_l\I]_{l\in\N_L})$, 
$\Bar{\textbf{X}}_{\bar{p}}^{22} 
\triangleq \diag([-\bar{p}_l\bar{\rho}_l\I]_{l\in\N_L})$, and $\tilde{\Gamma} \triangleq \tilde{\gamma}\I$. 
The structure of $Q\triangleq[Q_{ij}]_{i,j\in\N_N}$ mirrors that of $K\triangleq[K_{ij}]_{i,j\in\N_N}$ (i.e., only the middle element is non-zero in each block $Q_{ij}$, see \eqref{k_ij}). 
The coefficients $c_1>0$ and $c_{ij}>0,\forall i,j\in\N_N$ are predefined cost coefficients corresponding to the $L_2$-gain (control cost) and communication links (communication cost), respectively. The matrix $S$ is a slack matrix included for numerical stability of the used LMI solver, where the slack coefficients $\alpha \geq 0$ and $\eta \geq 0$ respectively impose soft and hard constraints on $S$.
\end{theorem}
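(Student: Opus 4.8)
The plan is to obtain the result as a direct specialization of Prop.~\ref{synthesizeM} to the networked error system of the DC MG constructed in Sec.~\ref{Sec:ControlDesign} (Fig.~\ref{Fig.DissNetError}), whose interconnection matrix is the structured $M$ of \eqref{Eq:NetErrSysMMat}. Accordingly, three things have to be carried out: (i) confirm that the hypotheses As.~\ref{As:NegativeDissipativity}--\ref{As:PositiveDissipativity} of Prop.~\ref{synthesizeM} are met; (ii) substitute the DC MG subsystem data and the chosen supply rates into the generic LMI \eqref{NSC4YEID}, reducing the free design variables to the single block $K$ together with the scalars $p_i,\bar p_l,\tilde\gamma$; and (iii) verify that the change of variables $Q\triangleq\textbf{X}_p^{11}K$ respects the admissible structure of $K$ (communication-link sparsity and the weighted-Laplacian condition).

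For step (i): choosing $\textbf{Y}\triangleq\diag(\gamma^2\I,-\I)$ makes the target property $L2G(\gamma)$ by Rm.~\ref{Rm:X-DissipativityVersions}, and since $\textbf{Y}^{22}=-\I<0$, As.~\ref{As:NegativeDissipativity} holds. Each $\tilde{\Sigma}_i^{DG}$ is IF-OFP($\nu_i,\rho_i$) by \eqref{Eq:XEID_DG}, a property enforced by the local design Th.~\ref{Th:Local_CPL} whose LMI forces the diagonal block $-\nu_i\I>0$, so $X_i^{11}=-\nu_i\I>0$; each $\tilde{\Sigma}_l^{Line}$ is IF-OFP($\bar\nu_l,\bar\rho_l$) by Lm.~\ref{Lm:LineDissipativityStep}, where $\bar\nu_l$ may be taken strictly negative (admissible since \eqref{Eq:Lm:LineDissipativityStep1} is feasible for every $R_l,L_l>0$), so $\bar X_l^{11}=-\bar\nu_l\I>0$; hence As.~\ref{As:PositiveDissipativity} holds and Prop.~\ref{synthesizeM} is applicable.

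For step (ii): in \eqref{Eq:NetErrSysMMat} only $M_{\tilde u x}=K$ is a design degree of freedom, whereas $M_{\tilde u\bar x}=\bar C$, $M_{\tilde{\bar u}x}=C$, $M_{\tilde{\bar u}\bar x}=\0$, $M_{\tilde u w_c}=E_c$, $M_{\tilde{\bar u}w_c}=\bar E_c$, $M_{z_c x}=H_c$, $M_{z_c\bar x}=\bar H_c$, $M_{z_c w_c}=\0$ are fixed by the (prespecified) physical topology and by the chosen performance/disturbance channels. Passing these through the change of variables of Prop.~\ref{synthesizeM} makes the $L$-blocks attached to the fixed parts determined, affine functions of $p_i,\bar p_l$ (e.g.\ $L_{u\bar y}=\textbf{X}_p^{11}\bar C$, $L_{\bar u y}=\bar{\textbf{X}}_{\bar p}^{11}C$, $L_{\bar u\bar y}=\0$), leaving $Q\triangleq L_{uy}=\textbf{X}_p^{11}K$ as the sole matrix unknown. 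Inserting also the supply-rate data $\textbf{X}^{12}=(X_i^{11})^{-1}X_i^{12}=-\tfrac{1}{2\nu_i}\I$ (blockwise), its barred analogue, the matrices $\textbf{X}_p^{11},\textbf{X}_p^{22},\bar{\textbf{X}}_{\bar p}^{11},\bar{\textbf{X}}_{\bar p}^{22}$ exactly as in the statement, and $\textbf{Y}^{11}=\tilde\gamma\I$, $\textbf{Y}^{12}=\textbf{Y}^{21}=\0$, $\textbf{Y}^{22}=-\I$, collapses \eqref{NSC4YEID} into the displayed LMI \eqref{globalcontrollertheorem}, i.e.\ $W>0$. Because every fixed block enters \eqref{NSC4YEID} pre-/post-multiplied by an $\textbf{X}_p^{11}$- or $\bar{\textbf{X}}_{\bar p}^{11}$-type factor (hence linearly in $p_i,\bar p_l$), and $Q,\tilde\gamma$ enter linearly as well, $W$ is affine in $(Q,\{p_i\},\{\bar p_l\},\tilde\gamma)$, so $W>0$ is a bona fide LMI; by Prop.~\ref{synthesizeM} and Rm.~\ref{Rm:X-DissipativityVersions} its feasibility renders the networked error system $L2G(\sqrt{\tilde\gamma})$, i.e.\ finite-gain $L_2$-stable from $w_c$ to $z_c$ with gain $\gamma=\sqrt{\tilde\gamma}$, and minimizing $\tilde\gamma$ subject to $0<\tilde\gamma<\bar\gamma$ tightens this bound.

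The step I expect to demand the most care is (iii). Since $\textbf{X}_p^{11}=\diag(-p_i\nu_i\I)$ is block-diagonal and invertible (by $p_i>0$, $\nu_i<0$), the map $K\mapsto Q=\textbf{X}_p^{11}K$ acts blockwise as $Q_{ij}=-p_i\nu_i K_{ij}$; thus $Q$ inherits the sparsity pattern of $K$ in \eqref{k_ij} (only the $(2,2)$ entry of each block is nonzero), imposing that pattern directly on $Q$ is equivalent, and $K=(\textbf{X}_p^{11})^{-1}Q$ returns a structurally admissible gain. Likewise $Q_I=\diag(-p_i\nu_i)K_I$, so the weighted-Laplacian requirement $K_I P_n\mathbf{1}_N=0$ (which makes the distributed controller \eqref{ControllerG} vanish once proportional current sharing is reached) is equivalent to the constraint $Q_I P_n\mathbf{1}_N=0$ listed in \eqref{Eq:Th:CentralizedTopologyDesign0}. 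Finally, appending the convex objective $\sum_{i,j}c_{ij}\Vert Q_{ij}\Vert_1+c_1\tilde\gamma+\alpha\,\mathrm{tr}(S)$ --- whose $\ell_1$ terms penalize communication links and thereby turn the mere feasibility problem into a control-and-topology \emph{co-design} --- together with the numerical slack relaxation $W+S>0$, $S\geq0$, $\mathrm{tr}(S)\leq\eta$, keeps the program convex; choosing $S=\0$ (always feasible whenever $W>0$ is) recovers the exact $L_2$-gain guarantee. Hence any feasible solution of \eqref{Eq:Th:CentralizedTopologyDesign0} yields, via $K=(\textbf{X}_p^{11})^{-1}Q$, the consensus gains $\{k_{ij}^I\}$ of \eqref{k_ij}, the communication topology $\mathcal{G}^c$, and the claimed bound. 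Beyond this, the residual obstacle is pure bookkeeping --- tracking the $\textbf{X}_p$-factors carefully enough to see that $W$ is affine --- which is precisely what the block-diagonality of $\textbf{X}_p^{11}$ and $\bar{\textbf{X}}_{\bar p}^{11}$ guarantees.
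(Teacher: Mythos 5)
Your proposal is correct and follows essentially the same route as the paper: both specialize Prop.~\ref{synthesizeM} to the networked error system of Fig.~\ref{Fig.DissNetError} with the IF-OFP supply rates of \eqref{Eq:XEID_DG}--\eqref{Eq:XEID_Line} and $\textbf{Y}=\diag(\gamma^2\I,-\I)$, so that \eqref{NSC4YEID} collapses to the condition $W>0$ of \eqref{globalcontrollertheorem} with $Q=\textbf{X}_p^{11}K$ as the only matrix unknown. Your write-up is in fact more explicit than the paper's terse proof on two points it leaves implicit --- that $\bar\nu_l$ must be taken strictly negative (rather than the maximal value $0$ from Lm.~\ref{Lm:LineDissipativityStep}) for As.~\ref{As:PositiveDissipativity} to apply, and that the slack relaxation $W+S>0$ recovers the exact $L_2$-gain certificate only when $S=\0$.
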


% \begin{remark}
% \color{blue}
% To ensure proportional current sharing at equilibrium, the constraint $Q_I P_n \textbf{1}_N = \0$, where $P_n = \diag(\bm{P_{ni}}_{i\in\N_N})$, must be added to \eqref{Eq:Th:CentralizedTopologyDesign0} to enforce the weighted Laplacian structure. 
% \end{remark}

\begin{proof}
The proof follows by considering the closed-loop DC MG (shown in Fig. \ref{Fig.DissNetError}) as a networked system and applying the subsystem dissipativity properties assumed in \eqref{Eq:XEID_DG} and \eqref{Eq:XEID_Line} to the interconnection topology synthesis result given in Prop. \ref{synthesizeM}. We model the DG error subsystems as IF-OFP($\nu_i,\rho_i$) and line error subsystems as  IF-OFP($\Bar{\nu}_l,\Bar{\rho}_l$), which are secured through local controller design and analysis in Th. \ref{Th:Local_CPL} and Lm. \ref{Lm:LineDissipativityStep}, respectively.
The LMI problem \eqref{Eq:Th:CentralizedTopologyDesign0} is formulated to ensure the networked error system is \textbf{Y}-dissipative, thereby ensuring finite-gain $L_2$-stability with gain $\gamma$ from disturbances $w_c$ to performance outputs $z_c$. The objective function in \eqref{Eq:Th:CentralizedTopologyDesign0} consists of three terms: communication cost ($\sum_{i,j\in\N_N} c_{ij} \Vert Q_{ij} \Vert_1$), control cost $c_1 \tilde{\gamma}$, and numerical stability term ($\alpha\text{tr}(S)$). Minimizing this function while satisfying LMI constraints simultaneously optimizes the communication topology (by synthesizing $K = (\textbf{X}_p^{11})^{-1} Q$) and robust stability (by minimizing $\tilde{\gamma}$) while ensuring the given specification $\gamma^2<\bar{\gamma}$. The resulting controller and topology achieve voltage regulation and current sharing in the presence of ZIP loads and disturbances.    
\end{proof}

\begin{figure*}[!hb]
\vspace{-5mm}
\centering
\hrulefill
\begin{equation}\label{globalcontrollertheorem}
\scriptsize
	W \triangleq \bm{
		\textbf{X}_p^{11} & \0 & \0 & Q & \textbf{X}_p^{11}\Bar{C} &  \textbf{X}_p^{11}E_c \\
		\0 & \bar{\textbf{X}}_{\bar{p}}^{11} & \0 & \Bar{\textbf{X}}_{\Bar{p}}^{11}C & \0 & \bar{\textbf{X}}_{\bar{p}}^{11}\bar{E}_c \\
		\0 & \0 & \I & H_c & \bar{H}_c & \0 \\
		Q^\T & C^\T\Bar{\textbf{X}}_{\Bar{p}}^{11} & H_c^\T & -Q^\T\textbf{X}^{12}-\textbf{X}^{21}Q-\textbf{X}_p^{22} & -\textbf{X}^{21}\textbf{X}_{p}^{11}\bar{C}-C^\T\bar{\textbf{X}}_{\bar{p}}^{11}\bar{\textbf{X}}^{12} & -\textbf{X}^{21}\textbf{X}_p^{11}E_c \\
		\Bar{C}^\T\textbf{X}_p^{11} & \0 & \bar{H}_c^\T & -\Bar{C}^\T\textbf{X}_p^{11}\textbf{X}^{12}-\bar{\textbf{X}}^{21}\Bar{\textbf{X}}_{\Bar{p}}^{11}C & -\bar{\textbf{X}}_{\bar{p}}^{22} & -\bar{\textbf{X}}^{21}\Bar{\textbf{X}}_{\Bar{p}}^{11}\bar{E}_c \\ 
		E_c^\T\textbf{X}_p^{11} & \bar{E}_c^\T\Bar{\textbf{X}}_{\Bar{p}}^{11} & \0 & -E_c^\T\textbf{X}_p^{11}\textbf{X}^{12} & -\bar{E}_c^\T\Bar{\textbf{X}}_{\Bar{p}}^{11}\bar{\textbf{X}}^{12} & \tilde{\Gamma} \\
	}\normalsize 
 >0 
\end{equation}
\end{figure*}

\begin{figure*}[!hb]
\vspace{-5mm}
\centering
% \hrulefill
\begin{equation}\label{Eq:Neccessary_condition}
\scriptsize
	\bm{
		-p_i\nu_i & 0 & 0 & 0 & -p_i\nu_i\bar{C}_{il} & -p_i\nu_i\\
		0 & -\bar{p}_l\bar{\nu}_l & 0 & -\bar{p}_l\bar{\nu}_lC_{il} & 0 & -\bar{p}_l\bar{\nu}_l \\
		0 & 0 & 1 & 1 & 1 & 0 \\
		0 & -C_{il}\bar{\nu}_l\bar{p}_l & 1 & p_i\rho_i & -\frac{1}{2}p_i\bar{C}_{il}-\frac{1}{2}C_{il}\bar{p}_l & -\frac{1}{2}p_i \\
		-\bar{C}_{il}\nu_ip_i & 0 & 1 & -\frac{1}{2}\bar{C}_{il}p_i-\frac{1}{2}\bar{p}_lC_{il} & \bar{p}_l\bar{\rho}_l & -\frac{1}{2}p_l \\ 
		-\nu_ip_i & -\bar{p}_l\bar{\nu}_l & 0 & -\frac{1}{2}p_i & -\frac{1}{2}\bar{p}_l & \tilde{\gamma}_i \\
	}\normalsize 
 >0,\ \forall l\in \mathcal{E}_i, \forall i\in\N_N
\end{equation}
\end{figure*}

\begin{remark}
In the proposed co-design approach \eqref{Eq:Th:CentralizedTopologyDesign0}: (i) communication costs are minimized through sparse topology optimization, (ii) control performance is improved by minimizing the $L_2$-gain from disturbance inputs to performance outputs, and (iii) computational efficiency is not compromised through LMI formulation.    
\end{remark}

\subsection{Necessary Conditions on Subsystem Passivity Indices}

Based on the terms $\textbf{X}_p^{11}$, $\textbf{X}_p^{22}$, $\bar{\textbf{X}}_{\bar{p}}^{11}$, $\bar{\textbf{X}}_{\bar{p}}^{22}$, $\textbf{X}^{12}$, $\textbf{X}^{21}$, $\bar{\textbf{X}}^{12}$, and $\bar{\textbf{X}}^{21}$ appearing in \eqref{globalcontrollertheorem} included in the global co-design problem \eqref{Eq:Th:CentralizedTopologyDesign0}, it is clear that the feasibility and the effectiveness of the proposed global co-design technique (i.e., Th. \ref{Th:CentralizedTopologyDesign}) depends on the enforced passivity indices $\{(\nu_i,\rho_i):i\in\mathbb{N}_N\}$ \eqref{Eq:XEID_DG} and  $\{(\bar{\nu}_l,\bar{\rho}_l):l\in\mathbb{N}_L\}$ \eqref{Eq:XEID_Line} assumed for the DG error dynamics \eqref{Eq:DG_error_dynamic} and line error dynamics \eqref{Eq:Line_error_dynamic}, respectively.

However, using Th. \ref{Th:Local_CPL} for designing dissipativating local controllers in $\{u_{iL}:i\in\mathbb{N}_N\}$ \eqref{Controller}, we can obtain a specialized set of passivity indices for the DG error dynamics \eqref{Eq:DG_error_dynamic}. Similarly, using Lm. \ref{Lm:LineDissipativityStep} for dissipativity analyses, we can obtain a specialized set of passivity indices for the line error dynamics \eqref{Eq:Line_error_dynamic}. Hence, these local controller design and analysis processes have a great potential to impact the feasibility and effectiveness of the global co-design solution.

Therefore, when designing such local controllers (via Th. \ref{Th:Local_CPL}) and conducting such dissipativity analysis (via Lm. \ref{Lm:LineDissipativityStep}), one must also consider the specific conditions necessary for the feasibility and and implications on the effectiveness of the eventual global co-design solution. The following lemma, inspired by \cite[Lm. 1]{WelikalaJ22022}, identifies local necessary conditions based on the global LMI conditions \eqref{Eq:Th:CentralizedTopologyDesign0} in the global co-design problem in Th. \ref{Th:CentralizedTopologyDesign}.

\begin{lemma}\label{Lm:CodesignConditions}
For the LMI conditions \eqref{Eq:Th:CentralizedTopologyDesign0} in Th. \ref{Th:CentralizedTopologyDesign} to hold, it is necessary that the passivity indices $\{\nu_i,\rho_i:i\in\mathbb{N}_N\}$ \eqref{Eq:XEID_DG} and  $\{\bar{\nu}_l,\bar{\rho}_l:l\in\mathbb{N}_L\}$ \eqref{Eq:XEID_Line} respectively enforced for the DG \eqref{Eq:DG_error_dynamic} and line \eqref{Eq:Line_error_dynamic} error dynamics \eqref{Eq:Line_error_dynamic} are such that the LMI problem: 
\begin{equation}\label{Eq:Lm:CodesignConditions}
\begin{aligned}
\mbox{Find: }&\ \{(\nu_i,\rho_i,\tilde{\gamma}_i):i\in\N_N\},\{(\Bar{\nu}_l,\bar{\rho}_l):l\in\N_L\}\\
\mbox{Sub. to: }&\ 0 \leq \tilde{\gamma}_i \leq \bar{\gamma},\  \forall i\in\N_N,\ \eqref{Eq:Neccessary_condition},  
\end{aligned}
\end{equation} 
is feasible, where $p_i>0, \forall i\in\N_N$ and $\bar{p}_l>0, \forall l\in\N_L$ are some prespecified parameters. 
\end{lemma}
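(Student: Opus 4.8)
The plan is to obtain \eqref{Eq:Neccessary_condition} as a family of principal submatrices of the global LMI \eqref{globalcontrollertheorem}, using the elementary fact (a special case of Lm.~\ref{Lm:Schur_comp}) that every principal submatrix of a positive-definite matrix is positive definite. Suppose \eqref{Eq:Th:CentralizedTopologyDesign0} is feasible, so there exist $Q$, $\{p_i\}_{i\in\N_N}$, $\{\bar{p}_l\}_{l\in\N_L}$, $\tilde{\gamma}$ and $S\geq 0$ with $W+S>0$ and $0<\tilde{\gamma}<\bar{\gamma}$. I would first dispose of the numerical slack by reading the necessary condition for the nominal design problem ($\eta=0$, hence $W>0$); for $\eta>0$ the same argument produces \eqref{Eq:Neccessary_condition} up to addition of a positive-semidefinite matrix of trace at most $\eta$, which is the intended mild relaxation. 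Then, for each fixed DG index $i\in\N_N$ and incident line $l\in\E_i$, I would extract from $W$ the principal submatrix supported on: one scalar component of the DG-$i$ state in the first block-row/column --- crucially the voltage component $\tilde{V}_i$ (or, equivalently, the integrator component $\tilde{v}_i$), \emph{not} the current component --- the scalar line-$l$ state in the second, the matched performance-output component in the third, the ``dual'' copies of these DG-$i$ and line-$l$ components in the fourth and fifth, and the relevant disturbance components of $w_i$ and $\bar{w}_l$ in the sixth. This follows the template of \cite[Lm.~1]{WelikalaJ22022}.

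The structural point that makes this selection produce a clean, $Q$-free condition is the sparsity of $Q$, which mirrors that of $K$ in \eqref{k_ij}: each block $Q_{ij}$ has its single nonzero entry in the $(2,2)$ (current) position. Hence, evaluating the submatrix on the voltage/integrator channel of DG $i$ annihilates every occurrence of $Q_{ii}$ together with the terms $-Q^\T\textbf{X}^{12}$ and $-\textbf{X}^{21}Q$ in the first and fourth block-rows of $W$; in particular the $(1,4)$ slice is $0$ and the $(4,4)$ slice collapses to the corresponding entry of $-\textbf{X}_p^{22}$, namely $p_i\rho_i$. Substituting $\textbf{X}_p^{11}=\diag([-p_i\nu_i\I])$, $\textbf{X}_p^{22}=\diag([-p_i\rho_i\I])$, $\textbf{X}^{12}=\diag([-\tfrac{1}{2\nu_i}\I])$ (so $\textbf{X}^{21}\textbf{X}_p^{11}=\diag([\tfrac{p_i}{2}\I])$), the analogous line quantities, the identifications $\textbf{Y}^{22}=-\I$, $\textbf{Y}^{11}=\tilde{\gamma}\I$, $\textbf{Y}^{12}=\0$, $H_i=\bar{H}_l=\I$, and absorbing the fixed disturbance gains $E_i,\bar{E}_l$ into the appropriate scalar entries, the resulting $6\times 6$ principal submatrix reduces (after relabeling the retained scalar components) to exactly the matrix in \eqref{Eq:Neccessary_condition} with $\tilde{\gamma}_i\triangleq\tilde{\gamma}$. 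Since $0<\tilde{\gamma}<\bar{\gamma}$, the choice $\tilde{\gamma}_i=\tilde{\gamma}$ satisfies $0\leq\tilde{\gamma}_i\leq\bar{\gamma}$, so \eqref{Eq:Neccessary_condition} and $0\leq\tilde{\gamma}_i\leq\bar{\gamma}$ hold for all $i\in\N_N$, $l\in\E_i$ with $p_i,\bar{p}_l$ taken as the values realized by the feasible global solution; thus \eqref{Eq:Lm:CodesignConditions} is feasible, which proves the lemma.

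I expect the main obstacle to be the index bookkeeping in the second step: pinning down precisely which scalar rows/columns of the doubly-blocked matrix $W$ must be retained so that (i) the designed block $Q$ drops out (which forces the choice of the voltage/integrator channel for DG $i$), (ii) the off-diagonal couplings to the other DGs $j\neq i$ and lines $l'\neq l$ are simply deleted --- harmless, since only positive-definiteness of a principal submatrix is needed --- and (iii) the DG-side and line-side disturbance channels collapse to the single sixth row/column appearing in \eqref{Eq:Neccessary_condition}. A secondary point is the mild mismatch that $p_i,\bar{p}_l$ are optimization variables in \eqref{Eq:Th:CentralizedTopologyDesign0} but prespecified in \eqref{Eq:Lm:CodesignConditions}; this is reconciled by reading the necessary condition for those particular values carried by a feasible global solution (equivalently, by fixing $p_i,\bar{p}_l$ throughout the design flow), and by noting that all manipulations above are monotone in these positive scalars.
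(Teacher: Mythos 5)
Your proposal is correct and follows essentially the same route as the paper: the paper's proof likewise obtains \eqref{Eq:Neccessary_condition} by permuting $W$ from \eqref{globalcontrollertheorem} into a block-elementwise form and reading off the relevant $6\times 6$ principal submatrices (one per pair $l\in\E_i$, $i\in\N_N$), so that positive definiteness of $W$ forces positive definiteness of each such submatrix; your handling of the slack $S$ and of the prespecified $p_i,\bar{p}_l$ is in fact more explicit than the paper's. One parenthetical slip: the integrator component $\tilde{v}_i$ is \emph{not} equivalent to the voltage component here, since it would zero out the $\bar{C}_{il}$ coupling entries and therefore fail to reproduce \eqref{Eq:Neccessary_condition} exactly --- only the voltage channel (which you ultimately commit to) both annihilates $Q$ and retains the physical coupling terms.
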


\begin{proof}
For the feasibility of the global co-design problem  \eqref{Eq:Th:CentralizedTopologyDesign0}, $W$ given in \eqref{globalcontrollertheorem} must satisfy $W > 0$. Let $W = [W_{rs}]_{r,s\in\N_6}$ where each block $W_{rs}$ can be a block matrix of block dimensions $(N\times N)$, $(N\times L)$ or $(L \times L)$ depending on its location in $W$ (e.g., see blocks $W_{11}, W_{15}$ and $W_{22}$, respectively). Without loss of generality, let us denote $W_{rs} \triangleq [W_{rs}^{jm}]_{j\in\bar{J}(r),m\in\bar{M}(s)}$ where $\bar{J}(r),\bar{M}(s) \in \{N,L\}$. Inspired by \cite[Lm. 1]{WelikalaJ22022}, we can obtain an equivalent condition for $W>0$ as $\bar{W} \triangleq \text{BEW}(W) > 0$ where $\text{BEW}(W)$ is the ``block-elementwise'' form of $W$, created by combining appropriate inner-block elements of each of the blocks $W_{rs}$ to create a $6\times 6$ block-block matrix. 
Simply, $\bar{W} = [[W_{rs}^{j,m}]_{r,s \in \N_6}]_{j \in \N_{\bar{J}},m\in\N_{\bar{M}}}$. Considering only the diagonal blocks in  $\bar{W}$ and the implication $\bar{W} > 0 \implies [[W_{rs}^{j,m}]_{r,s \in \N_6}]_{j \in \N_{\bar{J}(r)},m\in\N_{\bar{M}(r)}} > 0 \iff$\eqref{Eq:Lm:CodesignConditions} (also recall the notations $C_{il} \triangleq -C_{ti}\bar{C}_{il}^\T$, $\Bar{C}_{il} \triangleq -C_{ti}^{-1}$). Therefore,  \eqref{globalcontrollertheorem} $\implies$ \eqref{Eq:Lm:CodesignConditions}, in other words, \eqref{Eq:Lm:CodesignConditions} is a set of necessary conditions for the feasibility of the global co-design constraint \eqref{globalcontrollertheorem}. 

Besides merely supporting the feasibility of the global co-design \eqref{globalcontrollertheorem}, the LMI problem \eqref{Eq:Lm:CodesignConditions}, through its inclusion of the constraint $0 \leq \tilde{\gamma}_i \leq \bar{\gamma}$ (which can also be embedded in the objective function), inspires to improve the effectiveness (performance) of the global co-design \eqref{globalcontrollertheorem}.  
\end{proof}

In conclusion, here we used the LMI problem \eqref{Eq:Th:CentralizedTopologyDesign0} to derive a set of necessary LMI conditions consolidated as a single LMI problem \eqref{Eq:Lm:CodesignConditions}. Ensuring the feasibility of this consolidated LMI problem \eqref{Eq:Lm:CodesignConditions} increases the feasibility and effectiveness of the LMI problem \eqref{Eq:Th:CentralizedTopologyDesign0} solution, i.e., of the global co-design. Finally, we also point out that the necessary conditions given in the LMI problem \eqref{Eq:Lm:CodesignConditions} are much stronger and complete than those given in our prior work \cite{ACCNajafi}.

\subsection{Local Controller Synthesis}\label{Sec:Local_Synth}

We conclude our proposed solution by providing the following theorem that integrates all the necessary LMI conditions for the global co-design of the DC MG (i.e., Th. \ref{Th:CentralizedTopologyDesign}), identified in Lm. \ref{Lm:CodesignConditions}, and use them simultaneously to design the local controllers for DG error dynamics and analyze local line error dynamics. In all, the following result removes the necessity of implementing/evaluating the LMI problems in Th. \ref{Th:Local_CPL}, Lm. \ref{Lm:LineDissipativityStep} and Lm. \ref{Lm:CodesignConditions} separately, and instead provides a unified LMI problem to lay the foundation required to execute the global control and topology co-design of the DC MG using the established Th. \ref{Th:CentralizedTopologyDesign}.

\begin{theorem}\label{Th:LocalControllerDesign}
Under the predefined DG parameters \eqref{Eq:DGCompact}, line parameters \eqref{Eq:LineCompact} and design parameters $\{p_i: i\in\N_N\}$, $\{\bar{p}_l:l\in\N_L\}$, the necessary conditions in \eqref{Eq:Th:CentralizedTopologyDesign0} hold if the local controller gains $\{K_{i0}, i\in\N_N\}$ (\ref{Controller}) and 
DG and line passivity indices 
$\{\nu_i,\tilde{\rho}_i:i\in\mathbb{N}_N\}$ \eqref{Eq:XEID_DG} and $\{\bar{\nu}_l,\bar{\rho}_l:l\in\mathbb{N}_L\}$ \eqref{Eq:XEID_Line} are determined by solving the LMI problem:
\begin{equation}
\begin{aligned}\nonumber
&\min_{\tilde{\lambda}_i} \sum_{i=1}^{N} \alpha_{\tilde{\lambda}}\tilde{\lambda}_i, \\
&\mbox{Find: }\ 
\{(\tilde{K}_{i0}, \tilde{P}_i, \tilde{R}_i, \tilde{\lambda}_i, \nu_i, \tilde{\rho}_i, \tilde{\gamma}_i): i\in\N_N\},\\ 
&\quad \quad \quad \{(\bar{P}_l, \bar{\nu}_l,\bar{\rho}_l): l\in\mathbb{N}_L\},\ \{(\xi_{il},s_1,s_2): l \in \mathcal{E}_i, i\in\N_N\}\\
&\mbox{Sub. to: }\ 
\tilde{P}_i > 0,\ \tilde{R}_i > 0,\ \tilde{\lambda}_i > 1,\ \bar{P}_l > 0,\ \eqref{Eq:Transformed_Necessary_condition},\\   
&\begin{bmatrix}
\tilde{\rho}_i\I + \tilde{R}_i & \tilde{R}_i & \0 & \0 \\
\tilde{R}_i & \tilde{R}_i & \tilde{P}_i & \0 \\
\0 & \tilde{P}_i & -\mathcal{H}(A_i\tilde{P}_i + B_i\tilde{K}_{i0}) & -\I + \frac{1}{2}\tilde{P}_i \\ 
\0 & \0 & -\I + \frac{1}{2}\tilde{P}_i & -\nu_i\I    
\end{bmatrix} > 0,\\
& \bm{\I & \0 & \0 \\ \0 & \tilde{P}_i T_i & \0} 
+ \bm{\I &\ 0\\ \0 & T_i^\T \tilde{P}_i \\ \0 & \0} 
-\bm{\tilde{R}_i & \0 \\ \0 & \I } \\
&- \bm{\Theta_{11} & \Theta_{12}\tilde{P}_i + \tilde{\lambda}_i\I \\
\tilde{P}_i\Theta_{21} + \tilde{\lambda}_i\I & \0} 
\geq 0, \ \forall i\in\N_N,\\
&\ \bm{
\frac{2\bar{P}_lR_l}{L_l}-\bar{\rho}_l & -\frac{\bar{P}_l}{L_l}+\frac{1}{2}\\
    -\frac{\bar{P}_l}{L_l}+\frac{1}{2} & -\bar{\nu}_l
} 
\geq0,\ \forall l \in \N_L,\   \\
&\begin{bmatrix} 
1 & \bar{\nu}_l & \tilde{\rho}_i \\
\bar{\nu}_l & s_1 & \xi_{il} \\
\tilde{\rho}_i & \xi_{il} & s_2
\end{bmatrix} \geq 0,\ \forall l \in \mathcal{E}_i, \forall i \in \mathbb{N}_N,
\end{aligned}
\end{equation}
where $K_{i0} \triangleq \tilde{K}_{i0}\tilde{P}_i^{-1}$, $\rho_i = 1/\tilde{\rho}_i$, $\xi_{il}$ is an auxiliary variable, $\alpha_\lambda$ is a positive weighting coefficient and $s_1$ and $s_2$ are semidefinite optimization variables that ensure the positive semidefiniteness of the Schur complement matrix while enforcing the relationship between $\xi_{il}$ and bilinear terms $\bar{\nu}_l\tilde{\rho}_i$.
\end{theorem}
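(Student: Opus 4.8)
The plan is to show that the single semidefinite feasibility program stated above simultaneously discharges the three building blocks it is meant to replace --- the local DG controller design of Th.~\ref{Th:Local_CPL}, the line dissipativity certification of Lm.~\ref{Lm:LineDissipativityStep}, and the necessary-condition LMI of Lm.~\ref{Lm:CodesignConditions} --- so that, once it is solved, every hypothesis invoked by the global co-design Th.~\ref{Th:CentralizedTopologyDesign} is in force. I would begin by inspecting the DG-indexed constraints (the $4\times4$ block LMI, the padded sector-bound LMI, and $\tilde{P}_i>0,\ \tilde{R}_i>0,\ \tilde{\lambda}_i>1$): up to the already-used substitutions $\tilde{P}_i\triangleq P_i^{-1}$, $\tilde{R}_i\triangleq R_i^{-1}$, $\tilde{\lambda}_i\triangleq1/\lambda_i$, $\tilde{K}_{i0}\triangleq K_{i0}\tilde{P}_i$ and the relabelling $\tilde{\rho}_i\triangleq1/\rho_i$, these are verbatim the constraints of Th.~\ref{Th:Local_CPL}. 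Hence, by Th.~\ref{Th:Local_CPL} (which itself rests on Lm.~\ref{Lm:sector_bounded}--Lm.~\ref{Lm:S-ProcedureModified}), any feasible point yields local gains $K_{i0}=\tilde{K}_{i0}\tilde{P}_i^{-1}$ rendering each $\tilde{\Sigma}_i^{DG}$ IF-OFP$(\nu_i,\rho_i)$ with $\rho_i=1/\tilde{\rho}_i$, exactly the supply rate \eqref{Eq:XEID_DG}. Analogously, the $2\times2$ line constraint is precisely the LMI \eqref{Eq:Lm:LineDissipativityStep1} of Lm.~\ref{Lm:LineDissipativityStep} with $\bar{P}_l>0$ left free, so each $\tilde{\Sigma}_l^{Line}$ is IF-OFP$(\bar{\nu}_l,\bar{\rho}_l)$ as in \eqref{Eq:XEID_Line}.

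The remaining obligation is that the constraint \eqref{Eq:Transformed_Necessary_condition}, together with the auxiliary $3\times3$ LMI in $(\bar{\nu}_l,\tilde{\rho}_i,\xi_{il},s_1,s_2)$, imply the necessary-condition LMI \eqref{Eq:Neccessary_condition} of Lm.~\ref{Lm:CodesignConditions} (the box constraint $0\le\tilde{\gamma}_i\le\bar{\gamma}$ is carried over unchanged). The obstacle is that \eqref{Eq:Neccessary_condition} is written in the native index $\rho_i$, whereas the synthesis returns only $\tilde{\rho}_i=1/\rho_i$; substituting $\rho_i=1/\tilde{\rho}_i$ turns the $(4,4)$ block of \eqref{Eq:Neccessary_condition} into a reciprocal, and the congruence that clears it (scaling the fourth block row and column by $\tilde{\rho}_i$, legitimate since $\tilde{\rho}_i>0$) spawns bilinear cross-terms of the form $\bar{\nu}_l\tilde{\rho}_i$ in the blocks indexed by $l\in\mathcal{E}_i$. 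The plan is to replace each such product by the scalar surrogate $\xi_{il}$ --- which is the content of \eqref{Eq:Transformed_Necessary_condition} --- and to certify the surrogate through the $3\times3$ LMI: a Schur complement on its unit $(1,1)$ entry gives $\bm{s_1-\bar\nu_l^2 & \xi_{il}-\bar\nu_l\tilde\rho_i \\ \star & s_2-\tilde\rho_i^2}\ge 0$, which pins $\xi_{il}$ to $\bar{\nu}_l\tilde{\rho}_i$ up to the slack $(s_1,s_2)$. Undoing the congruence, re-substituting $\rho_i=1/\tilde{\rho}_i$ and $K_{i0}=\tilde{K}_{i0}\tilde{P}_i^{-1}$, and invoking the surrogate relation then reproduces \eqref{Eq:Neccessary_condition}; by Lm.~\ref{Lm:CodesignConditions} this is exactly the set of conditions necessary for feasibility of the global co-design problem \eqref{Eq:Th:CentralizedTopologyDesign0}, which closes the argument.

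Finally, I would record that the cost $\min\sum_i\alpha_{\tilde{\lambda}}\tilde{\lambda}_i$ drives each $\tilde{\lambda}_i$ toward its lower bound $1$, i.e.\ $\lambda_i\to1$, which tightens the S-procedure relaxation \eqref{Eq:Lm:S-Procedure2} of Lm.~\ref{Lm:S-ProcedureModified} (whose surrogate bound carries the factor $1/\lambda_i\ge1$, with equality at $\lambda_i=1$), and hence shrinks the conservatism injected when absorbing the CPL nonlinearity; this is immaterial to correctness but relevant to the feasibility margin. The genuinely hard step is the re-convexification of \eqref{Eq:Neccessary_condition}: keeping the reciprocal substitution $\rho_i\mapsto1/\tilde{\rho}_i$, the congruence transformation, and the auxiliary-variable linearization mutually consistent across the physical coupling $l\in\mathcal{E}_i$, and checking that the surrogate $\xi_{il}$ delivers exactly \eqref{Eq:Neccessary_condition} and not a strictly weaker inequality.
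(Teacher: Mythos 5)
Your proposal follows essentially the same route as the paper's proof: recognizing the DG-indexed and line-indexed constraints as verbatim instances of Th.~\ref{Th:Local_CPL} and Lm.~\ref{Lm:LineDissipativityStep} under the stated changes of variables, and then recovering the necessary condition \eqref{Eq:Neccessary_condition} of Lm.~\ref{Lm:CodesignConditions} via the congruence that scales the fourth block row/column by $1/\rho_i$ followed by the auxiliary-variable ($\xi_{il}$, $s_1$, $s_2$) linearization of the resulting bilinear terms $\bar{\nu}_l\tilde{\rho}_i$ certified through the $3\times3$ Schur-complement LMI. If anything, you are more candid than the paper in flagging that the $3\times3$ constraint only pins $\xi_{il}$ to $\bar{\nu}_l\tilde{\rho}_i$ up to the slack in $(s_1,s_2)$, which is exactly the point the paper's own proof also leaves implicit.
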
 

\begin{figure*}[!hb]
\vspace{-5mm}
\centering
\hrulefill
\begin{equation}\label{Eq:Transformed_Necessary_condition}
\scriptsize
	\bm{
		-p_i\nu_i & 0 & 0 & 0 & -p_i\nu_i\bar{C}_{il} & -p_i\nu_i\\
		0 & -\bar{p}_l\bar{\nu}_l & 0 & -\bar{p}_l\xi_{il}C_{il} & 0 & -\bar{p}_l\bar{\nu}_l \\
		0 & 0 & 1 & \tilde{\rho}_i & 1 & 0 \\
		0 & -C_{il}\xi_{il}\bar{p}_l & \tilde{\rho}_i & p_i\tilde{\rho}_i & -\frac{1}{2}p_i\bar{C}_{il}\tilde{\rho}_i-\frac{1}{2}C_{il}\bar{p}_l\tilde{\rho}_i & -\frac{1}{2}p_i\tilde{\rho}_i \\
		-\bar{C}_{il}\nu_ip_i & 0 & 1 & -\frac{1}{2}\bar{C}_{il}p_i\tilde{\rho}_i-\frac{1}{2}\bar{p}_lC_{il}\tilde{\rho}_i & \bar{p}_l\bar{\rho}_l & -\frac{1}{2}p_l \\ 
		-\nu_ip_i & -\bar{p}_l\bar{\nu}_l & 0 & -\frac{1}{2}p_i\tilde{\rho}_i & -\frac{1}{2}\bar{p}_l & \tilde{\gamma}_i \\
	}\normalsize 
 >0,\ \forall l\in \mathcal{E}_i, \forall i\in\N_N
\end{equation}
\end{figure*}

\begin{proof}
The proof proceeds as follows: (i) We start by considering the dynamic models of $\tilde{\Sigma}_i^{DG}$ and $\tilde{\Sigma}_i^{Line}$ as described in \eqref{Eq:DG_error_dynamic} and \eqref{Eq:Line_error_dynamic}, respectively. We then apply the LMI-based controller synthesis and analysis techniques from Th. \ref{Th:Local_CPL} and Lm. \ref{Lm:LineDissipativityStep} to enforce and identify the subsystem passivity indices assumed in \eqref{Eq:XEID_DG} and \eqref{Eq:XEID_Line}, respectively. (ii) Next, we apply the LMI formulations from Th. \ref{Th:Local_CPL} to obtain the local controller gains $K_{i0}$ and passivity indices $(\nu_i,\rho_i)$ for $\tilde{\Sigma}_i^{DG}$, which leads to the constraints in the first part of the LMI problem in Th. \ref{Th:LocalControllerDesign}. Similarly, we apply Lm. \ref{Lm:LineDissipativityStep} to identify the passivity indices $(\bar{\nu}_l,\bar{\rho}_l)$ for $\tilde{\Sigma}_i^{Line}$, which leads to the constraints in the second part of the LMI problem. (iii) To handle the transformation from $\rho_i$ to $\tilde{\rho}_i$ where $\tilde{\rho}_i = \rho_i^{-1}$, we apply a congruence transformation to equation \eqref{Eq:Neccessary_condition}. Using the transformation matrix $T = \diag(1, 1, 1, 1/\rho_i, 1, 1)$, the   term $p_i\rho_i$ at position (4,4) of \eqref{Eq:Neccessary_condition} becomes $p_i\tilde{\rho}_i$. This transformation also affects other elements in row 4 and column 4, resulting in the transformed equation \eqref{Eq:Transformed_Necessary_condition}. The bilinear terms involving $\bar{\nu}_l\tilde{\rho}_i$ in positions (2,4), (4,2), and other locations in row 4 and column 4 require special handling. For each bilinear term $\bar{\nu}_l\tilde{\rho}_i$, we introduce an auxiliary variable $\xi_{il}$ and add the Schur complement constraint:
\begin{equation*}
\begin{bmatrix} 
1 & \bar{\nu}_l & \tilde{\rho}_i \\
\bar{\nu}_l & s_1 & \xi_{il} \\
\tilde{\rho}_i & \xi_{il} & s_2
\end{bmatrix} \geq 0,
\end{equation*}

where $s_1$ and $s_2$ are semidefinite variables determined during optimization. This constraint enforces this $\xi_{il} \geq \bar{\nu}_l\tilde{\rho}_i$, allowing us to replace each bilinear term with $\xi_{il}$ in the transformed equation. (iv) Finally, we impose the necessary conditions on subsystem passivity indices identified in Lm. \ref{Lm:CodesignConditions} to support the feasibility and effectiveness of the global control and communication topology co-design approach presented in Th. \ref{Th:CentralizedTopologyDesign}. The constraint \eqref{Eq:Transformed_Necessary_condition} is derived from the necessary condition established in Lm. \ref{Lm:CodesignConditions}, ensuring that the local controllers and their passivity indices are compatible with the global co-design problem. By formulating this unified LMI problem, we obtain a one-shot approach to simultaneously design local controllers and determine passivity indices that guarantee the feasibility of the global co-design problem.
\end{proof}

\section{Simulation Results}\label{Simulation}

We conducted simulations using an islanded DC MG configuration to evaluate the effectiveness of the proposed dissipativity-based control framework for DC MGs with ZIP loads. The test system comprises 4 DGs with ZIP loads interconnected through 4 transmission lines, implemented in a MATLAB/Simulink environment. The voltage source converters have a nominal voltage of $120$ V, while the desired reference voltage is set at $V_r = 48$ V. Detailed parameters of the simulated DC MG are available in \cite[Tab. 1]{Najafirad2024Ax1}. Figure \ref{fig.physicalcommunicationtopology} illustrates the physical and communication topology of the DC MG. For the controller and topology co-design procedure, we selected parameters $p_i = 0.1, \forall i \in \N_N$ and $p_l = 0.01, \forall l \in \N_L$.

\begin{figure}
    \centering
\includegraphics[width=0.99\columnwidth]{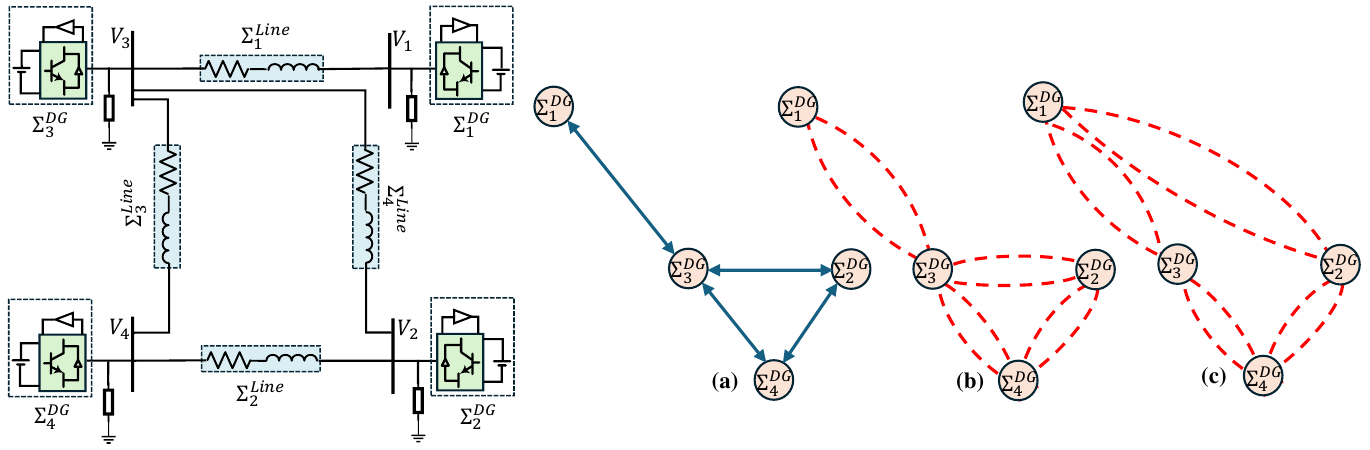}
    \caption{The topology of the islanded DC MG with 4 DGs and 4 lines: (a) Physical topology, (b) Co-designed communication topology under hard graph constraints, and (c) Co-designed communication topology under soft graph constraints.}
    \label{fig.physicalcommunicationtopology}
\end{figure}

% This paragraph discuss about the voltage regulation and current sharing
To evaluate the effectiveness of our proposed control methodology, we examined sequential load variations at each DG, where introduced constant current loads $\bar{I}_L$ at $t=1$ s, disconnected constant impedance loads $Y_L$ at $t=3$ s with subsequent reconnection at $t=5$ s, and added constant power loads $P_L$ at $t=8$ s. Fig. \ref{fig.voltagecurrent}(a) demonstrates that the output voltages successfully track the reference voltage $V_r$ despite these sequential load disturbances. The system's response to the CPL becomes evident at $t=8$ s, wherein the proposed local controller effectively manages the transient oscillations and restores voltage to the reference value. Furthermore, as illustrated in Fig. \ref{fig.voltagecurrent}(b), the implemented distributed global control architecture maintains proper current sharing among DGs even under challenging ZIP load conditions.

\begin{figure}
    \centering
    \includegraphics[width=0.99\columnwidth]{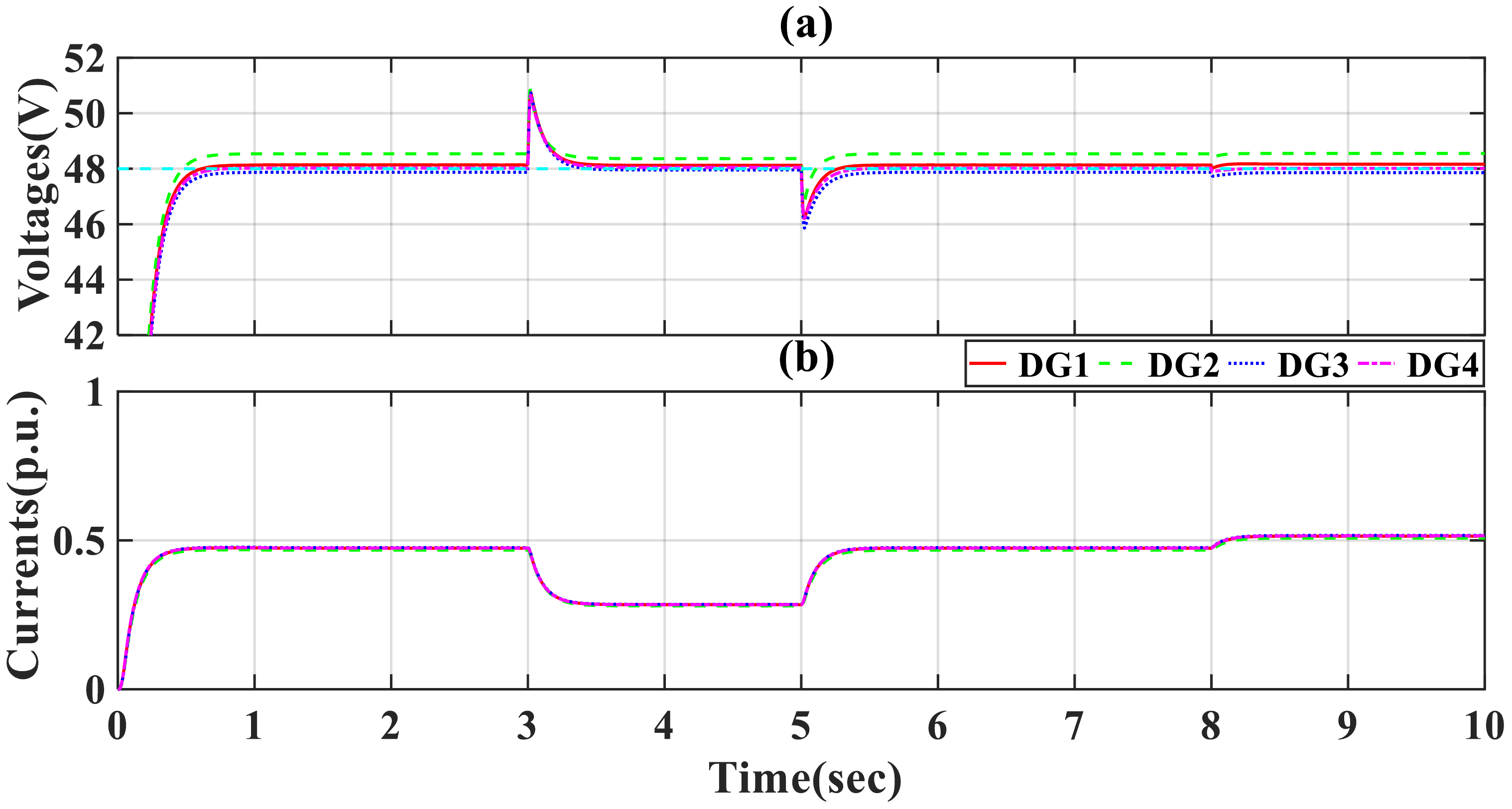}
    \caption{The DG outputs: (a) voltages and (b) per-unit currents, observed under dissipativity-based controller for the DC MG with ZIP loads shown in Fig. \ref{fig.physicalcommunicationtopology}(a).}
    \label{fig.voltagecurrent}
\end{figure}

% This paragraph discuss about the co-design
This simulation study examines two distinct co-design methodologies, based on the usage of: (i) a hard graph constraint that requires strict alignment between the communication topology $\mathcal{G}^c$ and the physical topology $\mathcal{G}^p$, and (ii) a soft graph constraint that only introduces penalties for communication links that deviate from the physical network architecture $\mathcal{G}^p$. Figure \ref{fig.physicalcommunicationtopology}(a) presents the physical topology of the DC MG test system. Fig. \ref{fig.physicalcommunicationtopology}(b) illustrates how the hard graph constraint adheres precisely to the physical topology configuration. However, the proposed dissipativity-based controller, under the soft graph constraint, allows the resulting communication topology to deviate from the physical topology. As depicted in Fig. \ref{fig.physicalcommunicationtopology}(c), the soft graph constraint enables optimization of the communication topology to enhance the closed-loop robustness via improving information sharing/distribution among the DGs in the DC MG. This approach proves particularly beneficial for $\Sigma_1^{DG}$, which exhibits significant unique separation and thus requires additional information exchange with other DGs to facilitate proper current sharing coordination.

% This paragraph compares with the conventional droop based control
Finally, we compare our proposed dissipativity-based control approach with the conventional droop-based control method described in \cite{guo2018distributed}, focusing on voltage regulation performance. For fair comparison, both simulations used identical DG and line parameters, with optimally tuned droop control gains. As shown in Fig. \ref{fig.droop_control}, the dissipativity-based controller maintains accurate voltage regulation throughout the simulation. The droop controller exhibits larger overshoots during load changes, demonstrating a less effective response to load variations. The droop control also shows an initial voltage drop due to its inherent characteristics. This requires an additional control layer for compensation. Furthermore, when CPL introduces to the DC MG at $t=8$ s, the droop-based controller produces significant oscillations that compromise system performance, while the dissipativity-based controller handles the CPL effectively.

\begin{figure}
    \centering
\includegraphics[width=0.9\columnwidth]{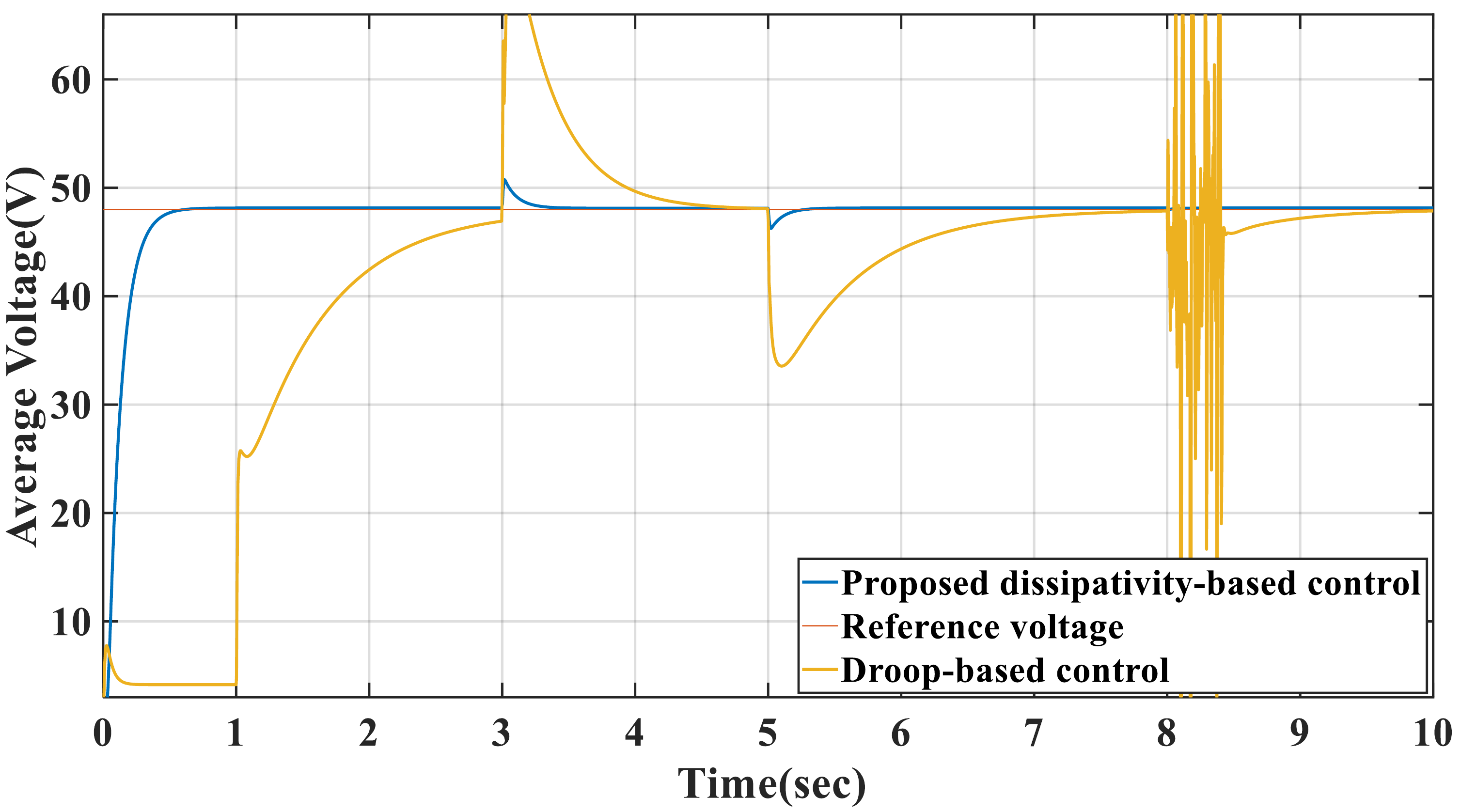}
    \caption{Comparison of average voltage regulation between the proposed dissipativity-based controller and droop controllers in the presence of ZIP laod for DC MG.}
    \label{fig.droop_control}
\end{figure}

\section{Conclusion}\label{Conclusion}
This paper presents a dissipativity-based distributed control and topology co-design approach for DC microgrids that addresses voltage regulation, current sharing, and generic ZIP loads. By leveraging dissipativity and sector-boundedness concepts, we develop a unified framework to co-design voltage reference levels, local steady state controllers, local feedback controllers, global distributed controllers, along with a communication topology so as to ensure dissipativity of the closed-loop DC MG, from generic disturbance inputs to voltage regulation and current sharing performance outputs. Unlike conventional droop-based methods, our approach eliminates the need for precise droop coefficient tuning, enhancing voltage regulation accuracy while maintaining proportional current sharing. Moreover, the proposed approach in this paper is LMI based, and thus can be conveniently implemented and efficiently and accurately evaluated using existing standard convex optimization tools. Simulation results have shown the effectiveness and the superior performance compared to conventional droop control methods, particularly when handling CPLs. Future work will focus on developing plug-and-play capabilities and extending the approach to more complex microgrid dynamics.

\bibliographystyle{IEEEtran}
\bibliography{References}

\end{document}